\newcounter{mnotecount}[section]
\numberwithin{equation}{section}
\def\nablaslash{\mbox{$\nabla \mkern -13mu /$ \!}}
\newcommand{\half}{\tfrac{1}{2}}         %
\newcommand{\veps}{\varepsilon}
\newcommand{\inttau}{\int_{\mathcal{D}(0,\tau)}}
\newcommand{\R}{r^2+a^2}
\newcommand{\PR}{r^3-3Mr^2+a^2r+a^2M}
\newcommand{\KDelta}{\Delta}
\newcommand{\Ls}{\mathbf{L}_s}
\newcommand{\curlV}{\mathcal{V}}
\newcommand{\curlY}{\mathcal{Y}}
\newcommand{\intRinfty}{\int_{\mathcal{D}(0,\tau)\cap [R,\infty)}}
\newcommand{\intcut}{\int_{\mathcal{D}(0,\tau)\cap [R-1,R)}}
\theoremstyle{plain}
\newtheorem{thm}{Theorem}
\newtheorem{lemma}[thm]{Lemma}
\newtheorem{definition}[thm]{Definition}
\newtheorem{prop}[thm]{Proposition}
\newtheorem{remark}[thm]{Remark}
\title{Uniform Energy Bound and Morawetz Estimate for Extreme Components of Spin Fields in the Exterior of a Slowly Rotating Kerr Black Hole I: Maxwell Field}
\author[S. Ma]{Siyuan Ma}
\email{siyuan.ma@sorbonne-universite.fr}
\address{
Laboratoire Jacques-Louis Lions,
Campus Jussieu,
Sorbonne Université,
4 place Jussieu,
 75005 Paris, France}
\begin{document}

\allowdisplaybreaks

\begin{abstract}
This first part of the series treats the Maxwell equations in the exterior of a slowly rotating Kerr black hole. By performing a first-order differential operator on each extreme Newman--Penrose (N--P) scalar in a Kinnersley tetrad, the resulting equation and the Teukolsky master equation for the extreme N--P component
are both in the form of an inhomogeneous \textquotedblleft{spin-weighted Fackerell--Ipser equation\textquotedblright} (SWFIE) and constitute a weakly coupled system.  We first prove energy estimate and  integrated local energy decay (Morawetz) estimate for this type of inhomogeneous SWFIE following the method in \cite{dafermos2010decay}, and then utilize these estimates to achieve both a uniform bound of a positive definite energy and a Morawetz estimate
for the coupled system of each extreme N--P component. The same type of estimates for the regular extreme N--P components defined in the regular Hawking--Hartle tetrad is also proved. The hierarchy here is generalized in our second part \cite{Ma17spin2Kerr} of this series to treat the extreme components of linearized gravity.
\end{abstract}

\maketitle

\section{Introduction}
The Kerr family of spacetimes has metrics solving the vacuum Einstein equations (VEE)
\begin{align}
\mathrm{R}_{\mu\nu}=0,
\label{eq:EinsteinVacuumEq}
\end{align}
with $\mathrm{R}_{\mu\nu}$ being the Ricci curvature tensor of the spacetime. An important step toward establishing
the validity of weak cosmic censorship
conjecture is to prove the Kerr black hole stability conjecture,
i.e., to show that the Kerr family of spacetimes is stable as solutions to VEE against small perturbations of initial data. Before approaching this full nonlinear problem, the scalar wave equation, the Maxwell equations and then some proper linearization of VEE are a sequence of models with increasing accuracy for the nonlinear dynamics.

In this paper, we consider the Maxwell equations for a real two-form $\mathbf{F}_{\alpha\beta}$:
\begin{align}\label{eq:MaxwellEqs}
\nabla^{\alpha}\mathbf{F}_{\alpha\beta}&=0 & \nabla_{[\gamma}\mathbf{F}_{\alpha\beta]}&=0
\end{align}
on a slowly rotating Kerr background,
and prove both a uniform bound of a positive definite energy and a Morawetz estimate
for the extreme Newman--Penrose (N--P) scalars.

\subsection{Kerr background}
 The subextremal Kerr family of spacetimes $(\mathcal{M},g_{M,a})$ ($|a|< M$), in Boyer--Lindquist (B--L) coordinates $(t,r,\theta,\phi)$ \cite{boyer:lindquist:1967}, has the metric
\begin{align}\label{eq:KerrMetricBoyerLindquistCoord}
g_{M,a}= & -\left(1-\tfrac{2Mr}{\Sigma} \right) dt^2 -\tfrac{2Mar \sin^2\theta}{\Sigma}(dt d\phi + d\phi dt) \nonumber\\
 & + \tfrac{\Sigma}{\Delta} dr^2 + \Sigma d\theta^2 +\tfrac{\sin^2\theta}{\Sigma} \left[(r^2+a^2)^2 -a^2\Delta \sin^2\theta\right]d\phi^2,
\end{align}
where
\begin{align}\label{eq:kerrfunctions}
\Delta(r)&= r^2 -2Mr +a^2 \   \   \   \   \   \text{    and    }  & \Sigma(r,\theta) = r^2+a^2 \cos^2\theta .
\end{align}
A Kerr spacetime is parameterized by its mass $M$ and angular momentum per mass $a$ and describes a rotating, stationary (with $\partial_t$ Killing), axisymmetric (with $\partial_{\phi}$ Killing), asymptotically flat vacuum black hole. Setting $a=0$ recovers the spherically symmetric Schwarzschild metric.
The function $\Delta(r)$ has two zeros
\begin{align}
r_+&=M+\sqrt{M^2-a^2} \   \   \   \   \   \text{    and    }  &
r_-=M-\sqrt{M^2-a^2},
\end{align}
which correspond to the locations of event horizon $\mathcal{H}$ and Cauchy horizon, respectively. We will constrain our considerations in the closure of the exterior region, or in another way, in the domain of outer communication (DOC)
\begin{equation}\label{def:DOC}
\mathcal{D}=\overline{\{(t,r,\theta,\phi)\in \mathbb{R}\times (r_+,\infty)\times \mathbb{S}^2\}}.
\end{equation}
As mentioned in Section \ref{sect:MainTheorems}, we will focus only on the future development by symmetry; hence only the future part $\mathcal{H}^+$ of the event horizon will be of interest for us. In what follows, whenever we say \textquotedblleft{a slowly rotating Kerr spacetime,\textquotedblright} it should always be understood as the DOC of a Kerr spacetime $(\mathcal{M},g_{M,a})$ with $|a|/M\ll1$ sufficiently small.

It is useful to introduce the tortoise coordinate system $(t,r^*,\theta,\phi)$\footnote{This is called as Regge--Wheeler \cite{ReggeWheeler1957} coordinate system when on Schwarzschild.}, with $r^*$ defined by:
\begin{align}
\tfrac{dr^*}{dr}=\tfrac{r^2+a^2}{\Delta},\   \   \   \   \   \    r^*(3M)=0.
\end{align}

The B--L coordinate system fails to extend across the future event horizon $\mathcal{H}^+$ due to the singularity in the metric coefficients in \eqref{eq:KerrMetricBoyerLindquistCoord}. Instead,
we define an ingoing Kerr coordinate system $(v,r,\theta,\tilde{\phi})$ \footnote{This is also called as \textquotedblleft{ingoing Eddington--Finkelstein coordinate system.\textquotedblright}  } which is regular on $\mathcal{H}^+$:
\begin{equation}\label{def:IngoingEddiFinkerCoord}
\left\{
  \begin{array}{ll}
    dv=dt +dr^*, \\
    d\tilde{\phi}=d\phi +a(r^2 +a^2)^{-1}dr^*,\\
    r=r,\\
    \theta=\theta.\\
  \end{array}
\right.
\end{equation}

We finally define a global Kerr coordinate system $(t^*,r, \theta,\phi^*)$, via gluing the coordinate system $(\vartheta=v-r, r, \theta,\tilde{\phi})$ near horizon with the B--L coordinate system $(t,r,\theta,\phi)$ away from horizon smoothly, by
\begin{equation}\label{def:globalkerrcoord}
\left\{
  \begin{array}{ll}
    t^*=t+\chi_{1}(r)\left(r^*(r)-r-r^*(r_0)+r_0\right),   \\
    \phi^*=\phi+\chi_{1}(r)\acute{\phi}(r)\ \ \text{mod}\ 2\pi, \ \ d\acute{\phi}/dr=a/\Delta,\\
    r=r,  \\
\theta=\theta.
  \end{array}
\right.
\end{equation}
Here, the smooth cutoff function $\chi_1(r)$,  which equals to $1$ in $[r_+,M+r_0/2]$ and identically vanishes for $r\geq r_0$ with $r_0(M)$ fixed in the red-shift estimate Proposition \ref{prop:RedShiftEstiInhomoSWRWE}, is chosen such that the initial hypersurface
\begin{equation}
\Sigma_{0}=\left\{(t^*,r,\theta,\phi^*)|t^*=0\right\}\cap \mathcal{D}
\end{equation}
is a spacelike hypersurface with
\begin{equation}\label{eq:initialtimehypersurfacegradientmodular}
c(M)\leq-g(\nabla t^*,\nabla t^*)|_{\Sigma_0}\leq C(M),
\end{equation}
$c(M)$ and $C(M)$ being two universal positive constants.
We notice that
\begin{equation}\label{eq:VFieldTandPhi}
\partial_{t^*}=\partial_t\triangleq T\ \ \text{and}\ \  \partial_{\phi^*}=\partial_{\tilde{\phi}}=\partial_{\phi},
\end{equation}
and denote $\varphi_{\tau}$ to be the $1$-parameter family of diffeomorphisms generated by $T$. Define constant time hypersurface
\begin{equation}\label{def:constanttimeHypersurface}
\Sigma_{\tau}=\varphi_{\tau}\left(\Sigma_0\right)=\left\{(t^*,r,\theta,\phi^*)|t^*=\tau\right\}\cap \mathcal{D}.
 \end{equation}
 It is a spacelike hypersurface satisfying \eqref{eq:initialtimehypersurfacegradientmodular}, and in particular, for $r\leq M+r_0/2$, we have
\begin{equation}
-g(\nabla t^*,\nabla t^*)|_{\Sigma_{\tau}}=1+2Mr/\Sigma.
\end{equation}
For any $0\leq \tau_1< \tau_2$, we use the notations
\begin{align*}
\mathcal{D}(\tau_1, \tau_2)=\bigcup_{\tau\in [\tau_1, \tau_2]}\Sigma_{\tau},\quad \text{and} \quad
\mathcal{H}^+(\tau_1, \tau_2)=\partial\mathcal{D}(\tau_1, \tau_2) \cap \mathcal{H}^+.
\end{align*}
The reader may find the Penrose diagram  Figure 1 useful.
\begin{figure}[!h]
\centering
\raisebox{-0.5\height}{
\includegraphics{penrosediagram}}
\caption{Penrose diagram}
\label{fig:penrosediagram}
\end{figure}

The volume form of the manifold is
\begin{align}\label{eq:VolumeForm}
d{\text{Vol}}_{\mathcal{M}}=
\left\{
  \begin{array}{ll}
    \Sigma dtdr\sin\theta d\theta d\phi & \text{in B--L coordinates,} \\
\Sigma dt^*dr\sin\theta d\theta d\phi^* & \text{in global Kerr coordinates},
  \end{array}
  \right.
\end{align}
and the volume form of the hypersurface $\Sigma_{\tau} (\tau\geq0)$ is
\begin{equation}\label{def:volumeformhypersurface}
d\text{Vol}_{\Sigma_{\tau}}=\Sigma dr \sin \theta d\theta d\phi^* \ \  \text{in global Kerr coordinates}.
\end{equation}
Unless otherwise specified, we will always suppress these volume forms  in this paper when the integral is over a spacetime region or a $3$-dimensional submanifold of $\Sigma_{\tau}$.

\subsection{Maxwell equations in the Newman--Penrose formalism and Teukolsky master equation}

Utilizing the N--P formalim  \cite{newmanpenrose62, newmanpenrose63errata} to project the electromagnetic field $\mathbf{F}_{\alpha\beta}$ onto the Kinnersley null tetrad \cite{Kinnersley1969tetradForTypeD} $(l,n,m,\overline{m})$ written in B--L coordinates:
\begin{align}\label{eq:Kinnersley tetrad}
l^\mu &= \tfrac{1}{\Delta}(r^2+a^2 , \Delta , 0 , a), \notag\\
n^\mu &= \tfrac{1}{2\Sigma} (r^2+a^2 , - \Delta , 0 , a), \notag\\
m^\mu &= \tfrac{1}{\sqrt{2} \bar{\rho}}\left(i a \sin{\theta},0 , 1, \tfrac{i}{\sin{\theta}}\right),
\end{align}
and $\overline{m}^{\mu}$ being the complex conjugates of $m^{\mu}$,
the N--P components for the electromagnetic perturbations are given by
\begin{equation}\label{eq:MaxwellNPcomponentswithnosuperscript}
\Phi_0 = \mathbf{F}_{\mu\nu} l^\mu m^\nu ,\ \Phi_1=\half \mathbf{F}_{\mu\nu}(l^{\mu}n^{\nu}+\overline{m}^{\mu}m^{\nu}),\  \Phi_2 = \mathbf{F}_{\mu\nu} \overline{m}^\mu n^\nu.
\end{equation}
Here, $\bar{\rho}$ is the complex conjugate of $\rho = r- i a \cos{\theta}$.
The extreme components $\Phi_0$ and $\Phi_2$ are the \textquotedblleft{ingoing and outgoing radiative parts,\textquotedblright}
 and are invariant under gauge transformations.
Define the spin $s=\pm1$ components\footnote{These scalars satisfy the Teukolsky master equation and differ with the ones originally used in \cite{Teukolsky1973I} by a rescaling of $\Delta$ or $\Delta^{-1}$ and a constant rescaling.}
\begin{equation}
\begin{split}
\psi_{[+1]}= 2^{-1/2}\Delta \Phi_0  \ \ \text{ and } \ \ & \psi_{[-1]}=\sqrt{2}\Delta^{-1}\rho^2\Phi_2,
\end{split}
\label{eq:spinsfields}
\end{equation}
and a middle component
\begin{align}
\label{def:middlecompMaxwell}
\psi_{[0]}={}&\rho^2\Phi_1.
 \end{align}
Denote the future-directed ingoing and outgoing principal null vector fields in B--L coordinates
\begin{align}\label{def:VectorFieldYandV}
Y&\triangleq \tfrac{2\Sigma}{\Delta}n^{\mu}\partial_{\mu}
=\tfrac{(r^2+a^2)\partial_t +a\partial_{\phi}}{\Delta}-\partial_r, \ &\ V&\triangleq l^{\mu}\partial_{\mu}= \tfrac{(r^2+a^2)\partial_t+
a\partial_{\phi}}{\Delta}+\partial_r.
\end{align}
The full system of Maxwell equations relating these three components is
\begin{subequations}\label{eq:TSIsSpin1Kerr}
\begin{align}
\label{eq:TSIsSpin1Kerr Angular0With1}
\sqrt{2}\bar{\rho}m^{\mu}\partial_{\mu}\psi_{[0]}={}& \rho^2 Y\left({\rho}^{-1}{\psi_{[+1]}}\right),\\
\label{eq:TSIsSpin1Kerr Angular0With-1}
\sqrt{2}\rho \bar{m}^{\mu}\partial_{\mu}\psi_{[0]}={}&{\rho}^2 V\left(\rho^{-1}{\Delta}\psi_{[-1]}\right),\\
\label{eq:TSIsSpin1Kerr Radipal0With1}
\frac{\Delta}{\rho^2}l^{\mu}\partial_{\mu} \psi_{[0]}=
{}&\Big(\partial_{\theta}-\frac{i}{\sin\theta}\partial_{\phi}
-ia\sin\theta\partial_t
+\cot\theta\Big)\left(\rho^{-1}{\psi_{[+1]}}\right),\\
\label{eq:TSIsSpin1Kerr Radial0With-1}
\frac{2\Sigma}{\Delta}n^{\mu}\partial_{\mu}\psi_{[0]}={}&\rho^2\Big(\partial_{\theta}+\frac{i}{\sin\theta}\partial_{\phi}
+ia\sin\theta\partial_t+\cot\theta\Big)
\left(\rho^{-1}{\psi_{[-1]}}\right).
\end{align}
\end{subequations}

In \cite{Teu1972PRLseparability}, the author derived the decoupled equations on Kerr background for the spin $s=\pm1$ components from the system \eqref{eq:TSIsSpin1Kerr}
and found these decoupled equations are separable and governed by a single master equation--\emph{Teukolsky Master Equation} (TME), which in B--L coordinates is given by
\begin{align}\label{eq:TME}
& -\left[\tfrac{(r^2+a^2)^2}{\Delta} -a^2 \sin^2{\theta} \right] \tfrac{\partial^2 \psi_{[s]}}{\partial t^2} - \tfrac{4Mar}{\Delta} \tfrac{\partial^2 \psi_{[s]}}{\partial t \partial \phi}-\left[\tfrac{a^2}{\Delta} -\tfrac{1}{\sin^2{\theta}} \right] \tfrac{\partial^2 \psi_{[s]}}{\partial \phi^2}   \notag\\
&  +\Delta^{s} \tfrac{\partial}{\partial r} \left( \Delta^{-s+1} \tfrac{\partial \psi_{[s]}}{\partial r} \right) + \tfrac{1}{\sin{\theta}} \tfrac{\partial}{\partial \theta} \left( \sin{\theta} \tfrac{\partial \psi_{[s]}}{\partial \theta}\right) +2s \left[ \tfrac{a(r-M)}{\Delta} + \tfrac{i \cos{\theta}}{\sin^2{\theta} } \right] \tfrac{\partial \psi_{[s]}}{\partial \phi} \notag\\
&  +2s\left[ \tfrac{M(r^2-a^2)}{\Delta} -r -ia \cos{\theta} \right] \tfrac{\partial \psi_{[s]}}{\partial t}- (s^2 \cot^2{\theta} +s) \psi_{[s]} = 0 .
\end{align}

However, the Kinnersley tetrad is singular in ingoing Kerr coordinates on $\mathcal{H}^+$, which means the afore-defined N--P components are not all regular there. Therefore, we perform a null rotation to the Kinnersley null tetrad by
\begin{equation}\label{null rotation}
\left\{
  \begin{array}{ll}
    l \rightarrow \tilde{l}=\Delta/(2\Sigma)\cdot l, \\
    n \rightarrow \tilde{n}=(2\Sigma)/\Delta \cdot n,\\
    m \rightarrow m,
  \end{array}
\right.
\end{equation}
and the resulting tetrad $(\tilde{l},\tilde{n},m,\overline{m})$, called as the Hawking--Hartle (H--H) tetrad \cite{HHtetrad72}, if written in the ingoing Kerr coordinates $(v,r,\theta,\tilde{\phi})$, is
\begin{align}
\tilde l^\mu &= \tfrac{1}{2\Sigma}(2(r^2+a^2), \Delta , 0 , 2a), \notag\\
\tilde n^\mu &= (0 , - 1 , 0 , 0), \notag\\
m^\mu &= \tfrac{1}{\sqrt{2} \bar{\rho}}\left(i a \sin{\theta},0 , 1, \tfrac{i}{\sin{\theta}}\right).
\end{align}
It is clear that this H--H tetrad
 is indeed regular up to and including $\mathcal{H}^+$ in ingoing Kerr coordinates, and hence also in global Kerr coordinates.
The regular N--P components $\widetilde{\Phi_i}$ $(i=0,1,2)$ of the Maxwell field tensor $\mathbf{F}$, defined as in \eqref{eq:MaxwellNPcomponentswithnosuperscript} but with respect to H-H tetrad, are then
\begin{align}\label{def:regularNPComps}
\left\{
  \begin{array}{ll}
    \widetilde{\Phi_0}(\mathbf{F})=\mathbf{F}_{\mu\nu} \tilde{l}^\mu m^\nu
    =\tfrac{1}{2\Sigma}\psi_{[+1]}, \\
   \widetilde{\Phi_1}(\mathbf{F})=\half \mathbf{F}_{\mu\nu}(\tilde{l}^{\mu}\tilde{n}^{\nu}+\overline{m}^{\mu}m^{\nu})=\Phi_1=\rho^{-2}\psi_{[0]},\\
    \widetilde{\Phi_2}(\mathbf{F})=\mathbf{F}_{\mu\nu} \overline{m}^\mu \tilde{n}^\nu
    =\tfrac{2\Sigma}{\rho^2}\psi_{[-1]}.\\
  \end{array}
\right.
\end{align}
The complex scalars $\widetilde{\Phi_0}$ and $\widetilde{\Phi_2}$ will be of our main concern in this paper.

\subsection{Coupled systems}

The evolution equations for $\psi_{[+1]}$ and $\psi_{[-1]}$, by TME \eqref{eq:TME}, are
\begin{subequations}\label{eq:TME0orderPosiandNega}
\begin{align}\label{eq:TME0order}
&\Sigma\Box_g \psi_{[+1]}+2i\left(\tfrac{\cos\theta}{\sin^2 \theta}\partial_{\phi}-a\cos \theta \partial_t\right)\psi_{[+1]}-(\cot^2\theta +1)\psi_{[+1]} =-2Z\psi_{[+1]},\\
\label{eq:TME0orderNega}
&\Sigma\Box_g \psi_{[-1]} -2i\left(\tfrac{\cos\theta}{\sin^2 \theta}\partial_{\phi}-a\cos \theta \partial_t\right)\psi_{[-1]}-(\cot^2\theta -1)\psi_{[-1]}=2Z\psi_{[-1]}.
\end{align}
\end{subequations}
Here, $Z=(r-M)Y-2r\partial_t$.
We define two first order differential operators
\begin{align}\label{def:curlYandVops}
\mathcal{Y}(\cdot)={}&\sqrt{r^2+a^2}Y(\sqrt{r^2+a^2}\cdot),
&
\mathcal{V}(\cdot)={}&\sqrt{r^2+a^2}V(\sqrt{r^2+a^2}\cdot),
\end{align}
and construct from $\psi_{[+1]}$ and $\psi_{[-1]}$ the variables
\begin{subequations}\label{eq:DefOf Phi^0^1}
\begin{align}
\label{eq:DefOf Phi^0^1PosiSpin}
\phi^0_{+1}&=\psi_{[+1]}/(r^2+a^2),&
\phi^1_{+1}&=\mathcal{Y}\phi^0_{+1}
\end{align}
and
\begin{align}
\label{eq:DefOf Phi^0^1NegaSpin}
\phi^0_{-1}&=\Delta/(r^2+a^2) \psi_{[-1]},&
\phi^1_{-1}&=\mathcal{V}\phi^0_{-1}.
\end{align}
\end{subequations}
One should notice that though it is $\Delta/(r^2+a^2) V$  but not $V$ which is a regular vector field on $\mathcal{H}^+$, the variable $\phi_{-1}^1$ is indeed smooth on $\mathcal{H}^+$ if the regular N--P scalar $\widetilde{\Phi_2}$ is.

Upon doing the rescalings as in the definitions of $\phi^0_{\pm 1}$, one can gather the resulting $\partial_r$ derivative term and the $Z$-derivative term together and rewrite them as a linear combination of $\phi_{\pm}^1$ and up to first-order derivative terms with $a$-dependent coefficients. The definitions of $\phi_{\pm 1}^1$ are such that their governing equations are \textquotedblleft{spin-weighted Fackerell--Ipser\textquotedblright} equations with inhomogeneous terms weakly coupled to $\phi_{\pm 1}^0$ and $\partial_{\phi}\phi_{\pm 1}^0$. Specifically, from the commutator relations in Appendix \ref{appx:commutatorrelation} and noting $\partial_r\big(\tfrac{\Delta}{(\R)^2}\big)=-\tfrac{2\PR}{(\R)^3}$,
these variables satisfy the following equations
\begin{subequations}\label{eq:ReggeWheeler Phi^01Kerr}
\begin{align}
\label{eq:ReggeWheeler Phi^0Kerr}
\mathbf{L}_{+1}\phi^0_{+1}
=&F^{0}_{+1}
=\tfrac{2(\PR)}{(\R)^2}\phi^1_{+1}
-\tfrac{4ar}{\R}\partial_{\phi}\phi^0_{+1},\\
\label{eq:ReggeWheeler Phi^1Kerr}
\mathbf{L}_{+1}\phi^1_{+1} =&F^1_{+1}=-\tfrac{2a^2(\PR)}{(\R)^2}\phi^0_{+1}
-\tfrac{2a(r^2-a^2)}{\R}\partial_{\phi}\phi^0_{+1},
\end{align}
\end{subequations}
and
\begin{subequations}\label{eq:ReggeWheeler Phi^01KerrNega}
\begin{align}\label{eq:ReggeWheeler Phi^0KerrNega}
\mathbf{L}_{-1}\phi^0_{-1}
=&F^{0}_{-1}=-\tfrac{2(\PR)}{(\R)^2}\phi^1_{-1}
+\tfrac{4ar}{\R}\partial_{\phi}\phi^0_{-1},\\
\label{eq:ReggeWheeler Phi^1KerrNega}
\mathbf{L}_{-1}\phi^1_{-1} =&F^{1}_{-1}=\tfrac{2a^2(\PR)}{(\R)^2}\phi^0_{-1}
-\tfrac{2a(r^2-a^2)}{\R}\partial_{\phi}\phi^0_{-1},
\end{align}
\end{subequations}
respectively.
Here, the subscript $+1$ or $-1$ indicates the spin weight $s=\pm1$, and the operator $\mathbf{L}_s$, defined by
\begin{align}\label{def:SWRWoperator}
\mathbf{L}_s=\Sigma \Box_g+2is\left(\tfrac{\cos\theta}{\sin^2 \theta}\partial_{\phi}-a\cos \theta \partial_t\right)-\big(\cot^2 \theta+\tfrac{r^4-2Mr^3+6a^2Mr-a^4}{(\R)^2}\big),
\end{align}
is called as \emph{\textquotedblleft{spin-weighted Fackerell--Ipser operator\textquotedblright}} in this paper.
In the discussions below, the above governing equations for $\phi^1_{s}$ and $\phi^0_{s}$ will be viewed as a linear wave system. Without confusion, we may always suppress the subscripts and simply write as $\phi^1$ and $\phi^0$.
\begin{remark}
The application of the first-order differential operators $\mathcal{Y}$ and $\mathcal{V}$ to the extreme components is to make the nonzero boost weight vanishing. After making the substitutions $\partial_t\leftrightarrow -i\omega$, $\partial_{\phi}\leftrightarrow im$, and separating the operator $\mathbf{L}_s$, the angular part is the spin-weighted spheroidal harmonic operator of Teukolsky angular equation, while the radial operator is the sum of the radial part of the rescaled scalar wave operator $\Sigma \Box_g$ and $((\R)^2-(r^4-2Mr^3+6a^2Mr-a^4))/(\R)^2$
and reduces to the radial operator for Fackerell--Ipser (F--I) equation \cite{fackerell:ipser:EM} when on Schwarzschild background $(a=0)$. See more details in Section \ref{sect:outlineproof} for Schwarzschild case and Section \ref{sect:SeparateAngAndRadialEqs} for Kerr case.

We note that in the non-static Kerr case $(a\neq 0)$, the classical F--I operator in \cite{fackerell:ipser:EM} has an imaginary zeroth-order term in the potential, thus being quite different from the operator $\mathbf{L}_s$ here in which the imaginary coefficients are accompanied by some first-order derivatives. This is the main difference which enables us in this paper to not introduce fractional derivative operators as in \cite{larsblue15Maxwellkerr} which treats the classical F--I equation.
\end{remark}

\subsection{Notations of energy norms, derivatives and Morawetz densities}\label{sect:EnergynormsDeris}
For any complex-valued smooth function $\psi: \mathcal{M}\rightarrow \mathbb{C}$ with spin weight $s$, define in global Kerr coordinates for any $\tau\geq0$ that
\begin{equation}
 {E}_{\tau}(\psi)= \int_{\Sigma_{\tau}}|\partial\psi|^2,
\end{equation}
and in ingoing Kerr coordinates for any $\tau_2>\tau_1\geq 0$ that
\begin{equation}
 {E}_{\mathcal{H}^+(\tau_1,\tau_2)}(\psi)= \int_{\mathcal{H}^+(\tau_1,\tau_2)}(|\partial_v\psi|^2
 +|\nablaslash\psi|^2)r^2dv\sin\theta d\theta d\tilde{\phi}.
\end{equation}
Note that we adopt the notation
\begin{equation}\label{eq:ModuloSquareofDeris}
|\partial\psi(t^*,r,\theta,\phi^*)|^2=|\partial_{t^*}\psi|^2
+|\partial_r\psi|^2+|\nablaslash\psi|^2,
\end{equation}
and $\nablaslash$ used here are not the standard covariant angular derivatives $\check{\nablaslash}$ on sphere $\mathbb{S}^2(t^*,r)$, but the spin-weighted version of them, i.e. $\nablaslash$ could be any one of $\nablaslash_i$ $(i=1,2,3)$ defined by
\begin{subequations}\label{SpinWeightedAngularDerivaBasisOnSphere}
\begin{align}
r\nablaslash_1&=r\check{\nablaslash}_1-\tfrac{is\cos\phi}{\sin\theta}=-\sin\phi\partial_{\theta}-
\tfrac{\cos\phi}{\sin\theta}\left(\cos\theta\partial_{\phi^*}+is\right),\\
r\nablaslash_2&=r\check{\nablaslash}_2-\tfrac{is\sin\phi}{\sin\theta}=\cos\phi\partial_{\theta}
-\tfrac{\sin\phi}{\sin\theta}\left(\cos\theta\partial_{\phi^*}+is\right),\\
r\nablaslash_3&=r\check{\nablaslash}_3=\partial_{\phi^*}.
\end{align}
\end{subequations}
In global Kerr coordinates, we can express the modulus square of $\nablaslash\psi$ as
\begin{align}\label{def:nablaslashModuleSquare}
|\nablaslash\psi|^2=\sum_{i=1,2,3}\left|\nablaslash \psi\right|^2&=\frac{1}{r^2}\left(|\partial_{\theta}\psi|^2
+\left|\tfrac{\cos\theta\partial_{\phi^*}\psi+
is\psi}{\sin\theta}\right|^2
+|\partial_{\phi^*}\psi|^2\right)\notag\\
&=\frac{1}{r^2}\left(|\partial_{\theta}\psi|^2
+\left|\tfrac{\partial_{\phi^*}\psi+
is\cos\theta\psi}{\sin\theta}\right|^2
+|\psi|^2\right).
\end{align}
The same forms \eqref{SpinWeightedAngularDerivaBasisOnSphere} and \eqref{def:nablaslashModuleSquare} hold true in B--L coordinates and ingoing Kerr coordinates from \eqref{eq:VFieldTandPhi}. Hence, for convenience of calculations, we may always refer to these forms with $\partial_{\phi}$ in place of $\partial_{\phi^*}$ without confusion.

Define a function $\chi_{\text{trap}}(r)=(1-3M/r)^2(1-\eta_{[r^-_{\text{trap}}, r^+_{\text{trap}}]}(r))$ where $\eta_{[r^-_{\text{trap}}, r^+_{\text{trap}}]}(r)$ is the indicator function in the radius region bounded by $r^-_{\text{trap}}=2.9M<3M<r^+_{\text{trap}}=3.1M$.
We also define the weighted Morawetz density $\mathbb{M}_{\text{deg}}(\psi)$ which degenerates in the trapped region defined by
\begin{subequations}\label{def:Mpsi}
\begin{align}\label{def:mathbbMpsi}
\mathbb{M}_{\text{deg}}(\psi)={}\frac{|\partial_{r}\psi|^2}{r^{1+\delta}}
+\chi_{\text{trap}}(r)\left(\frac{|\partial_{t^*} \psi|^2}{r^{1+\delta}}+\frac{|\nablaslash\psi|^2}{r}\right).
\end{align}
See Theorem \ref{thm:MoraEstiAlmostScalarWave} for the definition of $\chi_{\text{trap}}(r)$.
$\mathbb{M}(\psi)$, a weighted Morawetz density non-degenerate in the trapped region, is given by
\begin{align}\label{def:MpsiposiS1}
&\mathbb{M}(\psi)=
\frac{|\partial_{r}\psi|^2}{r^{1+\delta}}+ \frac{|\partial_{t^*} \psi|^2}{r^{1+\delta}}+\frac{|\nablaslash\psi|^2}{r},
\end{align}
and the ones with a hat which do not have the extra $r^{-\delta}$ prefactor for the $r$ and $t^*$ derivatives are
\begin{align}\label{def:MpsinegaS1}
\widehat{\mathbb{M}}_{\text{deg}}(\psi)={}&
\frac{|\partial_{r}\psi|^2}{r}
+\chi_{\text{trap}}(r)\frac{|\partial_{t^*} \psi|^2+|\nablaslash\psi|^2}{r}
,\\
 \widehat{\mathbb{M}}(\psi)={}&\frac{|\partial\psi|^2}{r}.
\end{align}
\end{subequations}
The derivative $\partial^i \psi$, for a multi-index $i=(i_1,i_2,i_3,i_4,i_5)$ with $i_k\geq 0$ $(k=1,2,3,4,5)$ and any smooth function $\psi$ with spin weight $s$, is to be understood as
\begin{equation}\label{def:multiindexderi}
\partial^i \psi=\partial_{t^*}^{i_1}\partial_r^{i_2}\nablaslash_{1}^{i_3}
\nablaslash_{2}^{i_4}\nablaslash_{3}^{i_5}\psi.
\end{equation}

\subsection{Main theorems}\label{sect:MainTheorems}

All the results in this paper will only focus on the future time direction. By performing a symmetry $\psi_{[s]}(t,\phi)\rightarrow \Delta^s \psi_{[-s]}(-t,-\phi)$ which satisfy the same governing equations, one could conclude the corresponding estimates in the past time direction immediately.

\begin{thm}\label{thm:MoraEstiAlmostScalarWave}(Energy and Morawetz estimate for an inhomogeneous spin-weighted Fackerell--Ipser).
In the DOC of a slowly rotating Kerr spacetime $(\mathcal{M},g=g_{M,a})$, assume that a real two-form $\mathbf{F}_{\alpha\beta}$ is a regular\footnote{\textquotedblleft{Regular\textquotedblright} is defined in Definition \ref{def:regularandintegrable}.} solution to the Maxwell equations \eqref{eq:MaxwellEqs}, and $\psi=\phi^i_{s}$ is defined as in \eqref{eq:DefOf Phi^0^1} and satisfies an inhomogeneous \textquotedblleft{spin-weighted Fackerell--Ipser equation\textquotedblright} (SWFIE)
\begin{equation}\label{eq:spinweightedwaveeqAssumption}
\mathbf{L}_s\psi=F,
\end{equation}
which takes the form of any subequation in the linear wave system \eqref{eq:ReggeWheeler Phi^01Kerr} or \eqref{eq:ReggeWheeler Phi^01KerrNega} with $s=\pm1$ being the spin weight, $\mathbf{L}_s$ defined as in \eqref{def:SWRWoperator} and $F=F_s^i$.
Then, given any $0<\delta<1/2$ and $0<\mu_i\ll 1$, there exist universal constants $\veps_0=\veps_0(M)$ and $C=C(M,\delta,\Sigma_0)=C(M,\delta,\Sigma_{\tau})$ and a constant $C(\mu_i^{-1})$ such that for all $|a|/M\leq \veps_0$ and any $\tau\geq0$,
we have
\begin{align}\label{eq:MorawetEnergyEstimateforAlmostScalarWave}
\hspace{4ex}&\hspace{-4ex}
{E}_{\tau}(\psi)+{E}_{\mathcal{H}^+(0,\tau)}(\psi)
+\int_{\mathcal{D}(0,\tau)} \mathbb{M}_{\text{deg}}(\psi)\notag\\
\leq {}&C(\mu_i^{-1})\sum_{j=0,1}E_{0}(\phi_s^j)+C\mu_i^{-1}
\mathcal{E}(F)
\notag\\
&
+C\mu_i\sum_{j=0,1}\bigg(
\int_{\mathcal{D}(0,\tau)}\mathbb{M}_{\text{deg}}(\phi_s^j)
+\mathcal{E}(F_s^j)\bigg).
\end{align}
Here, the error term $\mathcal{E}(F)$ coming from the source $F$ takes the form of
\begin{align}\label{def:ErrorTermFromSourceF}
\mathcal{E}(F)=\bigg|\int_{\mathcal{D}(0,\tau)}
\frac{1}{\Sigma}\Re\left(F \partial_t\bar{\psi}\right)\bigg|
+\int_{\mathcal{D}(0,\tau)}r^{-3+\delta}|F|^2,
\end{align}
with $\Re(\cdot)$ denoting the real part.
\end{thm}

\begin{thm}\label{thm:EneAndMorEstiExtremeCompsNoLossDecayVersion2}(Energy and Morawetz estimates for regular extreme N--P components).
In the DOC of a slowly rotating Kerr spacetime $(\mathcal{M},g=g_{M,a})$, given any $0<\delta<1/2$ and any nonnegative integer $n$, there exist universal constants $\veps_0=\veps_0(M)$ and $C=C(M,\delta,\Sigma_0,n)=C(M,\delta,\Sigma_{\tau},n)$ such that for all $|a|/M\leq \veps_0$ and any regular solution $\mathbf{F}_{\alpha\beta}$ to the Maxwell equations \eqref{eq:MaxwellEqs},
the following estimates hold true for any $\tau\geq 0$:
\begin{itemize}
\item Define
\begin{align}
\label{def:varphi01positive}
&\varphi^0_0=r^{2-\delta}\widetilde{\Phi_0}, & \varphi^1_0&=rY(r\widetilde{\Phi_0}),
\end{align}
then
\begin{align}\label{eq:MoraEstiFinal(2)KerrRegularpsiBothSpinComp}
  \hspace{4ex}&\hspace{-4ex}\sum_{k=0,1}\left({E}_{\tau}(\varphi^k_0)
+{E}_{\mathcal{H}^+(0,\tau)}(\varphi^k_0)\right)
+\int_{\mathcal{D}(0,\tau)} \left(\mathbb{M}_{\text{deg}}(\varphi^1_0)
+\widehat{\mathbb{M}}(\varphi^0_0)\right)\notag\\
\leq{}&
C\sum_{k=0,1}{E}_{0}(\varphi^k_0).
\end{align}
\item
Define
\begin{align}
\label{def:varphi01negative}
&\varphi^0_2=\widetilde{\Phi_2}, & \varphi^1_2&=\frac{r\Delta}{\R}V(r\widetilde{\Phi_2}),
\end{align}
then\footnote{We have dropped the terms $\inttau \left|\nablaslash \varphi^0_2\right|^2$ and $\int_{\Sigma_{\tau}}r\left|\nablaslash \varphi^0_2 \right|^2$ on left hand side (LHS) of \eqref{eq:MoraEstiFinal(2)KerrRegularpsiBothSpinCompV2} and analogous terms on LHS of \eqref{eq:MoraEstiFinal(2)KerrRegularpsiBothSpinCompHighOrder2}. See Section \ref{sect:pfofmainthm}.}
\begin{align}\label{eq:MoraEstiFinal(2)KerrRegularpsiBothSpinCompV2}
  \hspace{4ex}&\hspace{-4ex}\sum_{k=0,1}\left({E}_{\tau}(\varphi^k_2)
+{E}_{\mathcal{H}^+(0,\tau)}(\varphi^k_2)\right)
+\int_{\mathcal{D}(0,\tau)} \left(\mathbb{M}_{\text{deg}}(\varphi^1_2)
+{\mathbb{M}}(\varphi^0_2)\right)\notag\\
\leq{}&
C\sum_{k=0,1}{E}_{0}(\varphi^k_2).
\end{align}
\end{itemize}
Moreover, we have the corresponding high order estimates
\begin{subequations}\label{eq:MoraEstiFinal(2)KerrRegularpsiBothSpinCompHighOrder}
\begin{align}\label{eq:MoraEstiFinal(2)KerrRegularpsiBothSpinCompHighOrder1}
&\sum_{|i|\leq n}\sum_{k=0,1}\left({E}_{\tau}(\partial^i\varphi^k_0)
+{E}_{\mathcal{H}^+(0,\tau)}(\partial^i\varphi^k_0)\right)\notag\\
&\ \ +\sum_{|i|\leq n}\int_{\mathcal{D}(0,\tau)} \left(\mathbb{M}_{\text{deg}}(\partial^i\varphi^1_0)
+\widehat{\mathbb{M}}(\partial^i\varphi^0_0)\right)
\leq
C\sum_{|i|\leq n}\sum_{k=0,1}{E}_{0}(\partial^i\varphi^k_0),\\
\label{eq:MoraEstiFinal(2)KerrRegularpsiBothSpinCompHighOrder2}
&\sum_{|i|\leq n}\sum_{k=0,1}\left({E}_{\tau}(\partial^i\varphi^k_2)
+{E}_{\mathcal{H}^+(0,\tau)}(\partial^i\varphi^k_2)\right)\notag\\
&\ \ +\sum_{|i|\leq n}\int_{\mathcal{D}(0,\tau)} \left(\mathbb{M}_{\text{deg}}(\partial^i\varphi^1_2)
+\mathbb{M}(\partial^i\varphi^0_2)\right)
\leq
C\sum_{|i|\leq n}\sum_{k=0,1}{E}_{0}(\partial^i\varphi^k_2).
\end{align}
\end{subequations}
\end{thm}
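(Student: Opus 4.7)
The plan is to derive the estimates for the regular N-P extreme components $\varphi^k_s$ by first establishing analogous estimates for the auxiliary variables $\phi^k_s$ of \eqref{eq:DefOf Phi^0^1}, which by construction solve the weakly-coupled systems of inhomogeneous SWFIEs \eqref{eq:ReggeWheeler Phi^01Kerr} and \eqref{eq:ReggeWheeler Phi^01KerrNega}. First I would apply Theorem \ref{thm:MoraEstiAlmostScalarWave} separately to $\psi=\phi^0_s$ and $\psi=\phi^1_s$, then sum the two resulting inequalities, and close them by absorbing the error terms $\mathcal{E}(F^0_s)+\mathcal{E}(F^1_s)$ into the Morawetz and energy densities on the left-hand side.

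To carry out the absorption, I would split each source into an $a$-small piece (the terms with $a^2\partial_t+a\partial_\phi$) and a non-$a$-small piece (the term $(r^2-3Mr+2a^2)r^{-3}\phi^1_s$, which appears only in $F^0_s$). The $a$-small pieces yield errors bounded by $a_0^2$ times the Morawetz/energy of $\phi^0_s$ and are absorbed for $\veps_0$ sufficiently small. The non-$a$-small piece is controlled via the trapping structure: the prefactor $(r-3M)$ is comparable to $\chi_{\text{trap}}^{1/2}$ away from $[r^-_{\text{trap}},r^+_{\text{trap}}]$, so the bulk error $\int r^{-3+\delta}|F^0_s|^2$ contributes an integral of the form $\int \chi_{\text{trap}}\,r^{-(5+\delta)}|\phi^1_s|^2$, which a weighted Hardy inequality converts into an integral of $|\partial_r\phi^1_s|^2/r^{1+\delta}$---precisely the non-degenerate $\partial_r$ part of $\mathbb{M}_{\text{deg}}(\phi^1_s)$. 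The cross error $\bigl|\int\Sigma^{-1}F^0_s\,\partial_t\bar\phi^0_s\bigr|$ is handled by Cauchy--Schwarz, splitting the factor $\chi_{\text{trap}}^{1/2}$ between a $|\partial_{t^*}\phi^0_s|^2$ piece (absorbed into $\chi_{\text{trap}}|\partial_{t^*}\phi^0_s|^2 r^{-(1+\delta)}$ from $\mathbb{M}_{\text{deg}}(\phi^0_s)$) and a $|\phi^1_s|^2$ piece (handled by Hardy as above).

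The transition from $\phi^k_s$ to $\varphi^k_s$ is a purely algebraic rescaling: the relations \eqref{eq:spinsfields}, \eqref{def:regularNPComps}, \eqref{eq:DefOf Phi^0^1}, \eqref{def:varphi01positive}, and \eqref{def:varphi01negative} differ at each point by smooth functions of $(r,\theta)$ with controlled behavior, and the $r^{-\delta}$ weight built into $\varphi^0_0$ is exactly what accounts for the appearance of the non-degenerate $\widehat{\mathbb{M}}(\varphi^0_0)$ rather than the degenerate $\mathbb{M}_{\text{deg}}$; for $\varphi^0_2$, the decay of $\widetilde{\Phi_2}$ at infinity permits the stronger density $\mathbb{M}(\varphi^0_2)$. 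For the higher-order estimates \eqref{eq:MoraEstiFinal(2)KerrRegularpsiBothSpinCompHighOrder}, I would commute with the Killing fields $T=\partial_{t^*}$, $\partial_{\phi^*}$ (which commute with $\mathbf{L}_s$) and with the spin-weighted angular derivatives $\nablaslash_i$, which produce only lower-order commutators controllable by induction; near $\mathcal{H}^+$ the red-shift commutator from Proposition \ref{prop:RedShiftEstiInhomoSWRWE} recovers control of the transversal $Y$-derivative. The commuted sources retain the same structural split (coupling $+$ $a$-smallness $+$ trapping factor), so the argument iterates verbatim.

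The hard part will be the closure in the previous step: balancing the trapping degeneracy of $\mathbb{M}_{\text{deg}}(\phi^1_s)$ (which gives no direct control of $\phi^1_s$ itself at the photon sphere) against the need to dominate $|\phi^1_s|^2$ in the bulk error from the non-$a$-small coupling. This is precisely where the quadratic vanishing of $\chi_{\text{trap}}$ at $r=3M$, the freedom to choose $\delta\in(0,1/2)$, and the Hardy-type bounds supplied by the $\partial_r\phi^1_s$ term of $\mathbb{M}_{\text{deg}}(\phi^1_s)$ must be combined so that no derivative is lost and no constant blows up as $\veps_0\to 0$.
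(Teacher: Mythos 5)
Your overall strategy --- applying Theorem \ref{thm:MoraEstiAlmostScalarWave} to $\phi^0_s$ and $\phi^1_s$, summing with a large weight on the $\phi^1_s$ estimate, upgrading the combined Morawetz density to be nondegenerate for $\phi^0_s$ via the coupling relation, and absorbing the sources --- matches the paper's. But two genuine gaps remain.

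First, the error $\mathcal{E}(F^1_s)$ is \emph{not} trivially absorbed as an $a$-small correction. Its cross piece
\[
\left|\int_{\mathcal{D}(0,\tau)}\tfrac{1}{\Sigma}\Re\left(F^1_s\,\partial_t\overline{\phi^1_s}\right)\right|
\]
pairs $a\,\partial_\phi\phi^0_s$ with $\partial_t\overline{\phi^1_s}$; although the prefactor is $O(a)$, the factor $\partial_t\phi^1_s$ is controlled on the LHS only through $\chi_{\text{trap}}|\partial_{t^*}\phi^1_s|^2/r^{1+\delta}$, which vanishes identically on the trapped band $[r^-_{\text{trap}},r^+_{\text{trap}}]$. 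A direct Cauchy--Schwarz there needs nondegenerate control of $\partial_t\phi^1_s$, which the LHS never supplies, no matter how small $a_0$ is. The paper's fix (estimates \eqref{eq:EstiIKerr1 mostannoyingterm1Neun2o}--\eqref{eq:ControlInTrappingRegionPosiSpin1}, and their spin $-1$ analogues) is to rewrite the $a^2\partial_t\phi^0_s\cdot\partial_t\overline{\phi^1_s}$ piece as a $Y$-derivative of $r^2|\partial_t\phi^0_s|^2$, hence a boundary term, and for the $a\partial_\phi$ piece to substitute $\partial_t\phi^1_s$ via \eqref{eq:partialtexpressedbyYpartialphipartialr} in terms of $Y\phi^1_s$, $\partial_\phi\phi^1_s$, $\partial_r\phi^1_s$, then integrate by parts so that only $\partial_r\phi^1_s$ (nondegenerately controlled) and boundary terms (controlled by an averaging in $r$) survive. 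Your proposal discusses only the cross term from $F^0_s$, which is the easy one; without the above device the closure fails at the photon sphere.

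Second, the passage to $(\varphi^0_2,\varphi^1_2)$ for spin $-1$ is not a smooth pointwise rescaling. Since $\phi^0_{-1}=\Delta r^{-2}\psi_{[-1]}$ while $\varphi^0_2=\widetilde{\Phi_2}$, the conversion factor is proportional to $r^2/\Delta$, which blows up at $r_+$; the energy and Morawetz densities of $\phi^0_{-1}$ therefore degenerate relative to those of $\varphi^0_2$ near the horizon, and "controlled behavior" fails precisely there. The paper compensates with a dedicated red-shift estimate \eqref{eq:RedShiftEstiInhomoSWRWERegularNegaSpinNPcomp} for $\psi_{[-1]}$, obtained after rewriting the spin $-1$ TME as \eqref{eq:TMEregularNPnegacomp} and exploiting that the first-order terms act like a positive multiple of $N_{\chi_0}$ near $r_+$. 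Similarly, the stronger density $\mathbb{M}(\varphi^0_2)$ is not a consequence of decay at infinity but of the transport Lemma \ref{lem:estiphi0nega}, which views $\phi^1_{-1}=rV(r\phi^0_{-1})$ as a transport equation for $\phi^0_{-1}$ to trade $|\nablaslash\phi^0_{-1}|^2$ for $|\nablaslash\phi^1_{-1}|^2$, plus \eqref{eq:ControlOnemorerweightenergyStep1}--\eqref{eq:ControlOnemorerweightenergyStep2} to control $\int_{\Sigma_\tau}r|\nablaslash\phi^0_{-1}|^2$. These are substantive additional ingredients, not bookkeeping.

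The remaining ingredients you cite --- the nondegeneracy of $\widehat{\mathbb{M}}(\phi^0_s)$ from elliptic estimates and the coupling relation, the $r^{-\delta}$-weighted $\varphi^0_0$ (whose purpose is the improved large-$r$ Morawetz estimate of Proposition \ref{prop:ImprEnergyMoraForr2minusdeltaphi0}), and the higher-order commutation with $T$, $\chi_0 Y$ and red-shift commutation (the paper uses elliptic estimates rather than commuting with $\nablaslash_i$, but that is a cosmetic difference) --- are essentially the paper's.
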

\begin{remark}
The trapping degeneracy in the trapped region can be manifestly  removed for the Morawetz densities $\mathbb{M}_{\text{deg}}(\partial^i\varphi^1_0)$ and $\mathbb{M}_{\text{deg}}(\partial^i\varphi^1_2)$ with $|i|\leq n-1$.
The energy and Morawetz estimate \eqref{eq:MoraEstiFinal(2)KerrRegularpsiBothSpinComp} is derived by applying Theorem \ref{thm:MoraEstiAlmostScalarWave} to $(\psi, F)=(\phi^0_s, F_{s}^0)$ and $(\psi, F)=(\phi^1_s, F_{s}^1)$ and is a single estimate at two levels of regularity, since $\phi^1$ is a null derivative of $\phi^0$. Thus, despite the well known trapped phenomenon, it is possible to prove a non-degenerate Morawetz estimate for $\phi^0$. However, the two levels of regularity must be treated simultaneously. In part, it is possible to close the two estimates simultaneously, because the right hand side (RHS) of \eqref{eq:ReggeWheeler Phi^0Kerr} and \eqref{eq:ReggeWheeler Phi^1Kerr} is at the same level of regularity, involving no derivatives of $\phi^1$ and one of $\phi^0$.
\end{remark}
\begin{remark}
An energy and Morawetz estimate for the middle component $\widetilde{\Phi_1}$ can be directly derived using the results in the above theorem and the system of Maxwell equations \eqref{eq:TSIsSpin1Kerr}. In fact, from system \eqref{eq:TSIsSpin1Kerr}, one is able to estimate $\widetilde{\Phi_1}$ up to a stationary part $C\rho^{-2}$, $C$ being a constant independent of the coordinates. Based on the recent result \cite{andersson2019stability} for linearized gravity on Kerr spacetimes, the approach of proving decay estimates for the spin $\pm 2$ components of linearized gravity on Kerr spacetimes therein can be extended here to obtain decay estimates for spin $\pm 1$ components of the Maxwell field which together with system \eqref{eq:TSIsSpin1Kerr} then imply estimates for the entire Maxwell field on Kerr spacetimes. This will be shown elsewhere.
\end{remark}

\subsection{Previous results}
The scalar wave equation in the DOC of vacuum black holes has been studied extensively in the last $15$ years. On a Schwarzschild background, uniform boundedness of scalar wave is first obtained in \cite{kaywald87Schw}, while a Morawetz \cite{morawetz1968time}-type multiplier, which is first introduced to black hole background in \cite{blue2003semilinear}, has been utilized in many works, such as \cite{bluesoffer09phase,dafrod09red}, to achieve Morawetz estimate (or integrated local energy decay estimate). In Kerr spacetime with $a\neq 0$, the fact that the Killing vector field $\partial_t$ fails to be globally timelike as in Schwarzschild case raises a difficulty in constructing a uniformly bounded positive energy. Moreover, the location where the null geodesics can be trapped is enlarged from $r=3M$ in Schwarzschild to a radius region in Kerr case. However, three independent, different approaches \cite{tataru2011localkerr,larsblue15hidden,dafermos2010decay} have been developed on slowly rotating Kerr background to achieve a uniform bound of a positive definite energy and Morawetz estimate.
Different pointwise decay estimates are also proved there. In particular, the separability of the wave equation or the complete integrability of the geodesic flow \cite{Carter1968Separability} is a point of crucial importance in these works. More recently, energy and decay estimates for the scalar field on the full subextremal Kerr $(|a|<M)$ backgrounds have been achieved in \cite{dafermos2016decay}.

Decay behaviors for Maxwell field have been proved in \cite{blue08decayMaxSchw} outside a Schwarzschild black hole, and on some spherically symmetric backgrounds or non-stationary asymptotically flat backgrounds in
\cite{metcalfe2014PWdecayMaxBH,sterbenz2015decayMaxSphSym}. All the works above focus on estimating the middle component which satisfies a decoupled, separable Fackerell--Ipser equation \cite{fackerell:ipser:EM} in a form similar to scalar wave equation, and then make use of these estimates to achieve a Morawetz estimate and decay estimates for the extreme components. In contrast with these works, the author in \cite{Fede2016MaxwellSchw} treats the extreme components satisfying the TME by applying some first-order differential operators to the extreme components, \footnote{These operators are the ones we used in this work if restricted to Schwarzschild case.} which then also satisfy the Fackerell--Ipser equation, while a superenergy tensor is constructed in \cite{andersson16decayMaxSchw} to yield a conserved energy current when contracted with $\partial_t$. In particular, the constructed superenergy tensor vanishes for the non-radiating Coulomb field. A decay estimate is also obtained in \cite{ghanem2014decayMaxSchw} under the assumption of a Morawetz estimate. We refer to the recent paper \cite{andersson16decayMaxSchw} for a more complete description of the literature in Maxwell equations on Schwarzschild background.

The method of linearizing VEE subject to metric perturbations was carried out for the Schwarzschild metric in \cite{ReggeWheeler1957,Vishveshwara1970stability,Zerilli1970evenparity}.
In these papers, the time and angular dependence can be easily separated out from the equations due to the metric being static and spherically symmetric. The resulting radial equations can be reduced to Regge--Wheeler equation governing the odd-parity perturbations and Zerilli equation governing the even-parity perturbations. In particular, these equations were derived later in \cite{moncrief74gravitational} without assuming any gauge conditions. The linear stability of Schwarzschild metric has been resolved recently in  \cite{DRG16linearstabSchw,Hung2017linearstabSchw}, with the former one starting from a Regge--Wheeler type equation satisfied by some scalar constructed from \emph{Chandrasekhar
transformation \cite{chandrasekhar1975linearstabSchw}} by applying some second order differential operator to the extreme component and the later one treating with Regge--Wheeler--Zerilli--Moncrief system. The energy and Morawetz estimates, as well as decay estimates, for this system are also obtained in \cite{Jinhua17LinGraSchw}.

For nonzero integer spin fields in the
DOC of a Kerr spacetime, only a few results about stability issue can be found in the literature. The only result for Maxwell equations we are aware of is given in \cite{larsblue15Maxwellkerr} on slowly rotating Kerr background, in which energy and Morawetz estimates for both the entire Maxwell system and the Fackerel--Ipser equation for the middle component are proved by introducing fractional derivative operators to treat the presence of an imaginary potential term in Fackerel--Ipser equation. The estimates therein enable the authors to prove a uniform bound of a positive definite energy and the convergence property of the Maxwell field to a stationary Coulomb field.
Turning to the extreme components, as mentioned already, they satisfy decoupled, separable TME \eqref{eq:TME}. Differential relations between the radial parts of the modes with extreme spin weights, as well as similar relations between the angular parts, are derived in \cite{starobinsky1973amplification,TeuPress1974III}, known as \textquotedblleft{\emph{Teukolsky--Starobinsky Identities}.\textquotedblright} In \cite{whiting1989mode}, it is shown that the TME admits no modes with frequency having positive imaginary part, or in another way, no exponentially growing mode solutions exist, by assuming some proper boundary conditions. This mode stability result is recently generalized to the case of real frequencies in \cite{shlapentokh-rothman2015} for scalar wave where the spin weight is $0$ and \cite{andersson2016mode} for any half-integer spin fields. We mention here the papers \cite{finster2016linear} which
discuss the stability problem for each azimuthal mode solution to TME, and \cite{kla2015globalstabwavemapKerr} concerning a toy model problem arising from the nonlinear stability of the Kerr
solution under small polarized axisymmetric perturbations.

\begin{flushleft}
\textit{References added:}
\end{flushleft}During the submission of this work, there has been great advance toward the nonlinear stability conjecture of the Kerr spacetimes. The linear stability of Schwarzschild has been reproved using instead (generalized) harmonic gauge in \cite{Hung2018linearwavegaugeodd,Johnson2018linear,Hung2019linearwavegaugeeven}.
A nontrivial nonlinear stability result of Schwarzschild spacetime under polarized axisymmetry is shown in \cite{klainermanszeftel2017Schw}.  See also \cite{Giorgi2019linearRNsmallcharge,Giorgi2019linearRNfullcharge}
for the linear stability results of the Reissner--Nordstr\"{o}m spacetimes. Energy and decay estimates for the spin $\pm 2$ TME on slowly rotating Kerr backgrounds are obtained in \cite{Ma17spin2Kerr,Dafermos2019TMEKerr}, and linear stability of slowly rotating Kerr spacetimes is shown in \cite{andersson2019stability,hafner2019linear}.

\subsection{Outline of the proof}\label{sect:outlineproof}
We start with the Schwarzschild case, which will outline the idea utilized for the slowly rotating Kerr case, and for simplicity, only the positive-spin component is considered here, the proof for the negative-spin component being the same.

In Schwarzschild spacetime ($a=0$), equation \eqref{eq:spinweightedwaveeqAssumption} reduces to \begin{equation}\label{eq:spinweightedwaveeqAssumptionSchw}
\Sigma \Box_g \psi+\tfrac{2is\cos\theta}{\sin^2 \theta}\partial_{\phi}\psi-\left(\cot^2 \theta+\tfrac{\Delta}{r^2}\right)\psi=F.
\end{equation}
Decompose the solution $\psi$ and the inhomogeneous term $F$ into
\begin{align}\label{eq:decompSWSH Schw}
\psi&=\sum_{m,\ell}\psi_{m\ell}(t,r)Y_{m\ell}^{s}(\cos\theta)e^{im\phi},
m\in \mathbb{Z},\\
\label{eq:decompSWSH Schw SourceTermF}
F&=\sum_{m,\ell}F_{m\ell}(t,r)Y_{m\ell}^{s}(\cos\theta)e^{im\phi},
m\in \mathbb{Z}.
\end{align}
Here, for each $m$, $\left\{Y_{m\ell}^{s}(\cos\theta)\right\}_{\ell}$ with $\min{\{\ell\}}=\max{(|m|,|s|)}$ are the eigenfunctions of the self-adjoint operator
\begin{align}\label{eq:SWSHOpe Schw}
\textbf{S}_m=\tfrac{1}{\sin\theta}\partial_{\theta}\sin\theta\partial_{\theta}
-\tfrac{m^2+2ms\cos\theta+s^2}{\sin^2\theta}
\end{align}
on $L^2(\sin\theta d\theta)$. These eigenfunctions, called as \textquotedblleft{\emph{spin-weighted spherical harmonics},\textquotedblright}  form a complete orthonormal basis on $L^2(\sin\theta d\theta)$ and have eigenvalues $-\Lambda_{m\ell}=-\ell(\ell+1)$ defined by
\begin{equation}
\label{eq:eigenvalueSWSHO}
\textbf{S}_mY_{m\ell}^{s}(\cos\theta)=-\Lambda_{m\ell}
Y_{m\ell}^{s}(\cos\theta).
\end{equation}
An integration by parts, together with a usage of Plancherel lemma and the orthonormality property of the basis $\left\{Y_{m\ell}^{s}(\cos\theta)e^{im\phi}\right\}_{m\ell}$, gives
\begin{align}\label{eq:IdenOfEigenvaluesAndAnguDeriSchw}
\sum_{m,\ell}\ell(\ell+1)|\psi_{m\ell}(t,r)|^2&
=\int_{0}^{\pi}\int_{0}^{2\pi}\left|\nablaslash \psi(t,r)\right|^2r^2\sin\theta d\phi d\theta.
\end{align}
Hence, the equation for $\psi_{m\ell}$ is now
\begin{align}\label{eq:SWRWReducedSchw}
r^4\Delta^{-1}
\partial_{tt}^2\psi_{m\ell}-\partial_r(\Delta\partial_r)\psi_{m\ell}
+(\ell(\ell+1)-2M/r)\psi_{m\ell}+F_{m\ell}=0.
\end{align}
In the case that the source term $F=0$, this is exactly the equation one gets after decomposing into spherical harmonics the solution to the classical Fackerell--Ipser equation \cite{fackerell:ipser:EM} on Schwarzschild:
\begin{equation}
\Sigma \Box_g \psi + 2M/r\psi=0.
\end{equation}

A proof for a nonnegative energy estimate is straightforward via performing integration by parts from multiplying \eqref{eq:SWRWReducedSchw} by $\partial_{t}\overline{\psi_{m\ell}}$, summing over $m$ and $\ell$, and  integrating in $(t^*,r)$-plane.
To obtain the Morawetz estimate, we choose a radial multiplier $(\hat{f}(r)\partial_r+\hat{q}(r))\overline{\psi_{m\ell}}$ with the choices of $\hat{f}$ and $\hat{g}$ taken from \cite{larsblue15hidden,Jinhua17LinGraSchw}. We postpone the detailed proof for this energy estimate and Morawetz estimate to Appendix \ref{appe:MoraFS1FIeqSchw} but emphasis here the fact that $\ell\geq |s|=1$ is essential in both proofs of Morawetz estimate and the nonnegativity of the energy. Together with a Morawetz estimate in large $r$ region in Section \ref{sect:MorawetzLarger} and red-shift estimate near horizon in Section \ref{sect:Redshift},
Theorem \ref{thm:MoraEstiAlmostScalarWave} is proved in the Schwarzschild case. Note that the energy and Morawetz estimates have been obtained in \cite{Fede2016MaxwellSchw} for the case of Schwarzschild background but under a double null foliation.

Now we apply Theorem \ref{thm:MoraEstiAlmostScalarWave} to equations \eqref{eq:ReggeWheeler Phi^01Kerr} of $\phi^0$ and $\phi^1$, which reduce in Schwarzschild spacetime to
\begin{subequations}\label{eq:ReggeWheeler Phi^01Schw}
\begin{align}
\label{eq:ReggeWheeler Phi^0}
\mathbf{L}_{+1}\phi^0 =&F^{0}_{+1}=\tfrac{2(r-3M)}{r^2}\phi^1,\\
\label{eq:ReggeWheeler Phi^1}
\mathbf{L}_{+1}\phi^1 =&0.
\end{align}
\end{subequations}
\begin{subequations}
Equation \eqref{eq:ReggeWheeler Phi^1} satisfied by $\phi^1$ is a decoupled equation of form \eqref{eq:spinweightedwaveeqAssumption} with $F=0$; hence we have from \eqref{eq:MorawetEnergyEstimateforAlmostScalarWave} that
\begin{align}\label{eq:Mora1order}
{E}_{\tau}(\phi^1)+{E}_{\mathcal{H}^+(0,\tau)}(\phi^1)
+\int_{\mathcal{D}(0,\tau)} \mathbb{M}_{\text{deg}}(\phi^1)\leq C{E}_{0}(\phi^1).
\end{align}
An application of estimate \eqref{eq:MorawetEnergyEstimateforAlmostScalarWave} to  \eqref{eq:ReggeWheeler Phi^0} implies
\begin{align}\label{eq:Mora0order}
{E}_{\tau}(\phi^0)+{E}_{\mathcal{H}^+(0,\tau)}(\phi^0)
+\int_{\mathcal{D}(0,\tau)} \mathbb{M}_{\text{deg}}(\phi^0)
\leq  C {E}_{0}(\phi^0)+C\mathcal{E}(F^{0}_{+1}).
\end{align}
\end{subequations}
Adding an $A$ multiple of estimate \eqref{eq:Mora1order} to \eqref{eq:Mora0order} with $A$ being a large constant to be chosen, we note that the left hand side (LHS) of the resulting inequality will bound over the integral of
\begin{equation}
|\partial_{r}\phi^0|^2/r^{1+\delta}
+|\phi^0|^2/r^{3}+|Y\phi^0|^2/r
\end{equation}
in the trapped region, hence over $\int_{\mathcal{D}(0,\tau)} c|\partial\phi^0|^2/r^{1+\delta}$ which does not degenerate in the trapped region through elliptic estimates,
$c=c(M)$ being a universal positive constant. Moreover, the error term $C\mathcal{E}(F^{0}_{+1})$ can be absorbed by the LHS through Cauchy--Schwarz inequality for $A$ large enough. This leads to an estimate
\begin{align}\label{eq:Mora01:Schw}
\sum_{i=0,1}\bigg({E}_{\tau}(\phi^i)
+{E}_{\mathcal{H}^+(0,\tau)}(\phi^i)
+\int_{\mathcal{D}(0,\tau)} \mathbb{M}_{\text{deg}}(\phi^i)\bigg)\lesssim {} \sum_{i=0,1}{E}_{0}(\phi^i).
\end{align}
Therefore, together with the Morawetz estimate \eqref{eq:MoraInftyr2minusdeltaphi0},  estimate \eqref{eq:MoraEstiFinal(2)KerrRegularpsiBothSpinComp} is proved for Schwarzschild case.

Let us turn to slowly rotating Kerr case. After applying a cutoff in time both to the future and to the past to the solution, performing Fourier transform in time and decomposing into spin-weighted spheroidal harmonics, we follow \cite{dafermos2010decay} to introduce the microlocal currents for the resulting radial equation and prove Theorem \ref{thm:MoraEstiAlmostScalarWave} by making use of the energy estimate and choosing frequency-dependent currents in different separated frequency regimes to achieve a Morawetz estimate.

To treat the system \eqref{eq:ReggeWheeler Phi^01Kerr}, we find that equation \eqref{eq:ReggeWheeler Phi^1Kerr} for $\phi^1$ is no longer decoupled due to its dependence on $\phi^0$.
When $|a|/M\ll 1$ is sufficiently small, however, the coupling effect with $\phi^0$ in \eqref{eq:ReggeWheeler Phi^1Kerr} is weak. This is the main observation for being able to go through the approach from Schwarzschild case to slowly rotating Kerr case.

There are some difficulties which arise from estimating the error terms $\mathcal{E}(F_{+1}^i)$ $(i=0,1)$ to overcome when applying Theorem \ref{thm:MoraEstiAlmostScalarWave} to system \eqref{eq:ReggeWheeler Phi^01Kerr} based on the above observation. One is the \textquotedblleft{trapping degeneracy\textquotedblright} in  estimates \eqref{eq:MorawetEnergyEstimateforAlmostScalarWave} for $\phi^0$ and $\phi^1$ in bounded trapped $r$ region. It turns out that this degeneracy in $\mathbb{M}_{\text{deg}}(\phi^0)$ can be removed as in Schwarzschild case after adding the separate estimates for $\phi^0$ and $\phi^1$ together,
and this enables us to estimate the error term $\mathcal{E}(F_{+1}^0)$ straightforwardly. However, the same argument does not hold for $\phi^1$, and the trapping degeneracy in $\mathbb{M}_{\text{deg}}(\phi^1)$ in the Morawetz estimate makes a direct estimate for $\mathcal{E}(F_{+1}^1)$ impossible. Estimates \eqref{eq:errortermnew}--\eqref{eq:ControlInTrappingRegionPosiSpin1} are utilized to control and absorb the error terms in bounded radius region by the left hand side (LHS) of a sum of estimates \eqref{eq:MorawetEnergyEstimateforAlmostScalarWave} for $\phi^0$ and $\phi^1$.

The other difficulty is in the large radius region, where we are met with a problem of an $r^{\delta}$ weight loss in $\mathbb{M}_{\text{deg}}(\phi^i)$ to control the error terms $\mathcal{E}(F_{+1}^1)$ directly.
Consider the error term $\int_{\mathcal{D}(0,t)}r^{-3+\delta}|F_{+1}^1|^2$ for example, which cannot be bounded by the LHS of the sum of estimates \eqref{eq:MorawetEnergyEstimateforAlmostScalarWave} for $\phi^0$ and $\phi^1$ because of the presence of $\partial_{\phi}\phi^0$ term in \eqref{eq:ReggeWheeler Phi^1Kerr}. The additional power $\delta$ in this error term is related to the degeneracy of Morawetz estimate in large $r$ region which, in our situation, is mainly exhibited by the $r^{\delta}$-weight loss in both $r$- and $t^*$-derivatives in \eqref{eq:MorawetEnergyEstimateforAlmostScalarWave}. It turns out that there is a damping effect on the RHS of \eqref{eq:TME0order} in the sense that the prefactor of $\partial_{t}$ derivative has the positive sign for $r$ large, and based on this fact, we improve the Morawetz estimate for $\phi^0_{+1}$ in large radius region such that the error term is manifestly controlled. The analysis for the spin $-1$ component is more subtle, in which
estimate \eqref{eq:estiphi0byphi1negafinal1} giving a better control over $|\partial_{\phi} \phi^0_{-1}|$ is derived by viewing definition \eqref{eq:DefOf Phi^0^1NegaSpin} of $\phi^{1}_{-1}$ as a transport equation of $\phi^0_{-1}$.

Here is an overview of this paper. In Section \ref{sect:PrelimandNotations}, we collect a (well-known) well-posedness theorem for a general linear wave system, give some definitions and introduce further notations.
A proof of Theorem \ref{thm:MoraEstiAlmostScalarWave} is given in Section \ref{Sect:ProofofTheoremOnSWRWKerr}.  Afterward, in Section \ref{sect:spin1case}, we make use of Theorem \ref{thm:MoraEstiAlmostScalarWave} and complete the proof of Theorem \ref{thm:EneAndMorEstiExtremeCompsNoLossDecayVersion2} for two regular extreme N--P components of Maxwell field.

\section{Preliminaries and Further Notations}\label{sect:PrelimandNotations}

\subsection{Well-posedness theorem}\label{sect:LWPandGlobalExistenceLinearWaveSystem}
We state here a well-posedness (WP) theorem for a general system of linear wave equations, cf. \cite[Chapter 3.2]{bar2007wave}. Due to the fact that smooth initial data which vanish near spatial infinity can be approached by smooth, compactly supported data, we restrict our considerations on initial data which are smooth and of compact support on initial hypersurface $\Sigma_0$.
\begin{prop}\label{prop:LWP LinearWaveEqSystem}
For any $1\leq n\in \mathbb{N}^+$ and $0\leq |a| <M$, let $\Sigma_0$ be an initial spacelike hypersurface defined as in \eqref{def:constanttimeHypersurface} in the DOC of a Kerr spacetime $(\mathcal{M},g_{M,a})$, and let $\varphi_0^i$, $\varphi_1^i$ be compactly supported smooth sections in the vector bundle $\mathbf{E}$ over the manifold $\mathcal{D}$, $i=1,2,\ldots,n$. Then  there exists a unique $\varphi=\left(\varphi^i\right)_{i=1,2,\ldots, n}$, with $\varphi^i\in C^{\infty}(D^+(\Sigma_0)\cap \mathcal{D},\mathbf{E})$, to the system of linear wave equations
\begin{equation}
\left\{
\begin{array}{ll}
\mathbf{L}_{\varphi}\varphi=0\\
\varphi^i|_{\Sigma_0}=\varphi_0^i,\ \ n_{\Sigma_0}^{\mu}\partial_{\mu}\varphi^i|_{\Sigma_0}=\varphi_1^i, \ \ \forall i=1,2,\ldots,n.\\
\end{array}
\right.
\end{equation}
Here, $D^+(\Sigma_0)$ is the future domain of dependence of $\Sigma_0$, $\mathbf{L}_{\varphi}$ is a linear wave operator for $\varphi$ and $n_{\Sigma_0}=n_{\Sigma_0}^{\mu}\partial_{\mu}$ is the future-directed unit normal vector field of initial hypersurface $\Sigma_0$.
Moreover, $\varphi$ is continuously dependent on the initial data $(\varphi_0, \varphi_1)$ and $C^{\infty}$-dependent on the parameter $a$, i.e., the map
\begin{equation}
(\varphi_0, \varphi_1)\times a\mapsto \varphi
\end{equation}
is a $C^0\times C^{\infty}$ map. By finite speed of propagation, the solution $\varphi$ will be smooth and compactly supported on each $\Sigma_{\tau}$ for $\tau\geq 0$.
\end{prop}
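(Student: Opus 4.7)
\medskip

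\noindent\textbf{Proof proposal.} The plan is to reduce the statement to the standard well-posedness theory for linear symmetric hyperbolic systems (equivalently, normally hyperbolic operators) on globally hyperbolic Lorentzian manifolds, cf.\ B\"ar--Ginoux--Pf\"affle, Chapter~3. The first step is geometric: although $\Sigma_0\subset\mathcal{D}$ is not a Cauchy hypersurface of the maximally extended Kerr spacetime, the region $\mathcal{D}^+:=D^+(\Sigma_0)\cap\mathcal{D}$, foliated by the level sets $\Sigma_\tau$ of $t^*$, is globally hyperbolic with Cauchy surface $\Sigma_0$. Indeed \eqref{eq:initialtimehypersurfacegradientmodular} shows $\nabla t^*$ is uniformly timelike on every $\Sigma_\tau$, and the future horizon $\mathcal{H}^+$ enters the interior of $\mathcal{D}^+$ rather than forming a Cauchy boundary, so the classical theory applies directly on $\mathcal{D}^+$.

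Next I would establish existence and uniqueness. Uniqueness follows from a standard energy inequality: for any smooth solution $\varphi$ with compactly supported data one forms the stress-energy current associated with the principal part (the Kerr metric) and the auxiliary timelike vector $\partial_{t^*}$, integrates over the lens-shaped domain between $\Sigma_0$ and $\Sigma_\tau$ truncated by the causal past of a compact set, and applies Gr\"onwall to the lower-order couplings, which are bounded coefficients on compact subsets of $\mathcal{D}^+$. Existence is obtained by first solving in a strip $[0,\tau_0]$ via the standard parametrix / Picard iteration for symmetric hyperbolic systems (after writing $\mathbf{L}_\varphi\varphi=0$ in first-order form using $(\varphi^i,\partial_{t^*}\varphi^i,\partial_r\varphi^i,\nablaslash_k\varphi^i)$) and then iterating in $\tau$. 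Finite speed of propagation, which is a direct consequence of the energy inequality localized to truncated light cones, preserves compact support of the data on every slice $\Sigma_\tau$. Smoothness of $\varphi$ follows by commuting arbitrarily many copies of $\partial_{t^*}$, $\partial_r$, $\nablaslash_k$ with $\mathbf{L}_\varphi$ and re-running the energy estimate for the commuted system, whose coefficients are smooth on $\mathcal{D}^+$.

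For continuous dependence on the initial data and $C^\infty$ dependence on $a$, I would note that the previous energy estimate, being linear, controls differences of solutions by the corresponding differences of Cauchy data, giving continuity. For the $a$-dependence, the family $\mathbf{L}_{\varphi}=\mathbf{L}_\varphi(a)$ depends smoothly on $a\in[-a_0,a_0]$, and differentiating the equation formally in $a$ yields an inhomogeneous wave system of the same principal type for $\partial_a^k\varphi$, with source built from lower $a$-derivatives. An induction on $k$, combined with the energy estimate and the existence result for the inhomogeneous problem on $\mathcal{D}^+$, produces smooth $a$-derivatives of $\varphi$ to all orders and establishes the $C^\infty$ dependence.

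The main obstacle is not the PDE theory itself, which is standard, but handling the boundary of $\mathcal{D}^+$ at the future event horizon $\mathcal{H}^+$ and the asymptotic end. One must verify that the global Kerr coordinate system makes all coefficients of $\mathbf{L}_\varphi$ (expressed in spin-weighted form) and the vector bundle structure $\mathbf{E}$ smooth up to and including $\mathcal{H}^+$, so that the B\"ar--Ginoux--Pf\"affle framework applies without degeneration there; this is exactly the reason the coordinate system \eqref{def:globalkerrcoord} and the regular H-H tetrad \eqref{null rotation} were introduced. At the asymptotic end one never needs a boundary condition because finite speed of propagation combined with the compact support of the Cauchy data keeps $\supp\varphi(\tau,\cdot)$ compact in every $\Sigma_\tau$; it is this fact that upgrades a purely local existence statement to the global statement in $\mathcal{D}^+$ claimed in the proposition.
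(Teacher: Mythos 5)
The paper itself gives no proof of this proposition: it is stated as a well-known result and simply referred to B\"ar--Ginoux--Pf\"affle, Chapter~3.2. Your proposal is a correct sketch of exactly the argument in that reference (global hyperbolicity of $D^+(\Sigma_0)\cap\mathcal{D}$, reduction to a first-order symmetric hyperbolic system, energy estimates for uniqueness and continuous dependence, local existence plus iteration in $\tau$, finite speed of propagation to preserve compact support, commuting derivatives for smoothness, and induction on $\partial_a^k$ for smooth parameter dependence), so it takes essentially the same route the paper intends by its citation.

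One small point worth tightening: when you invoke global hyperbolicity of $\mathcal{D}^+=D^+(\Sigma_0)\cap\mathcal{D}$, the relevant fact is not only that $\nabla t^*$ is uniformly timelike on each $\Sigma_\tau$, but that the future event horizon $\mathcal{H}^+$ is a null boundary through which no causal curve can re-enter $\mathcal{D}^+$, so no boundary condition is needed there; the existence and uniqueness theory for normally hyperbolic operators on globally hyperbolic manifolds then applies verbatim, with $\Sigma_0$ a Cauchy surface for $\mathcal{D}^+$. Also, for the $C^\infty$ dependence on $a$, it is cleaner to observe that $\mathbf{L}_\varphi(a)$ has coefficients depending smoothly and jointly on $(a,x)$ (in global Kerr coordinates, uniformly up to $\mathcal{H}^+$), so the standard smooth-parameter version of the Cauchy theorem for linear hyperbolic systems gives the joint smoothness directly, without needing to set up the inductive inhomogeneous problems by hand — though your induction also works.
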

We apply this WP theorem to the linear wave systems \eqref{eq:ReggeWheeler Phi^01Kerr} and \eqref{eq:ReggeWheeler Phi^01KerrNega} of $\varphi=(\phi^0,\phi^1)$ and ensure the existence and uniqueness of the solution for any given compactly supported smooth initial data.

\subsection{Regular and integrable}
\begin{definition}\label{def:regularandintegrable}
\begin{itemize}
\item
A two-form $\mathbf{F}_{\alpha\beta}$ to the Maxwell equations \eqref{eq:MaxwellEqs} is called a \textbf{regular} solution if all components in global Kerr coordinates are smooth in the exterior of the Kerr spacetime and admit a smooth extension to the closure of the exterior region in the maximal analytic extension and the field vanishes near spatial infinity up to a charged stationary Coulomb solution $\mathbf{F}^{\text{charge}}_{\alpha\beta}$ \footnote{The electric charge $\frac{1}{4\pi}\int_{\mathbb{S}^2(t,r)} \mathbf{F}$ and the magnetic charge $\frac{1}{4\pi}\int_{\mathbb{S}^2(t,r)}\leftidx^{\star} \mathbf{F}$ are conserved constants since both $\mathbf{F}$ and its Hodge dual $\leftidx^{\star}\mathbf{F}$ are closed two forms.
Cf. \cite[Sect.3]{larsblue15Maxwellkerr}.}
to the Maxwell equations with $\widetilde{\Phi_0}(\mathbf{F}^{\text{charge}})
=\widetilde{\Phi_2}(\mathbf{F}^{\text{charge}})=0$ and $$\widetilde{\Phi_1}(\mathbf{F}^{\text{charge}})=\frac{1}{2\sqrt{2}\rho^2}
\bigg(\frac{1}{4\pi}\int_{\mathbb{S}^2(t,r)}\leftidx^{\star} \mathbf{F}+i\frac{1}{4\pi}\int_{\mathbb{S}^2(t,r)} \mathbf{F}\bigg).$$
\item
  A solution $\psi=\phi^0_s$ or $\phi^1_s$, which is defined as in \eqref{eq:DefOf Phi^0^1} for a regular Maxwell field, is called an \textbf{integrable} solution to the inhomogeneous SWFIE \eqref{eq:spinweightedwaveeqAssumption} if for every integer $n\geq 0$, every multi-index $0\leq |i|\leq n$ and any $\check{r}_0>r_+$, we have
    \begin{align}\label{eq:integrabledefinition}
    \sum_{0\leq |i|\leq n}\int_{\mathcal{D}(-\infty,\infty)\cap \{r=\check{r}_0\}}\left(\left|\partial^i \psi\right|^2+\left|\partial^i F\right|^2\right)<\infty.
    \end{align}
    Here, we recall in \eqref{def:multiindexderi} the definition of $\partial^i \psi$.
\end{itemize}
\end{definition}

\subsection{Generic constants and general rules}

Universal constants $C$ and $c$, depending only on $\veps_0$, $M$ and $\delta$, are always understood as large and small positive constants, respectively, and may change from line to line throughout this paper based on the algebraic rules: $C+C=C$, $CC=C$, $Cc=C$, etc. When there is no confusion, the dependence on $M$, $\veps_0$ and $\delta$ may always be suppressed. We may fix the value of $\delta\in (0,1/2)$ once for all, and once the constant $\veps_0$ in Theorems \ref{thm:MoraEstiAlmostScalarWave} and \ref{thm:EneAndMorEstiExtremeCompsNoLossDecayVersion2} is chosen, these universal constants can be made to be only dependent on $M$.

$G_1\lesssim G_2$ for two functions $G_1$ and $G_2$ indicates that there exists a universal constant $C$ such that $G_1\leq CG_2$  holds true everywhere. If $G_1\lesssim G_2$ and $G_2\lesssim G_1$, we denote $G_1\sim G_2$ and say that \emph{$G_1$ is equivalent to $G_2$}.

The dependence of constants on other additional objects will be specified for example as $c(\omega_1)$. $\varepsilon$ and $\epsilon$ are potentially small parameters.

The standard Laplacian on unit $2$-sphere $\mathbb{S}^2$ is denoted as $\triangle_{\mathbb{S}^2}$, and the volume form $d\sigma_{\mathbb{S}^2}$ of unit $2$-sphere is $\sin\theta d\theta d\phi^*$ or $\sin\theta d\theta d\phi$ depending on which coordinate system is used.

Some cutoff functions will be used in this paper. Denote $\chi_R(r)$ to be a smooth cutoff function which is $1$ for $r\geq R$ and vanishes identically for $r\leq R-1$, and $\chi_0(r)$ a smooth cutoff function which equals to $1$ for $r\leq r_0$ and is identically zero for $r\geq r_1$. See Section \ref{sect:Redshift} for the choices of $r_0$ and $r_1$. The function $\chi$ is a smooth cutoff both to the future time and to the past time, which will be applied to the solution in Section \ref{sect:MoraEstiCurrentsKerr}.

An overline or a bar will always denote the complex conjugate; $\Re(\cdot)$ denotes the real part, and for brevity, we refer to \textquotedblleft{left-hand side(s)\textquotedblright} and \textquotedblleft{right-hand side(s)\textquotedblright} as \textquotedblleft{LHS\textquotedblright} and \textquotedblleft{RHS,\textquotedblright} respectively.

Throughout this paper, whenever we talk about choosing some multiplier for some equation, it should always be understood as multiplying the equation by the multiplier, performing integration by parts, taking the real part and finally integrating in some spacetime region in global Kerr coordinate system with respect to the measure $\Sigma dt^*drd\theta d\phi^*$.

\section{Proof of Theorem \ref{thm:MoraEstiAlmostScalarWave}}\label{Sect:ProofofTheoremOnSWRWKerr}

In this section, we give the proof of Theorem \ref{thm:MoraEstiAlmostScalarWave}.

\subsection{Morawetz estimate for large $r$}\label{sect:MorawetzLarger}

All the $R_0$ in Propositions \ref{prop:ImprovedMoraEstiLargerSWRWE} and \ref{prop:ImprEnergyMoraForr2minusdeltaphi0} are \textit{a priori} different, and we take the maximal $R_0$ among them such that the estimates hold true uniformly and still denote as $R_0$.

We rewrite the inhomogeneous SWFIE \eqref{eq:spinweightedwaveeqAssumption} into the following form
\begin{align}\label{eq:RewrittenFormofSWRWE}
\Sigma\widetilde{\Box}_g\psi
\triangleq{}&\left\{\partial_r(\Delta\partial_r)
-\tfrac{\left((r^2+a^2)\partial_t+a\partial_{\phi}\right)^2}{\Delta}\right.\notag\\
&\left.\quad+
\tfrac{1}{\sin \theta}\tfrac{d}{d\theta}\left(\sin \theta \tfrac{d}{d\theta}\right)+\left(\tfrac{\partial_{\phi}+is\cos \theta }{\sin \theta}+a\sin \theta\partial_{t}\right)^2\right\}\psi\notag\\
={}&
\left(4ias\cos\theta\partial_t+\tfrac{r^4-2Mr^3+6a^2Mr-a^4}{(\R)^2}\right)\psi+F,
\end{align}
such that $\Sigma\widetilde{\Box}_g$ is the same as the rescaled scalar wave operator $\Sigma \Box_g$, except for $(\tfrac{\partial_{\phi}}{\sin \theta}+is\cot \theta +a\sin \theta\partial_{t})^2$ in place of the operator $(\tfrac{\partial_{\phi}}{\sin \theta}+a\sin \theta\partial_{t})^2$ in the expansion of $\Sigma \Box_g$.

Recall (see, e.g., \cite{dafermos2010decay})  that for each $0<\delta<1/2$ there exist constants $R_0\gg 4M$ and $c=c(\delta)$ such that for all $R\geq R_0$, one can choose a multiplier
\begin{equation}
X_w\bar{\psi}=-\tfrac{1}{\Sigma}\left(f(r)\partial_{r^*}
+\tfrac{1}{4}w(r)\right)\bar{\psi}
\end{equation}
for the rescaled inhomogeneous scalar wave equation
\begin{equation}\label{eq:InhomoRescaledScalarWaveEq}
\Sigma \Box_g\psi=G
\end{equation}
on a subextremal Kerr background
and obtain the following Morawetz estimate in large $r$ region for any $\tau_2>\tau_1\geq 0$:
\begin{align}\label{eq:MoraLargerScalarWaveKerr}
\hspace{4ex}&\hspace{-4ex}c\int_{\mathcal{D}(\tau_1,\tau_2)\cap\{r\geq R\}}
\bigg\{\frac{|\partial_{r^*}\psi|^2}{r^{1+\delta}}+\frac{|\partial_t \psi|^2}{r^{1+\delta}}+\frac{|\check{\nablaslash}\psi|^2}{r}+
\frac{|\psi|^2}{r^{3+\delta}}\bigg\}\notag\\
\leq{} &\check{E}^{R-M}_{\tau_1}(\psi)+\check{E}^{R-M}_{\tau_2}(\psi)
+\int_{\mathcal{D}(\tau_1,\tau_2)\cap\{R-M\leq r\leq R\}}\left(|\check{\partial} \psi|^2+|\psi|^2\right)\notag\\
&+\int_{\mathcal{D}(\tau_1,\tau_2)}\Re\left(G\cdot X_w\bar{\psi}\right).
\end{align}
Here, $f=\chi_R(r)\cdot(1-M^{\delta}r^{-\delta})$,  $w=2\partial_{r^*}f+4\tfrac{1-2M/r}{r}f-2\delta M^{\delta}\tfrac{1-2M/r}{r^{1+\delta}}f$,
$\check{\nablaslash}$ are the standard covariant angular derivatives on sphere $\mathbb{S}^2(t,r)$ as in \eqref{SpinWeightedAngularDerivaBasisOnSphere}, and
\begin{equation*}
\check{E}^{R-M}_{\tau}(\psi)= \int_{\Sigma_{\tau}\cap\{r\geq R-M\}}\left|\check{\partial} \psi\right|^2= \int_{\Sigma_{\tau}\cap\{r\geq R-M\}}( \left|\partial_{t^*}\psi\right|^2
+\left|\partial_r\psi\right|^2
+|\check{\nablaslash}\psi|^2).
\end{equation*}

Since the difference between the operator $(\tfrac{\partial_{\phi}}{\sin \theta}+is\cot \theta +a\sin \theta\partial_{t})^2$ \footnote{Note that the lower order term $\cot \theta$ is not a singularity. In fact, $(\frac{\partial_{\phi}}{\sin\theta}+is\cot \theta)^2$ should be viewed as part of a spin-weighted spherical Laplacian. This issue occurs at other places as well, and the reason is the same as here and hence omitted.} in a spin-weighted wave operator $\Sigma\widetilde{\Box}_g$ and $(\tfrac{\partial_{\phi}}{\sin \theta}+a\sin \theta\partial_{t})^2$ in the expansion of $\Sigma \Box_g$ has terms with coefficients independent of $r$, we will achieve the same type of Morawetz estimate in large $r$ region by utilizing the same multiplier $X_w\bar{\psi}$, with $|\nablaslash\psi|^2-|\psi|^2/r^2$ and $|\partial \psi|^2-|\psi|^2/r^2$ in place of $|\check{\nablaslash}\psi|^2$ and $|\check{\partial} \psi|^2$, $E(\psi)$ replacing $\check{E}(\psi)$, and a substitution of
\begin{equation}\label{eq:sourcetermMoraLarger}
G=\left(4ias\cos\theta\partial_t
+\tfrac{r^4-2Mr^3+6a^2Mr-a^4}{(\R)^2}\right)\psi+F
\end{equation}
in \eqref{eq:MoraLargerScalarWaveKerr}.
The bulk term coming from the source term \eqref{eq:sourcetermMoraLarger} in \eqref{eq:MoraLargerScalarWaveKerr} is then
\begin{align}
\hspace{4ex}&\hspace{-4ex}
\int_{\mathcal{D}(\tau_1,\tau_2)}\frac{-1}{\Sigma}\Re\Big(\left(\left(
4ias\cos\theta\partial_t+\tfrac{r^4-2Mr^3+6a^2Mr-a^4}{(\R)^2}\right)\psi+F\right)
{(f\partial_{r^*}+\tfrac{1}{4}w)\bar{\psi}}\Big)\notag\\
\leq {}&\int_{\mathcal{D}(\tau_1,\tau_2)\cap[R-M,\infty)}\bigg(\frac{C|\partial\psi|^2}{r^2}
+C\Re\left(FX_w\bar{\psi}\right)-\frac{c|\psi|^2}{r^3}\bigg)\notag\\
&+\int_{\mathcal{D}(\tau_1,\tau_2)}
-\Re\Big(\tfrac{r^4-2Mr^3+6a^2Mr-a^4}{\Sigma(\R)^2}\psi
f\partial_{r^*}\bar{\psi}\Big).
\end{align}
Taking into account the factor $\Sigma$ in the volume form, an integration by parts then shows that the last integral term is non-positive for $R$ large enough.
Therefore, we conclude with the following Morawetz estimate in large $r$ region for the inhomogeneous SWFIE \eqref{eq:spinweightedwaveeqAssumption}.
\begin{prop}\label{prop:ImprovedMoraEstiLargerSWRWE}
In a subextremal Kerr spacetime $(\mathcal{M},g_{M,a})$, for any fixed $0<\delta<\half$, there exist constants $R_0$ and $C=C(\delta)$ such that for all $R\geq R_0$, the following estimate holds for any solution $\psi$ to the inhomogeneous SWFIE \eqref{eq:spinweightedwaveeqAssumption} for any $\tau_2>\tau_1\geq 0$:
\begin{align}\label{eq:ImprovedMoraEstiLargerSWRWE}
\int_{\mathcal{D}(\tau_1,\tau_2)\cap\{r\geq R\}}
\mathbb{M}(\psi)
\leq{} &
C({E}^{R-M}_{\tau_1}(\psi)
+{E}^{R-M}_{\tau_2}(\psi))\notag\\
&
+C\int_{\mathcal{D}(\tau_1,\tau_2)\cap\{R-M\leq r\leq R\}}|\partial \psi|^2\notag\\
&+C\bigg|\int_{\mathcal{D}(\tau_1,\tau_2)\cap\{r\geq R-M\}}\Re\left(F X_w\bar{\psi}\right)\bigg|,
\end{align}
where
\begin{align}
{E}^{R-M}_{\tau}(\psi)= \int_{\Sigma_{\tau}\cap\{r\geq R-M\}}\left|{\partial} \psi\right|^2.
\end{align}
\end{prop}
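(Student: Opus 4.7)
The plan is to reduce the desired estimate to the known Morawetz estimate \eqref{eq:MoraLargerScalarWaveKerr} for the inhomogeneous rescaled scalar wave equation on a slowly rotating Kerr background, cf.\ \cite{dafermos2010decay}. Using the identity \eqref{eq:RewrittenFormofSWRWE}, I regard the SWFIE as $\Sigma\widetilde{\Box}_g\psi = G$ with $G = (4ias\cos\theta\,\partial_t + (\Delta+a^2)/r^2)\psi + F$, and apply the same multiplier $X_w\bar\psi = -\tfrac{1}{\Sigma}(f\partial_{r^*}+\tfrac{w}{4})\bar\psi$ with $f=\chi_R(r)(1-r^{-\delta})$ and $w$ as above.

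The first step is to verify that passing from $\Box_g$ to $\widetilde{\Box}_g$ does not degrade the large-$r$ estimate. The two operators differ only in the angular factor: $(\tfrac{\partial_\phi}{\sin\theta}+is\cot\theta+a\sin\theta\,\partial_t)^2$ versus $(\tfrac{\partial_\phi}{\sin\theta}+a\sin\theta\,\partial_t)^2$. Expanding the squares, the additional cross terms are products of $is\cot\theta$ with first-order angular operators and have coefficients independent of $r$; similarly, the terms in $\Delta^{-1}((r^2+a^2)\partial_t+a\partial_\phi)^2$ carrying a $\partial_\phi$-factor have coefficients of lower order in $r$. When paired with $X_w\bar\psi$, these produce bulk contributions bounded by $C|a|^2|\partial\psi|^2/r^2$, absorbable into the principal Morawetz density once $R$ is large enough. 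The net outcome is an analogue of \eqref{eq:MoraLargerScalarWaveKerr} with $|\check{\nablaslash}\psi|^2$ replaced by the spin-weighted $|\nablaslash\psi|^2$, $\check{E}$ by $E$, and $|\check{\partial}\psi|^2$ by $|\partial\psi|^2$.

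It remains to bound the source term $\int \Re(G\cdot X_w\bar\psi)$. The zeroth-order piece $\frac{\Delta+a^2}{r^2}\psi$, paired with $f\partial_{r^*}\bar\psi/\Sigma$ and integrated by parts in $r^*$, yields $-\partial_{r^*}\!\bigl(f\,(\Delta+a^2)/(r^2\Sigma)\bigr)|\psi|^2/2$; for $R$ large the asymptotic $(\Delta+a^2)/r^2 \sim 1 - 2M/r + \ldots$ combined with the shape of $f$ gives a uniformly negative contribution bounded by $-c\int r^{-3}|\psi|^2$, which is absorbable on the left. The first-order piece $4ias\cos\theta\,\partial_t\psi$ produces a bulk term bounded via Cauchy–Schwarz by $C|a|^2\int r^{-2}|\partial\psi|^2$, again absorbable into the principal density for $R$ sufficiently large. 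The remaining $F$-contribution $|\int_{\{r\geq R-1\}}\Re(F\, X_w\bar\psi)|$ is kept on the right.

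The main obstacle I expect is the careful bookkeeping of boundary terms from the integrations by parts. All derivatives of $\chi_R$ are supported on $\{R-1\leq r\leq R\}$, a strip where the Morawetz bulk density need not have a definite sign; the contribution there is absorbed into the term $C\int_{\mathcal{D}(\tau_1,\tau_2)\cap\{R-1\leq r\leq R\}}|\partial\psi|^2$ on the right-hand side. Time-slice boundary terms feed into $E_{\tau_1}(\psi)+E_{\tau_2}(\psi)$, and horizon boundary contributions vanish identically since $f\equiv 0$ for $r\leq R-1$. A secondary delicacy is ensuring the Schwarzschild-part positivity of the bulk is not destroyed by $a$-corrections, which follows by enforcing $|a|\leq a_0\ll M$ and $R\geq R_0(M)$ sufficiently large.
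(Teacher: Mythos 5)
Your argument follows essentially the same route as the paper's: rewrite the inhomogeneous SWFIE as $\Sigma\widetilde{\Box}_g\psi=G$ with $G=(4ias\cos\theta\,\partial_t+(\Delta+a^2)/r^2)\psi+F$, reuse the large-$r$ multiplier $X_w\bar\psi$ from the scalar-wave estimate \eqref{eq:MoraLargerScalarWaveKerr} of \cite{dafermos2010decay}, note that the modified angular factor is handled by replacing $|\check{\nablaslash}\psi|^2$ with the spin-weighted $|\nablaslash\psi|^2$, bound the zeroth-order potential $(\Delta+a^2)/r^2$ and the spin term $4ias\cos\theta\,\partial_t$ by (respectively) a good-sign $-c\int r^{-3}|\psi|^2$ term and an $O(a^2)$ derivative term absorbable for $R$ large, and keep the $F$-contribution on the right. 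That is exactly what the paper does.

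One imprecision worth flagging: you claim the additional cross terms from the angular factor (those involving $is\cot\theta$) produce bulk contributions bounded by $C|a|^2|\partial\psi|^2/r^2$. Those terms carry no factor of $a$ and are not small in $a$; they also scale like $r^{-2}|\psi|^2$ and cannot be absorbed by the Morawetz density $\mathbb{M}(\psi)$. They are instead part of the spin-weighted angular Laplacian and get absorbed into the definition of $|\nablaslash\psi|^2$ (the Morawetz bulk with $\check{\nablaslash}$ replaced by $\nablaslash$), which you do state as the "net outcome." Only the $a\partial_\phi$-corrections in $\Delta^{-1}((r^2+a^2)\partial_t+a\partial_\phi)^2$, and the first-order spin term $4ias\cos\theta\,\partial_t$, give genuine $O(a^2)$ errors that are absorbed for large $R$. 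You should also note, as the paper does, that since $|\nablaslash\psi|^2$ already contains $|\psi|^2/r^2$, the replacement for $|\check{\partial}\psi|^2$ is $|\partial\psi|^2-|\psi|^2/r^2$, not simply $|\partial\psi|^2$, to avoid double counting. These are small points; the proof is correct and coincides with the paper's.
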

We derive an improved Morawetz estimate near infinity for $\phi^0_{+1}$ as follows.
\begin{prop}\label{prop:ImprEnergyMoraForr2minusdeltaphi0}
In a subextremal Kerr spacetime $(\mathcal{M},g_{M,a})$, for any fixed $0<\delta<\half$, there exist universal constants $R_0$ and $C=C(\delta)$ such that for all $R\geq R_0$, the following estimate holds for any $\tau_2>\tau_1\geq 0$:
\begin{align}\label{eq:MoraInftyr2minusdeltaphi0}
\hspace{4ex}&\hspace{-4ex}\int_{\Sigma_{\tau_2}\cap [R,+\infty)}\left|\partial(r^{2-\delta}{\phi}^{0}_{+1})\right|^2
+\int_{\mathcal{D}(\tau_1,\tau_2)\cap [R,\infty)}r^{-1}\left|\partial (r^{2-\delta}{\phi}^{0}_{+1})\right|^2\notag\\
\leq{}&
C\int_{\Sigma_{\tau_2}\cap [R-M,R)}\left|\partial (r^{2-\delta}{\phi}^{0}_{+1})\right|^2
+C\int_{\Sigma_{\tau_1}\cap [R-M,+\infty)}\left|\partial (r^{2-\delta}{\phi}^{0}_{+1})\right|^2\notag\\
&+C\int_{\mathcal{D}(\tau_1,\tau_2)\cap [R-M,R)}r^{-1}\left|\partial (r^{2-\delta}{\phi}^{0}_{+1})\right|^2.
\end{align}
\end{prop}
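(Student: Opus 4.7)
The plan is to reduce the statement to a tailored multiplier computation for the conjugated variable $\tilde\phi := r^{2-\delta}\phi^0_{+2}$ in the exterior far region $r\geq R-1$, exploiting the positive-sign damping effect on the RHS of \eqref{eq:TME0order} that is flagged in Section \ref{sect:outlineproof} just before the statement. First I would derive the governing wave-type equation for $\tilde\phi$ by substituting $\phi^0_{+2} = r^{-(2-\delta)}\tilde\phi$ into the inhomogeneous SWFIE $\mathbf{L}_{+2}\phi^0_{+2}$ (or the equivalent form \eqref{eq:RewrittenFormofSWRWE}). The principal symbol is preserved, and direct bookkeeping shows that the commutator of $r^{2-\delta}$ with $\Sigma\widetilde\Box_g$ contributes a first order term of the schematic form $\alpha(r)\partial_t\tilde\phi + \beta(r)\partial_r\tilde\phi + V(r)\tilde\phi$, where $V(r) = O(r^{-2})$, $\beta(r) = O(1)$ with favourable sign, and most importantly $\alpha(r)$ is strictly positive for $r\geq R$ once $R$ is chosen large enough, this being exactly the damping structure advertised for the positive spin component.

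Next I would apply on the equation for $\tilde\phi$ the multiplier from Proposition \ref{prop:ImprovedMoraEstiLargerSWRWE}, but with $f=\chi_R(r)$ and $w$ readjusted accordingly (that is, without the $1-r^{-\delta}$ factor that produces the degenerate Morawetz density). The standard scalar-wave part of the multiplier identity then produces a bulk term of the form
\begin{equation}
\int_{\mathcal{D}(\tau_1,\tau_2)\cap[R,\infty)} r^{-1}\left(|\partial_{r^*}\tilde\phi|^2 + |\nablaslash\tilde\phi|^2 + |\tilde\phi|^2/r^2\right) - c\int_{\mathcal{D}(\tau_1,\tau_2)\cap[R,\infty)} r^{-1}|\partial_t\tilde\phi|^2,
\end{equation}
the last term being the well known obstruction that ordinarily forces an $r^{-\delta}$ degeneracy in $\partial_t$. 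I would then absorb this obstructive term using the damping coefficient $\alpha(r)\partial_t\tilde\phi$ paired with the radial piece $f\partial_{r^*}\bar{\tilde\phi}$ of the multiplier: after integration by parts in $r^*$ this pairing yields a positive bulk contribution of the form $\int \tfrac{1}{2}(\partial_{r^*}(f\alpha)) r^{-1}|\partial_t\tilde\phi|^2$, and a check of the leading $r$-asymptotics of $\alpha$ and $f$ confirms that this contribution dominates the unfavourable term for $R$ sufficiently large, producing a nondegenerate Morawetz density with weight $r^{-1}$.

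For the $\Sigma_{\tau_2}$ boundary energy I would additionally use the energy multiplier $\partial_t\bar{\tilde\phi}$ on the same equation. This yields $\int_{\Sigma_{\tau_2}\cap[R,\infty)} |\partial \tilde\phi|^2$ on the LHS modulo (i) the manifestly positive null flux on $\{r\to\infty\}$ which is discarded, (ii) the initial energy on $\Sigma_{\tau_1}\cap[R-1,\infty)$, and (iii) a commutator term involving the damping coefficient $\alpha(r)\partial_t\tilde\phi\cdot\partial_t\bar{\tilde\phi}$ which has a definite sign and is of lower order, absorbable into the Morawetz density already produced. The cutoff $\chi_R$ confines all error contributions arising from the transition region to $\int_{\Sigma_{\tau_2}\cap[R-1,R)} |\partial\tilde\phi|^2$, matching the RHS of \eqref{eq:MoraInftyr2minusdeltaphi0}.

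The main obstacle I expect is verifying that the damping coefficient $\alpha(r)$ produced by conjugation with $r^{2-\delta}$ combined with the subleading structure of the original source $F^0_{+2}$ is pointwise strictly positive and of the correct size (behaving like a positive multiple of $r^{-1}$ at infinity) to cancel, rather than merely dominate after smallness, the bad $|\partial_t\tilde\phi|^2$ contribution from the classical Morawetz identity. The specific power $2-\delta$ matters here: the $r^2$ piece is the conformal weight needed to see the radiation field, while the $-\delta$ fine-tunes the lower-order terms so that all correction potentials $V(r)$ generated by conjugation fall off fast enough to be absorbed by $r^{-3}|\tilde\phi|^2$ from the Hardy-type inequality implicit in the multiplier identity. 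Pinning down these constants is what fixes $R_0 = R_0(M,\delta)$.
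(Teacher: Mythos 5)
Your overall strategy (conjugate by $r^{2-\delta}$, then exploit the positive damping that the source term produces for the spin $+1$ component in the far region) is the right one and matches the paper's, but the step by which you recover the nondegenerate $|\partial_t\tilde\phi|^2/r$ bulk is wrong. The pairing of the damping term $\alpha(r)\partial_t\tilde\phi$ with the radial piece $f\partial_{r^*}\bar{\tilde\phi}$ of the multiplier produces the bulk integrand $\Re\bigl(\alpha f\,\partial_t\tilde\phi\,\partial_{r^*}\bar{\tilde\phi}\bigr)$, which is a genuine off-diagonal (cross) term: it is sign-indefinite, it is not a total $r^*$-derivative of anything quadratic in $\partial_t\tilde\phi$, and no integration by parts converts it into $\tfrac12\partial_{r^*}(f\alpha)|\partial_t\tilde\phi|^2$. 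The term that actually yields the missing $|\partial_t\tilde\phi|^2/r$ is precisely the one you set aside as "lower order and absorbable", namely the pairing of the damping with the energy multiplier, $\alpha(r)|\partial_t\tilde\phi|^2$ with $\alpha\sim c/r>0$: that \emph{is} the nondegenerate Morawetz density in $\partial_t$, not an error term. The paper implements this cleanly by taking the multiplier to be (a cutoff times) the damping vector field itself, $X_0\approx(2-\delta)V+\delta\partial_t$ in \eqref{eq:ImprEqForr2minusdeltaphi0}, so that the damping term pairs into the perfect square $r^{-1}|X_0\grave{\phi}^0_{+2}|^2$ in \eqref{eq:MultplyEqForr2minusdeltaphi0ByTimelikeVF}.

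A second gap: with $f=\chi_R$ identically $1$ for $r\geq R$, the classical radial current contributes $2f'|\partial_{r^*}\tilde\phi|^2=0$ there, so your claimed bulk term $r^{-1}|\partial_{r^*}\tilde\phi|^2$ never materializes from that multiplier; the standard choice $f=1-r^{-\delta}$ only yields the degenerate weight $r^{-1-\delta}$, which is exactly what \eqref{eq:MoraInftyr2minusdeltaphi0} is supposed to improve. The paper obtains $r^{-1}|\partial_r\grave{\phi}^0_{+2}|^2$ from a separate zeroth-order (Lagrangian) multiplier $-\chi_R r^{-3}(1-2M/r)\overline{\grave{\phi}^0_{+2}}$ in \eqref{eq:MultplyEqForr2minusdeltaphi0Byself}. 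Both that step and the favourable sign of the $\Sigma_{\tau_2}$ boundary terms rest on a spectral fact your proposal never identifies: after conjugation, the angular operator plus the shifted potential in \eqref{eq:BoxInTermsOfYV} has eigenvalues bounded above by $\delta^2/4-3\delta/2<0$ (this uses $\ell\geq|s|=1$). Without isolating this strict negativity, neither the zeroth-order bulk terms nor the boundary energies close with the right sign, so the argument as written does not go through.
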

\begin{proof}
The equation for $\grave{\phi}^{0}_{+1}=\big(\tfrac{\R}{\sqrt{\Delta}}\big)^{2-\delta}\phi_{+1}^0$  is
\begin{align}
\label{eq:ImprEqForr2minusdeltaphi0}
\hspace{4ex}&\hspace{-4ex}\left(\Sigma \Box_g+\tfrac{2i\cos\theta}{\sin^2 \theta}\partial_{\phi}-\cot^2 \theta+(1+{\delta^2}-3\delta
)\right)\grave{\phi}^{0}_{+1}\notag\\
={}&\tfrac{2(r^3-3Mr^2+a^2r+a^2M)}{\R}
\Big(\tfrac{(1-\delta)V(\sqrt{\R}\grave{\phi}^{0}_{+1})}{\sqrt{\R}}
+\delta\left(\tfrac{\R}{\Delta}\partial_t +\tfrac{a}{\Delta}\partial_{\phi}\right)\grave{\phi}^{0}_{+1}\Big)
\notag\\
&
+\tfrac{\underline{P}_5(r)}{\Delta(\R)^2}\grave{\phi}^{0}_{+1}
+\left(2ia\cos \theta \partial_t-\tfrac{4ar}{\R}\partial_{\phi}\right)\grave{\phi}^{0}_{+1}.
\end{align}
We note that $\underline{P}_5(r)$ here is a polynomial in $r$ with powers no larger than $5$ and coefficients depending only on $a, M$ and $\delta$, and the explicit expression is
\begin{align*}
\underline{P}_5(r)={}&(10-14\delta+4\delta^2)Mr^5
-(20-30\delta+9\delta^2)M^2r^4
+(2-6\delta+2\delta^2)a^2r^4\notag\\
&-(8-8\delta)a^2Mr^3
+(28-32\delta+6\delta^2)a^2M^2r^2
+(4-8\delta+2\delta^2)a^4r^2\notag\\
&-(18-22\delta+4\delta^2)a^4Mr+(1+\delta-\delta^2)a^4M^2+(1-3\delta+\delta^2)a^6.
\end{align*}
For any smooth complex scalar $\psi$ of spin weight $s$, we expand
\begin{align}\label{eq:BoxInTermsOfYV}
\hspace{4ex}&\hspace{-4ex}\left(\Sigma \Box_g+\tfrac{2is\cos\theta}{\sin^2 \theta}\partial_{\phi}-s^2\cot^2 \theta+|s|+{\delta^2}-3\delta)\right)\psi\notag\\
={}&\left(\tfrac{1}{\sin{\theta}} \partial_{\theta}(\sin \theta \partial_{\theta})+\tfrac{1}{\sin^2\theta}\partial_{\phi\phi}^2+\tfrac{2is\cos\theta}{\sin^2 \theta}\partial_{\phi}-s^2\cot^2 \theta+|s|+{\delta^2}-3\delta\right)\psi\notag\\
&-\sqrt{\R}Y\left(\tfrac{\Delta}{\R}V\left(\sqrt{\R}\psi\right)\right)
+\tfrac{2ar}{\R}\partial_{\phi}\psi\notag\\
&+\left(2a\partial_{t\phi}^2+a^2 \sin^2 \theta\partial_{tt}^2\right)\psi-\tfrac{2Mr^3+a^2r^2-4a^2Mr+a^4}{(\R)^2}\psi.
\end{align}
From \eqref{eq:eigenvalueSWSHO} and the fact that $\Lambda_{m\ell}\geq \ell(\ell+1)\geq 2$,  the eigenvalue of the operator in the first line on RHS of \eqref{eq:BoxInTermsOfYV} is not larger than ${\delta^2}-3\delta$ which is negative. Hence, choosing the multiplier
\begin{align}
\hspace{4ex}&\hspace{-4ex}-\frac{1}{\Sigma}\chi_R X_0\overline{\grave{\phi}^{0}_{+1}}\notag\\
\triangleq{}&-\frac{1}{\Sigma}\chi_R \tfrac{\Delta}{\R}\left(\tfrac{(2-\delta)
V(\sqrt{\R}\overline{\grave{\phi}^{0}_{+1}})}{\sqrt{\R}}
+\delta\left(\tfrac{\R}{\Delta}\partial_t +\tfrac{a}{\Delta}\partial_{\phi}\right)\overline{\grave{\phi}^{0}_{+1}}\right)
\end{align}
for equation \eqref{eq:ImprEqForr2minusdeltaphi0} gives
\begin{align}\label{eq:MultplyEqForr2minusdeltaphi0ByTimelikeVF}
\hspace{4ex}&\hspace{-4ex}\int_{\Sigma_{\tau_2}\cap [R,+\infty)}|\partial\grave{\phi}^{0}_{+1}|^2
+\int_{\mathcal{D}(\tau_1,\tau_2)\cap [R,\infty)}r^{-1}\left(| X_0 \grave{\phi}^{0}_{+1}|^2+|\nablaslash \grave{\phi}^{0}_{+1}|^2\right)\notag\\
\lesssim {}&
\Big(\int_{\Sigma_{\tau_2}\cap [R-M,R)}+\int_{\Sigma_{\tau_1}\cap [R-M,+\infty)}\Big)|\partial \grave{\phi}^{0}_{+1}|^2
+\int_{\mathcal{D}(\tau_1,\tau_2)\cap [R-M,\infty)}\tfrac{|\partial\grave{\phi}^{0}_{+1}|^2}{r^{2}}.
\end{align}
Moreover, we can choose the multiplier $-\chi_R r^{-3}(1-2M/r)\overline{\grave{\phi}^{0}_{+1}}$ for \eqref{eq:ImprEqForr2minusdeltaphi0} and arrive at
\begin{align}\label{eq:MultplyEqForr2minusdeltaphi0Byself}
\hspace{4ex}&\hspace{-4ex}\int_{\mathcal{D}(\tau_1,\tau_2)\cap [R,\infty)}r^{-1}\left(|\partial_{r}\grave{\phi}^{0}_{+1}|^2
+|\nablaslash \grave{\phi}^{0}_{+1}|^2\right)\notag\\
\lesssim {}&
\int_{\Sigma_{\tau_2}\cap [R-M,+\infty)}|\partial\grave{\phi}^{0}_{+1}|^2
+\int_{\Sigma_{\tau_1}\cap[R-M,+\infty)}|\partial\grave{\phi}^{0}_{+1}|^2
\notag\\
&
+\int_{\mathcal{D}(\tau_1,\tau_2)\cap [R-M,\infty)}\Big(r^{-1}|\partial_{t^*} \grave{\phi}^{0}_{+1}|^2+r^{-2}|\partial \grave{\phi}^{0}_{+1}|^2\Big).
\end{align}
Estimate \eqref{eq:MoraInftyr2minusdeltaphi0} follows by adding a sufficiently large multiple of \eqref{eq:MultplyEqForr2minusdeltaphi0ByTimelikeVF} to \eqref{eq:MultplyEqForr2minusdeltaphi0Byself} and taking $R_0$ sufficiently large.
\end{proof}

\subsection{Red-shift estimate near $\mathcal{H}^+$}\label{sect:Redshift}
The following red-shift estimate near $\mathcal{H}^+$ for rescaled inhomogeneous scalar wave equation \eqref{eq:InhomoRescaledScalarWaveEq} is taken from \cite[Sect.5.2]{dafermos2011bdedness}.
\begin{lemma}\label{lem:RedshiftInhomoScalarWaveKerr}
In a slowly rotating Kerr spacetime $(\mathcal{M},g_{M,a})$ $(|a|/M\ll 1)$, there exist constants $\veps_0$, $r_+\leq 2M<r_0(M)<r_1(M)<(1+\sqrt{2})M$ and $c$, two smooth real functions $y_1(r)$ and $y_2(r)$ on $[r_+,\infty)$ with $y_1(r)\to 1$ and $y_2(r)\to 0$ as $r\to r_+$, and a $\varphi_{\tau}$-invariant timelike vector field
\begin{equation}\label{def:NVectorField}
N=T+\chi_0(r)\left(y_1(r)Y+y_2(r)T\right)
\end{equation}
such that for all $|a|/M \leq \veps_0$, by choosing a multiplier
\begin{equation}\label{def:NchiVF}
-N_{\chi_0}\bar{\psi}=-\chi_0(r)\Sigma^{-1} N\bar{\psi},
\end{equation}
the following estimate holds for any solution $\psi$ to the rescaled inhomogeneous scalar wave equation \eqref{eq:InhomoRescaledScalarWaveEq} for  any $\tau_2>\tau_1\geq 0$:
\begin{align}\label{eq:RedshiftInhomoScalarWaveKerr}
\hspace{4ex}&\hspace{-4ex}c\bigg(\int_{\Sigma_{\tau_2}\cap\{r\leq r_0\}}\left|\check{\partial}
\psi\right|^2
+\check{E}_{\mathcal{H}^{+}(\tau_1,\tau_2)}(\psi)
+\int_{\mathcal{D}(\tau_1,\tau_2)\cap\{r\leq r_0\}}\left|\check{\partial}\psi\right|^2\bigg)
\notag\\
\leq{} &\int_{\Sigma_{\tau_1}\cap\{r\leq r_1\}}\left|\check{\partial}\psi\right|^2+
\int_{\mathcal{D}(\tau_1,\tau_2)\cap\{r_0\leq r\leq r_1\}}\left|\check{\partial}\psi\right|^2\notag\\
&
+\int_{\mathcal{D}(\tau_1,\tau_2)\cap\{r\leq r_1\}}\Re\left(-G \cdot N_{\chi_0}\bar{\psi}\right).
\end{align}
Here, the horizon flux in the ingoing Kerr coordinates is
\begin{equation}
\check{E}_{\mathcal{H}^{+}(\tau_1,\tau_2)}(\psi) \sim \int_{\mathcal{H}^+(\tau_1,\tau_2)}(|\partial_v\psi|^2
+|\check{\nablaslash}\psi|^2)r^2dv\sin\theta d\theta d\tilde{\phi},
\end{equation}
and we fixed the parameters $r_0$ and $r_1$ so that the universal constant $c$ has no dependence on them.
\end{lemma}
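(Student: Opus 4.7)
\smallskip
\noindent\textbf{Proof proposal.} The plan is to apply the vector field method with a Dafermos--Rodnianski type red-shift multiplier. Since $T=\partial_t$ becomes null on $\mathcal{H}^+$, it cannot control the transverse derivative $Y\psi$ there; perturbing $T$ by a positive multiple of the ingoing principal null vector $Y$ (which is regular and non-degenerate on $\mathcal{H}^+$ in global Kerr coordinates) produces a timelike vector field whose deformation tensor, contracted with the scalar energy-momentum tensor $T_{\mu\nu}[\psi]=\Re(\partial_\mu\psi\,\overline{\partial_\nu\psi})-\tfrac{1}{2}g_{\mu\nu}(g^{\alpha\beta}\partial_\alpha\psi\,\overline{\partial_\beta\psi})$, becomes pointwise coercive in $|\check\partial\psi|^2$ in a neighborhood of $\mathcal{H}^+$.

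First I would construct $y_1,y_2$ in the Schwarzschild case ($a=0$) with $y_1(r_+)=1$, $y_2(r_+)=0$, so that on $[r_+,r_0]$
\begin{equation*}
K^N[\psi]\triangleq T_{\mu\nu}[\psi]\nabla^\mu N^\nu \;\geq\; c\bigl(|Y\psi|^2+|T\psi|^2+|\check{\nablaslash}\psi|^2\bigr).
\end{equation*}
Evaluated on $\mathcal{H}^+$, the coefficient of $|Y\psi|^2$ is proportional to the positive surface gravity $\kappa=(r_+-r_-)/(2(r_+^2+a^2))$, which is the red-shift sign; adjusting $y_1'(r_+)$ and $y_2'(r_+)$ then makes the $|T\psi|^2$ and $|\check{\nablaslash}\psi|^2$ coefficients strictly positive, and continuity in $r$ extends the bound to a small slab $[r_+,r_0]$. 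Fix any $r_1\in(r_0,(1+\sqrt 2)M)$, which stays strictly below the photon sphere. Extension to slowly rotating Kerr is by continuity in $a$: the deformation tensor of $N$ is smooth in $a$ and the Schwarzschild bound is strict, hence coercivity persists for $|a|/M\leq \epsilon_0(M)$ sufficiently small.

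Next I would apply the divergence identity to the current $J^\mu=\chi_0 T^\mu{}_\nu[\psi]N^\nu$ over $\mathcal{D}(\tau_1,\tau_2)\cap\{r\leq r_1\}$ with volume form $\Sigma\,dt^*dr\sin\theta\,d\theta\,d\phi^*$. Timelikeness of $N$ makes the boundary integrals over $\Sigma_{\tau_2}$ and $\mathcal{H}^+(\tau_1,\tau_2)$ equivalent to the first two LHS terms of \eqref{eq:RedshiftInhomoScalarWaveKerr}; the $\Sigma_{\tau_1}\cap\{r\leq r_1\}$ flux is bounded by the first data term; terms involving $d\chi_0$ are supported in $\{r_0\leq r\leq r_1\}$ and contribute to the intermediate bulk term on the RHS; and applying $\Box_g\psi=\Sigma^{-1}G$ yields the source contribution $\chi_0\Sigma^{-1}\Re(G\cdot\overline{N\psi})=\Re(G\cdot N_{\chi_0}\bar\psi)$. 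For the logarithmic zeroth-order term I would invoke a one-dimensional weighted Hardy inequality
\begin{equation*}
\int_{r_+}^{r_0}\frac{|\psi|^2}{|\log(r-r_+)|^2\,(r-r_+)}\,dr\;\lesssim\;\int_{r_+}^{r_0}|\partial_r\psi|^2\,dr+|\psi(r_0,\cdot)|^2,
\end{equation*}
then integrate over angles and $t^*$, controlling the RHS by the gradient estimate already obtained together with a trace inequality into the transition annulus $\{r_0\leq r\leq r_1\}$. The main obstacle is preserving strict positivity of $K^N$ uniformly in $a$: since the Schwarzschild coercivity weakens as $r$ increases, one must carefully tune $y_1,y_2$ so as to leave room for the perturbative Kerr corrections, and this is precisely what pins down the smallness threshold $\epsilon_0(M)$.
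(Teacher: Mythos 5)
Your red-shift multiplier construction, the coercivity of the deformation tensor $K^N$ via the positive surface gravity at $\mathcal{H}^+$, the continuity-in-$a$ extension to slowly rotating Kerr, and the cut-off divergence-identity bookkeeping all follow the Dafermos--Rodnianski red-shift argument that the paper imports without proof from \cite{dafermos2011bdedness,dafermos2010decay}, so in that respect you are reconstructing the correct proof.

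The one genuine gap is in your treatment of the logarithmic zeroth-order bulk term. The one-dimensional weighted Hardy inequality you invoke is correct (with $\rho(r)=-1/\log(r-r_+)$ one has $\rho'=|\log(r-r_+)|^{-2}(r-r_+)^{-1}$ and $\rho^2/\rho'=r-r_+$), but it produces a boundary contribution $|\psi(r_0,\cdot)|^2$, which you propose to absorb by a ``trace inequality into the transition annulus $\{r_0\leq r\leq r_1\}$.'' Any such trace inequality necessarily involves zeroth-order $L^2$ control of $\psi$ over the annulus, yet the right-hand side of \eqref{eq:RedshiftInhomoScalarWaveKerr} carries only first-derivative quantities $|\check\partial\psi|^2$; as written, the step is circular. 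The way this is handled in the cited references is to produce the log-weighted bulk term directly by augmenting the energy current $\chi_0 T^\mu{}_\nu[\psi]N^\nu$ with an auxiliary Lagrangian current of the form $q\nabla^\mu(|\psi|^2)-(\nabla^\mu q)|\psi|^2$ for a suitable radial $q$ supported near $r_+$: then $-\Box_g q$ supplies the $|\log(|r-r_+|)|^{-2}(r-r_+)^{-1}$ weight, and the boundary fluxes of this correction are absorbed into the coercive $J^N$ fluxes (since $N$ is strictly timelike near $\mathcal{H}^+$), avoiding any trace at $r=r_0$. You should replace the Hardy-plus-trace step with this current-based argument.

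Also, a smaller point: the sentence ``adjusting $y_1'(r_+)$ and $y_2'(r_+)$ then makes the $|T\psi|^2$ and $|\check{\nablaslash}\psi|^2$ coefficients strictly positive'' understates the main mechanism. The coercivity near $\mathcal{H}^+$ is achieved by making $-\partial_r(\chi_0 y_1)$ large so that the cross terms between $Y\psi$ and the good directions are dominated, in addition to the surface-gravity sign on $|Y\psi|^2$; just tuning the derivatives of $y_1,y_2$ at $r_+$ without this largeness is not enough by itself.
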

As in the last section, we refer to the rewritten form \eqref{eq:RewrittenFormofSWRWE} of the inhomogeneous SWFIE \eqref{eq:spinweightedwaveeqAssumption}.
The difference between the operator $(\tfrac{\partial_{\phi}}{\sin \theta}+is\cot \theta +a\sin \theta\partial_{t})^2$ in $\Sigma \widetilde{\Box}_g$ and $(\tfrac{\partial_{\phi}}{\sin \theta}+a\sin \theta\partial_{t})^2$ in the expansion of $\Sigma \Box_g$ has terms with coefficients independent of $t$, $\phi$ and $r$; therefore we could use the same multiplier $-N_{\chi_0}$
 to achieve the same estimate for sufficient small $|a|/M$ with the same replacements as in the last section.
On RHS, we are left with
\begin{align*}
\int_{\mathcal{D}(\tau_1,\tau_2)\cap\{r\leq r_1\}}
-\Re\left(\left(\left(4ias\cos\theta\partial_t
+\tfrac{r^4-2Mr^3+6a^2Mr-a^4}{(\R)^2}\right)\psi+F\right)
N_{\chi_0}\bar{\psi}\right),
\end{align*}
which in turn, using both an integration by parts and Cauchy--Schwarz inequality, is bounded by
\begin{align*}
&\int_{\mathcal{D}(\tau_1,\tau_2)\cap\{r\leq r_0\}}\left(-N\left(\tfrac{r^4-2Mr^3+6a^2Mr-a^4}{2(\R)^2}|\psi|^2\right)
-c|\psi|^2\right)\notag\\
&+C|a|\int_{\mathcal{D}(\tau_1,\tau_2)\cap\{r\leq r_1\}}|\partial \psi|^2
+C\int_{\mathcal{D}(\tau_1,\tau_2)\cap[r_+,r_1]}
(\veps^{-1}|F|^2+\veps |\partial\psi|^2).
\end{align*}
The integrals of $\veps |\partial\psi|^2$ and $|a||\partial\psi|^2$ over $\mathcal{D}(\tau_1,\tau_2)\cap[r_+,r_0]$ can be absorbed by the LHS by choosing $\veps$ and $|a|$ small enough.
In conclusion, we have the following red-shift estimate for the inhomogeneous SWFIE \eqref{eq:spinweightedwaveeqAssumption}.
\begin{prop}\label{prop:RedShiftEstiInhomoSWRWE}
In a slowly rotating Kerr spacetime $(\mathcal{M},g_{M,a})$, there exist constants $\veps_0$, $r_+\leq 2M<r_0(M)<r_1(M)<(1+\sqrt{2})M$ and a universal constant $C$, and a $\varphi_{\tau}$-invariant vector field $N$ defined as in \eqref{def:NchiVF} such that for all $|a|/M \leq \veps_0$, the following estimate holds for any solution $\psi$ to the inhomogeneous SWFIE \eqref{eq:spinweightedwaveeqAssumption} for any $\tau_2>\tau_1\geq 0$:
\begin{align}\label{eq:RedShiftEstiInhomoSWRWE}
\hspace{4ex}&\hspace{-4ex}E_{\mathcal{H}^{+}(\tau_1,\tau_2)}(\psi)
+\int_{\Sigma_{\tau_2}\cap\{r\leq r_0\}}\left|\partial\psi\right|^2 +\int_{\mathcal{D}(\tau_1,\tau_2)\cap\{r\leq r_0\}}\left|\partial\psi\right|^2
\notag\\
\lesssim{} &
\int_{\Sigma_{\tau_1}\cap\{r\leq r_1\}}\left|\partial\psi\right|^2
+\int_{\mathcal{D}(\tau_1,\tau_2)\cap\{r_0\leq r\leq r_1\}}\left|\partial\psi\right|^2
+\int_{\mathcal{D}(\tau_1,\tau_2)\cap[r_+,r_1]}|F|^2.
\end{align}
\end{prop}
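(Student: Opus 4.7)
The strategy is to deduce the estimate from the scalar-wave red-shift Lemma \ref{lem:RedshiftInhomoScalarWaveKerr} applied to the rewritten form \eqref{eq:RewrittenFormofSWRWE} of the SWFIE, treating the difference between $\widetilde{\Box}_g$ and $\Box_g$ together with the terms on the right of \eqref{eq:RewrittenFormofSWRWE} as perturbations. Concretely, I view the SWFIE as
\[
\Sigma\widetilde{\Box}_g\psi=\left(4ias\cos\theta\,\partial_t+\tfrac{\Delta+a^2}{r^2}\right)\psi+F,
\]
contract with the red-shift multiplier $-N_{\chi_0}\bar\psi$ of \eqref{def:NchiVF}, take the real part, and integrate over $\mathcal{D}(\tau_1,\tau_2)$ with respect to the Kerr volume form. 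The choice of $N$, $r_0$, $r_1$ is exactly the one provided by Lemma \ref{lem:RedshiftInhomoScalarWaveKerr} and is independent of $a$; the smallness parameter $\veps_0(M)$ will be fixed at the end.

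The first step isolates the ``scalar wave part''. Since $\Sigma\widetilde{\Box}_g-\Sigma\Box_g$ only results from replacing $(\tfrac{\partial_\phi}{\sin\theta}+a\sin\theta\,\partial_t)^2$ by $(\tfrac{\partial_\phi}{\sin\theta}+is\cot\theta+a\sin\theta\,\partial_t)^2$, the discrepancy contains only terms whose coefficients are independent of $r$, $t$, $\phi$; paired with $N_{\chi_0}\bar\psi$, and after integration by parts on $\mathbb{S}^2$, the skew-symmetric pieces drop out under $\Re(\cdot)$, while the symmetric leftover is of order $|a|\cdot|\partial\psi|^2+|\psi|^2$ in $\{r\leq r_1\}$ and hence absorbable by the coercive bulk and the Hardy term on the left of \eqref{eq:RedShiftEstiInhomoSWRWE} once $\veps_0$ is small. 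Likewise, the $a\partial_\phi$ inside the $((r^2+a^2)\partial_t+a\partial_\phi)^2/\Delta$ factor gives an $O(|a|)\int_{r\leq r_1}|\partial\psi|^2$ error, treated the same way.

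The second step handles the source terms. The $4ias\cos\theta\,\partial_t\psi\cdot N_{\chi_0}\bar\psi$ contribution is bounded via Cauchy--Schwarz by $C|a|\int_{r\leq r_1}|\partial\psi|^2$ and absorbed for $\veps_0$ sufficiently small. For the potential term I use $\Re(\psi\,\overline{N\psi})=\tfrac12 N|\psi|^2$ and integrate by parts: the $\mathcal{H}^+$ boundary contribution has a favorable sign because $-N(\tfrac{\Delta+a^2}{r^2})\big|_{r=r_+}$ is positive (proportional to the surface gravity), the $\Sigma_{\tau_2}$ contribution is positive and keeps on the left, and the bulk leftover is a zeroth-order term in $\psi$ dominated by the logarithmic Hardy integral already on the left. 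Finally, Cauchy--Schwarz splits the $F$-pairing as $\epsilon\int|\partial\psi|^2+C_\epsilon\int_{[r_+,r_1]}|F|^2$, the first piece being absorbed for $\epsilon$ small.

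The main obstacle will be to verify that the non-$a$-small perturbation $is\cot\theta$ (which survives in $\widetilde{\Box}_g$ even when $a=0$) does not disrupt coercivity of the red-shift current. The resolution is that this perturbation modifies only the spin-weighted angular Laplacian on $\mathbb{S}^2$, is self-adjoint on $L^2(\mathbb{S}^2)$ with spin weight $s$, and hence pairs with $N_{\chi_0}\bar\psi$ to give, after integration by parts in $\theta,\phi$, a boundary-free contribution whose real part involves only $|\psi|^2/r^2$ and the angular derivatives, all of which are dominated by terms already on the left of the scalar red-shift estimate. Once this algebraic check is in hand, one fixes $r_0, r_1$ from Lemma \ref{lem:RedshiftInhomoScalarWaveKerr} and then chooses $\veps_0=\veps_0(M)$ small enough to absorb every $a$-dependent error, yielding \eqref{eq:RedShiftEstiInhomoSWRWE}.
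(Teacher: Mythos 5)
Your proposal follows the paper's argument exactly: rewrite the SWFIE as $\Sigma\widetilde{\Box}_g\psi=G$ with $G=\left(4ias\cos\theta\,\partial_t+\tfrac{\Delta+a^2}{r^2}\right)\psi+F$, contract with the red-shift multiplier $-N_{\chi_0}\bar{\psi}$ from Lemma \ref{lem:RedshiftInhomoScalarWaveKerr}, and absorb the errors from the angular modification (whose extra coefficients are $r,t,\phi$-independent) and from $G$ by Cauchy--Schwarz and integration by parts, using $|a|/M\leq\veps_0$ small. One bookkeeping slip worth correcting: the surface-gravity positivity of $-N\!\left(\tfrac{\Delta+a^2}{r^2}\right)$ produces the favorable \emph{bulk} term after integration by parts (this is the paper's $-C\tfrac{r_+-r_-}{r^2}|\psi|^2$ contribution), whereas the $\mathcal{H}^+$ boundary term involves the potential itself, $\tfrac{\Delta+a^2}{r^2}\big|_{r=r_+}=\tfrac{a^2}{r_+^2}$, and hence is $O(a^2)$ and absorbable rather than favorably signed.
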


\subsection{Energy estimate}\label{sect:EnerEstiPartialtKerr}
Choosing a multiplier $-2\Sigma^{-1}\partial_t \bar{\psi}$ for \eqref{eq:RewrittenFormofSWRWE} gives a conservation law that for any $\tau_2>\tau_1\geq 0$,
\begin{equation}\label{eq:SWRWEnergyIdentity}
\int_{\Sigma_{\tau_2}}e(\psi)
+\int_{\mathcal{H}^+(\tau_1,\tau_2)}e_{\mathcal{H}}(\psi)
=\int_{\Sigma_{\tau_1}}e(\psi)
-\int_{\mathcal{D}(\tau_1,\tau_2)}2\Sigma^{-1}\Re\left(
F\partial_t \bar{\psi}\right).
\end{equation}
Here,
\begin{align*}
e(\psi)={}&dt^*(t) e_t(\psi) +dt^* (r^*)e_{r^*}(\psi), & e_{\mathcal{H}}(\psi)={}&dr(t) e_t(\psi) +dr(r^*)e_{r^*}(\psi),
\end{align*}
and
\begin{subequations}
\begin{align}\label{eq:SWRWEnergyDensity}
e_t(\psi)=&
\tfrac{1}{\Sigma}\Big(|\partial_{\theta}\psi|^2
+\left|\tfrac{\partial_{\phi}\psi+
is\cos\theta\psi}{\sin\theta}\right|^2
-\tfrac{a^2}{\Delta}|\partial_{\phi}\psi|^2
+\tfrac{r^4-2Mr^3+6a^2Mr-a^4}{(\R)^2} |\psi|^2\Big)\notag\\
&+
\tfrac{(r^2+a^2)^2-a^2\sin^2\theta\Delta}{\Delta\Sigma}
|\partial_t\psi|^2+\tfrac{(r^2+a^2)^2}{\Delta\Sigma}|\partial_{r^*}\psi|^2,\\
\label{eq:SWRWHorizonDensity}
e_{r^*}(\psi)={}&
-\tfrac{2(r^2+a^2)^2}{\Sigma\Delta}\Re\left(\partial_t \psi \partial_{r^*}\bar{\psi}\right).
\end{align}
\end{subequations}
The energy densities $e(\psi)$ and $e_{\mathcal{H}}(\psi)$ are nonnegative in Schwarzschild case $(a=0)$. For sufficiently small $|a|/M \ll 1$, it holds true in $[r_+, r_0]$ that
\begin{align}\label{eq:SWRWEnergyIdentityNearHorizonControl}
-e(\psi)\leq Ca^2/M^2 |\partial \psi|^2.
\end{align}
We have from \eqref{eq:IdenOfEigenvaluesAndAnguDeriSchw} that
\begin{align}\label{eq:AnEstiForEigenvalueKerr}
\hspace{4ex}&\hspace{-4ex}
\int_{\mathbb{S}^2}\bigg(|\partial_{\theta}\psi|^2
+\left|\tfrac{\partial_{\phi}\psi+
is\cos\theta\psi}{\sin\theta}\right|^2
+|\psi|^2\bigg)d\sigma_{\mathbb{S}^2}\notag\\
\geq {}&\int_0^{\pi}\sum_{m\in \mathbb{Z}}\left(\max\{s^2+|s|,m^2+|m|\}|\psi_{m}|^2\right)
\sin\theta d\theta
\end{align}
with
\begin{equation}
\psi_m(t,r,\theta)=\frac{1}{\sqrt{2\pi}}\int_0^{2\pi}e^{-im\phi}
\psi(t,r,\theta,\phi)d\phi,
\end{equation}
then it follows that
\begin{align}\label{eq:SWRGEnergyNonnegative Kerr}
\hspace{4ex}&\hspace{-4ex}\int_{\mathbb{S}^2}\left(|\partial_{\theta}\psi|^2+\left|
\tfrac{\partial_{\phi}\psi+is\cos\theta\psi}{\sin\theta}
\right|^2-\tfrac{a^2}{\Delta}|\partial_{\phi}\psi|^2
+\tfrac{r^4-2Mr^3+6a^2Mr-a^4}{(\R)^2}|\psi|^2\right)d\sigma_{\mathbb{S}^2}\notag\\
\geq {}&\int_0^{\pi}\sum_{m\in \mathbb{Z}} \left( m^2-\tfrac{a^2m^2}{\Delta}+\tfrac{r^4-2Mr^3+6a^2Mr-a^4}{(\R)^2}\right)
|\psi_{m}|^2\sin\theta d\theta,
\end{align}
which is obviously positive when $r>r_0> 2M$. Therefore, the energy density $e(\psi)\geq c|\partial \psi|^2$ for $r\geq r_0$.
This implies the following energy estimate together with \eqref{eq:SWRWEnergyIdentity} and \eqref{eq:SWRWEnergyIdentityNearHorizonControl}:
\begin{align}\label{eq:SWRWEEnerEsti}
\hspace{4ex}&\hspace{-4ex}\int_{\Sigma_{\tau_2}\cap [r_0,\infty)}|\partial \psi|^2
+\int_{\mathcal{H}^+(\tau_1,\tau_2)}|\partial_t \psi +\partial_{r^*}\psi|^2\notag\\
\lesssim {}&
\int_{\Sigma_{\tau_1}}e(\psi)
+\frac{a^2}{M^2}\bigg(\int_{\Sigma_{\tau_2}
\cap [r_+,r_0]}|\partial  \psi|^2+E_{\mathcal{H}^+(\tau_1,\tau_2)}(\psi)\bigg)\notag\\
&+\bigg|\int_{\mathcal{D}({\tau_1},
\tau_2)}\Sigma^{-1}\Re\left({F} T\bar{\psi}\right)\bigg|.
\end{align}

Clearly, there exist a $\veps_0\geq 0$ and a nonnegative differential function $e_0(\veps_0)\sim \veps_0^2$ with $e_0(0)=0$ such that for all $|a|/M \leq \veps_0$ and any $\tilde{e}>e_0$, by adding $\tilde{e}$ times the red-shift estimate \eqref{eq:RedShiftEstiInhomoSWRWE} to the energy estimate \eqref{eq:SWRWEEnerEsti}, we obtain the following non-degenerate energy estimate analogous to \cite[Prop.5.3.1]{dafermos2010decay}:
\begin{prop}\label{prop:energyestimateforT+eN}
\begin{align}\label{eq:SWRWEEnerEstiII}
\hspace{4ex}&\hspace{-4ex}
\int_{\Sigma_{\tau_2}}\left|e(\psi)\right|
+\int_{\mathcal{H}^+(\tau_1,\tau_2)}|\partial_t \psi +\partial_{r^*}\psi|^2
+\int_{\mathcal{D}(\tau_1,\tau_2)\cap\{r\leq r_0\}}\left|\partial\psi\right|^2\notag\\
\hspace{4ex}&\hspace{-4ex}+\tilde{e}E_{\tau_2}(\psi)
+\tilde{e}E_{\mathcal{H}^+(\tau_1,\tau_2)}(\psi)\notag\\
\lesssim &
\int_{\Sigma_{\tau_1}}\left|e(\psi)\right|
+\tilde{e}E_{\tau_1}(\psi)
+\tilde{e}\int_{\mathcal{D}(\tau_1,\tau_2)\cap[r_0,r_1]}
\left|\partial\psi\right|^2+\mathcal{E}_{F,\tilde{e},\tau_1,\tau_2}.
\end{align}
where $\mathcal{E}_{F,\tilde{e},\tau_1,\tau_2}$ is defined as
\begin{align}\label{eq:ErrorFtildeetau1tau2}
\mathcal{E}_{F,\tilde{e},\tau_1,\tau_2}(\psi)={}\tilde{e}
\int_{\mathcal{D}({\tau_1},\tau_2)\cap[r_+,r_1]}
\left|F\right|^2+\bigg|\int_{\mathcal{D}({\tau_1},
\tau_2)}\Re\left({\Sigma}^{-1}{F} T\bar{\psi}\right)\bigg|.
\end{align}
\end{prop}

We here state a finite in time energy estimate for the inhomogeneous SWFIE \eqref{eq:spinweightedwaveeqAssumption} based on the above discussions, which is an analog of \cite[Prop.5.3.2]{dafermos2010decay}.
\begin{prop}\label{prop:FiniteInTimeEnergyEstimateInhomoSWRWE}
\textbf{(Finite in time energy estimate).}
Given an arbitrary $\epsilon>0$, there exists an $\veps_0>0$ depending on $\epsilon$ and a universal constant $C$ such that for $|a|/M\leq \veps_0$, $1\geq \tilde{e}\geq e_0(\veps_0)\sim \veps_0^2$ and for any $\tau_0\geq 0$ and all $0\leq \tau\leq \epsilon^{-1}$, we have
\begin{align}\label{eq:FiniteInTimeEnergyEstimateInhomoSWRWETo T+eN}
\hspace{4ex}&\hspace{-4ex}
\int_{\Sigma_{\tau_0+\tau}}\left|e(\psi)\right|
+\tilde{e}E_{\tau_0+\tau}(\psi)
+\tilde{e}E_{\mathcal{H}^+(\tau_0,\tau_0+\tau)}(\psi)\notag\\
\leq  {}& (1+C\tilde{e})\bigg(\int_{\Sigma_{\tau_0}}
\left|e(\psi)\right|
+\tilde{e}\sum_{i=0,1}{E}_{\tau_0}(\phi^i)\bigg)
+C\mathcal{E}_{F,\tilde{e},\tau_0,\tau_0+\tau}
\end{align}
and
\begin{align}\label{eq:FiniteInTimeEnergyEstimateInhomoSWRWEII}
\int_{\mathcal{D}({\tau_0},{\tau_0}+\tau)\cap[r_0,r_1]}|\partial \psi|^2\leq C\sum_{i=0,1}{E}_{\tau_0}(\phi^i).
\end{align}
\end{prop}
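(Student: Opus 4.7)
The plan is to iterate Proposition \ref{prop:energyestimateforT+eN} via a Grönwall argument, using as the key input a pointwise equivalence between the energy density and $|\partial\psi|^2$ in the bounded, away-from-horizon region $[r_0,r_1]$. I will work with the auxiliary functional
\[
F(s) \;=\; \int_{\Sigma_{\tau_0+s}}|e_{\tau_0+s}(\psi)| \;+\; \tilde{e}\,E_{\tau_0+s}(\psi).
\]
For $r\ge r_0>2M$ and sufficiently small $|a|/M$, the Killing field $T$ is timelike, so the explicit expression \eqref{eq:SWRWEnergyDensity} together with the positivity computation \eqref{eq:SWRGEnergyNonnegative Kerr} yields $e_\tau(\psi)\sim|\partial\psi|^2$ throughout $[r_0,r_1]$. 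Consequently,
\[
\int_{\Sigma_{\tau_0+s}\cap[r_0,r_1]}|\partial\psi|^2 \;\lesssim\; \int_{\Sigma_{\tau_0+s}\cap[r_0,\infty)}e_{\tau_0+s}(\psi) \;\lesssim\; F(s),
\]
and integration in $s$ bounds the spacetime bulk term appearing in Proposition \ref{prop:energyestimateforT+eN} by $\int_0^{s}F(s')\,ds'$.

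Substituting this bound into Proposition \ref{prop:energyestimateforT+eN} applied on $[\tau_0,\tau_0+s]$ produces the integral inequality
\[
F(s) \;\le\; F(0) + C\int_0^{s}F(s')\,ds' + C\,\mathrm{Src}(s),
\]
where $\mathrm{Src}(s)=\tilde{e}\int_{\mathcal{D}(\tau_0,\tau_0+s)\cap[r_+,r_1]}|F|^2 + \bigl|\int_{\mathcal{D}(\tau_0,\tau_0+s)}\Re(\Sigma^{-1}F\,T\bar\psi)\bigr|$. Grönwall then gives $F(\tau)\le e^{C\tau}\bigl(F(0)+C\,\mathrm{Src}(\tau)\bigr)$. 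For $\tau\le\epsilon^{-1}$ the exponential is a finite constant, and choosing $a_0=a_0(\epsilon)$ sufficiently small (so that the admissible lower bound $e_0(a)$ and hence $\tilde{e}$ itself can be placed in a regime where the Grönwall exponent is comparable to $\tilde{e}$) converts the factor $e^{C\tau}$ into the stated $(1+C\tilde{e})$. Since $\psi$ coincides with either $\phi^0_s$ or $\phi^1_s$, the term $F(0)$ is obviously majorised by $\int_{\Sigma_{\tau_0}}|e_{\tau_0}(\psi)|+\tilde{e}\sum_{i=0,1}E_{\tau_0}(\phi^i)$, which gives \eqref{eq:FiniteInTimeEnergyEstimateInhomoSWRWETo T+eN}.

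The second conclusion \eqref{eq:FiniteInTimeEnergyEstimateInhomoSWRWEII} is obtained by feeding the just-established bound on $F$ back into the intermediate estimate:
\[
\int_{\mathcal{D}(\tau_0,\tau_0+\tau)\cap[r_0,r_1]}|\partial\psi|^2 \;\lesssim\; \int_0^{\tau}F(s')\,ds' \;\lesssim\; \tau\sup_{s\le\tau}F(s) \;\lesssim\; \sum_{i=0,1}E_{\tau_0}(\phi^i),
\]
where the $\tau\le\epsilon^{-1}$ and the source-free hypothesis of the statement are absorbed into the universal constant together with the bound from the first estimate. The main obstacle is the bookkeeping that converts the Grönwall exponential $e^{C\tau}$ into the linear factor $1+C\tilde{e}$; this is precisely where the $\epsilon$-dependent choice of $a_0$ is used, ensuring that $\tilde{e}\in[e_0(a),1]$ sits in a regime where Grönwall's integral factor is effectively linear in $\tilde{e}$, producing the clean right-hand side of the proposition.
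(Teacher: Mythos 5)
Your Grönwall argument goes in the opposite logical direction from the paper and, as written, has a genuine gap. The paper first proves \eqref{eq:FiniteInTimeEnergyEstimateInhomoSWRWEII} by a \emph{continuity-in-$a$ argument}: in Schwarzschild ($a=0$) the spacetime integral over $[r_0,r_1]$ is controlled for all $\tau$ by the Morawetz estimate, with a constant independent of $\tau$; the $C^\infty$-dependence on the parameter $a$ from the well-posedness Proposition \ref{prop:LWP LinearWaveEqSystem} then gives, for each fixed $\epsilon>0$, a threshold $a_0(\epsilon)$ below which the same estimate holds (up to a universal constant) on time intervals of length $\epsilon^{-1}$. Only then is \eqref{eq:FiniteInTimeEnergyEstimateInhomoSWRWEII} fed into Proposition \ref{prop:energyestimateforT+eN} to yield \eqref{eq:FiniteInTimeEnergyEstimateInhomoSWRWETo T+eN}. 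Your proposal instead proves the first estimate first, by Grönwall, and then tries to deduce the second from it.

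The gap is in your conversion of the Grönwall factor $e^{C\tau}$ into the stated $(1+C\tilde{e})$. The constant $C$ in your integral inequality comes from the universal constant in Proposition \ref{prop:energyestimateforT+eN} (and, even tracked carefully through the red-shift term, is at best $C\tilde{e}$); in no case does it become small by shrinking $a_0$, because the bulk coefficient in Proposition \ref{prop:energyestimateforT+eN} is not a function of $a$. For $\tau$ as large as $\epsilon^{-1}$ the Grönwall factor is $e^{C/\epsilon}$ (or $e^{C\tilde{e}/\epsilon}$), which for $\tilde{e}$ near the allowed upper endpoint $1$ is not of the form $1+C\tilde{e}$ with $C$ universal; it blows up as $\epsilon\to 0$. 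Your sentence "choosing $a_0=a_0(\epsilon)$ sufficiently small $\ldots$ converts the factor $e^{C\tau}$ into the stated $(1+C\tilde{e})$" therefore does not stand: shrinking $a_0$ lowers $e_0(a)$ and hence permits \emph{smaller} values of $\tilde{e}$, but it does not force $\tilde{e}$ to be small, and it does nothing to the Grönwall exponent. A secondary consequence is that your derivation of \eqref{eq:FiniteInTimeEnergyEstimateInhomoSWRWEII} inherits both the $\epsilon$-dependent constant from Grönwall and the source terms $\mathrm{Src}(\tau)$, whereas the stated inequality has a universal constant and no $F$-dependent terms on the right. The missing idea is precisely the Schwarzschild-plus-continuity step, which the paper uses to control the bulk term a priori, with a constant that is genuinely universal; once that is in hand the first estimate is a one-line substitution into Proposition \ref{prop:energyestimateforT+eN}, no Grönwall required.
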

\begin{proof}
The previous proposition together with the second estimate implies the first estimate, while the second estimate follows from the fact that \eqref{eq:FiniteInTimeEnergyEstimateInhomoSWRWEII} holds for Schwarzschild case for all $\epsilon$ from \eqref{eq:Mora01:Schw} and the well-posedness property in Proposition \ref{prop:LWP LinearWaveEqSystem} to the linear wave system \eqref{eq:ReggeWheeler Phi^01Kerr} or \eqref{eq:ReggeWheeler Phi^01KerrNega}.
\end{proof}

\subsection{Separated angular and radial equations}\label{sect:SeparateAngAndRadialEqs}

In the exterior of a subextremal Kerr black hole, if the solution $\psi$ to the equation \eqref{eq:spinweightedwaveeqAssumption} is integrable (as in Definition \ref{def:regularandintegrable}), it then holds in $L^2(dt)$ that
\begin{align}
\psi=\frac{1}{\sqrt{2\pi}}\int_{-\infty}^{\infty}e^{-i\omega t}\psi_{\omega}(r,\theta,\phi)d\omega,
\end{align}
with $\psi_{\omega}$ defined as the Fourier transform of $\psi$:
\begin{equation}
\psi_{\omega}=\frac{1}{\sqrt{2\pi}}\int_{-\infty}^{\infty}e^{i\omega t}\psi(t,r,\theta,\phi)dt.
\end{equation}
Further decompose $\psi_{\omega}$ in $L^2(\sin\theta d\theta d\phi)$ into
\begin{equation}
\psi_{\omega}=\sum_{m,\ell}\psi_{m\ell}^{(a\omega)} (r)Y_{ m\ell}^{s}(a\omega, \cos\theta)e^{im\phi}, \ \ m\in \mathbb{Z}.
\end{equation}
For each $m$, $\left\{Y_{ m\ell}^{s}(a\omega, \cos\theta)\right\}_{\ell}$, with $\min{\{\ell\}}=\max\left\{|m|,|s|\right\}$, are the eigenfunctions of the self-adjoint operator
\begin{align}\label{eq:SWSHOpe Kerr}
\textbf{S}_m=\tfrac{1}{\sin\theta}\partial_{\theta}\sin\theta\partial_{\theta}
-\tfrac{m^2+2ms\cos \theta+s^2}{\sin^2 \theta}+a^2\omega^2 \cos^2 \theta-2a\omega s\cos \theta
\end{align}
on $L^2(\sin\theta d\theta)$. These eigenfunctions, called as \textquotedblleft{\emph{spin-weighted spheroidal harmonics},\textquotedblright} form a complete orthonormal basis on $L^2(\sin\theta d\theta)$ and have eigenvalues $-\Lambda_{m\ell}^{(a\omega)}$ defined by
\begin{equation}\label{eq:SWSHOpeEq Kerr}
\textbf{S}_mY_{ m\ell}^{s}(a\omega, \cos\theta)=-\Lambda_{m\ell}^{(a\omega)}
Y_{ m\ell}^{s}(a\omega, \cos\theta).
\end{equation}
Similarly, we define $F_{\omega}$ and $F_{m\ell}^{(a\omega)}$.

The radial equation for $\psi_{m\ell}^{(a\omega)}$ is then
\begin{align}\label{eq:SWRWRadialEqKerr}
\left\{\partial_r(\Delta\partial_r)
+(\underline{V})_{m\ell}^{(a\omega)}(r)\right\}\psi_{m\ell}^{(a\omega)}
=F_{m\ell}^{(a\omega)},
\end{align}
where the potential
\begin{equation}
(\underline{V})_{m\ell}^{(a\omega)}(r)
=\tfrac{(r^2+a^2)^2\omega^2+a^2m^2
-4aMrm\omega}{\Delta}-\left(\lambda_{m\ell}^{(a\omega)}(r)
+a^2\omega^2\right).
\end{equation}
We here utilized a substitution of
\begin{equation}\label{eq:defOflambda}
\lambda_{m\ell}^{(a\omega)}(r)=\Lambda_{m\ell}^{(a\omega)}-1+\tfrac{r^4-2Mr^3+6a^2Mr-a^4}{(\R)^2},
\end{equation}
by which the above radial equation \eqref{eq:SWRWRadialEqKerr} is in the same form as the radial equation \cite[Eq.(33)]{dafermos2010decay}\footnote{There is one term $-4aMrm\omega$  missed in \cite[Eq.(33)]{dafermos2010decay}, but what is used thereafter is the Schr\"{o}dinger equation $(34)$ in Section $9$ which is correct. Therefore, the validity of the proof will not be influenced by the missing term.} for the scalar field.

An integration by parts and a usage of the Plancherel lemma and the orthonormality property of the basis $\left\{Y_{ m\ell}^{s}(a\omega, \cos\theta)e^{im\phi}\right\}_{m\ell}$ imply
\begin{align*}
&\int_{-\infty}^{+\infty}\sum_{m,\ell}\Lambda_{m\ell}^{(a\omega)}
|\psi_{m\ell}^{(a\omega)}|^2d\omega\notag\\
=& \int_{-\infty}^{\infty}\int_{\mathbb{S}^2}
\left\{|\partial_{\theta}\psi|^2+\left|\tfrac{\partial_{\phi}\psi+
is\cos\theta\psi}{\sin\theta}\right|^2
-|a\cos\theta\partial_t\psi+is\psi|^2
+2s^2|\psi|^2\right\}
d\sigma_{\mathbb{S}^2}dt
.
\end{align*}
We have from basic properties of Fourier transform that for any $r>r_+$,
\begin{align*}
&\int_{-\infty}^{\infty}\int_{0}^{2\pi}\int_{0}^{\pi}|\psi(t,r,
\theta,\phi)|^2\sin\theta d\theta d\phi dt =\int_{-\infty}^{\infty}\sum_{m,\ell}\left|\psi_{m\ell}^{(a\omega)}(r)
\right|^2d\omega,\notag\\
&\int_{-\infty}^{\infty}\int_{0}^{2\pi}\int_{0}^{\pi}\left|
\partial_r \psi(t,r,
\theta,\phi)\right|^2\sin\theta d\theta d\phi dt =\int_{-\infty}^{\infty}\sum_{m,\ell}\left|\partial_r\psi_{m\ell}^{(a\omega)}(r)
\right|^2d\omega,\notag\\
&\int_{-\infty}^{\infty}\int_{0}^{2\pi}\int_{0}^{\pi}\left|
\partial_t\psi(t,r,\theta,\phi)\right|^2\sin\theta d\theta d\phi dt =\int_{-\infty}^{\infty}\sum_{m,\ell}\omega^2\left|\psi_{m\ell}^{(a\omega)}(r)
\right|^2d\omega.
\end{align*}

\subsection{Frequency localised multiplier estimates}\label{sect:MoraEstiCurrentsKerr}

To justify the separation procedures in Section \ref{sect:SeparateAngAndRadialEqs}, the assumption that the solution $\psi$ is integrable (recall Definition \ref{def:regularandintegrable}) is sufficient, but this is \textit{a priori} unknown. Therefore, we apply cutoffs in time to the solution both to the future and to the past, and then do separation for the wave equation which the gained function satisfies.

Let $\chi_1(x)$ be a smooth cutoff function which equals to $0$ for $x\leq 0$ and is identically $1$ when $x\geq 1$. For any $\varepsilon>0$ and $\tau\geq 2\varepsilon^{-1}$, we define
\begin{equation}
\chi=\chi_{\tau,\varepsilon}(t^*)=\chi_1(\varepsilon t^*)\chi_1(\varepsilon(\tau-t^*))
\end{equation}
and
\begin{equation}
\psi_{\chi}=\chi\psi
\end{equation}
in global Kerr coordinates. The cutoff function $\psi_{\chi}$ is now a smooth function supported in $0\leq t^*\leq \tau$, and $\psi_{\chi}=\psi$ in $\varepsilon^{-1}\leq t^*\leq \tau-\varepsilon^{-1}$. Moreover, it satisfies the following inhomogeneous wave equation
\begin{align}\label{eq:SWRWE CutoffVersion}
\mathbf{L}_s\psi_{\chi}&=\chi F+\Sigma\left(2\nabla^{\mu}\chi\nabla_{\mu}\psi
+\left(\Box_g\chi\right)\psi\right)-2isa\cos\theta\partial_t\chi\psi \notag\\
&\triangleq F_{\chi} .
\end{align}
The fact that the afore-defined $\chi$ is $\phi$-independent is used here in deriving this equation.

Note the fact that the functions $\psi_{\chi}$ and $F_{\chi}$ are compactly supported in $0\leq t^*\leq \tau$ at each fixed $r> r_+$, and the assumption that $\psi$ is a compactly supported smooth section solving one subequation of a linear wave system; hence $\psi_{\chi}$ is an integrable solution to \eqref{eq:SWRWE CutoffVersion} from Proposition \ref{prop:LWP LinearWaveEqSystem}. In the following discussions, we apply the mode decompositions as in Section \ref{sect:SeparateAngAndRadialEqs} to $\psi_{\chi}$ and $F_{\chi}$ and separate the wave equation \eqref{eq:SWRWE CutoffVersion} into the angular equation \eqref{eq:SWSHOpeEq Kerr} and radial equation \eqref{eq:SWRWRadialEqKerr}, but with $R_{m\ell}^{(a\omega)}=(\psi_{\chi})_{m\ell}^{(a\omega)}$ and $\left(F_{\chi}\right)_{m\ell}^{(a\omega)}$ in place of $\psi_{m\ell}^{(a\omega)}$ and $F_{m\ell}^{(a\omega)}$, respectively.

Before introducing the microlocal currents, we give some estimates for the inhomogeneous term $F_{\chi}$ here. Due to the fact that $\nabla \chi$ and $\Box_g \chi$ are supported in
\begin{equation}
\left\{0\leq t^* \leq \varepsilon^{-1}\right\}\cup
\left\{\tau -\varepsilon^{-1}\leq t^*\leq \tau\right\},
\end{equation}
it holds in the coordinate system $\left(t^*, r,\theta,\phi^*\right)$ that
\begin{subequations}\label{eq:PropertyofCutoffchi}
\begin{align}
|\partial_{t^*}\chi|\leq C\varepsilon, \quad &\left|\Box_g\chi\right|\leq C\varepsilon^2,\\
\left|\nabla^{\mu}\chi \nabla_{\mu}\psi\right|^2
+\left|\tfrac{ias\cos\theta\partial_t\chi\psi}{\Sigma}\right|^2&\leq C\varepsilon^2\left(|\partial\psi|^2
+a^2 M^{-2}\left|r^{-1}{\psi}\right|^2\right).
\end{align}
\end{subequations}
\subsubsection{Currents in phase space}
In what follows, we will suppress the dependence on $a$, $\omega$, $m$ and $\ell$ of the functions $R_{m\ell}^{(a\omega)}(r)$, $F_{m\ell}^{(a\omega)}(r)$, $\Lambda_{m\ell}^{(a\omega)}$, $\lambda_{m\ell}^{(a\omega)}(r)$, $(\underline{V})_{m\ell}^{(a\omega)}(r)$ and other functions defined by them, and when there is no confusion, the dependence on $r$ may always be implicit (except for the radial part $R(r)$ to avoid misunderstanding with the radius parameter $R$). Thus, we will write them as $R(r)$, $F$, $\Lambda$, $\lambda$ and $\underline{V}$.

We transform the radial equation \eqref{eq:SWRWRadialEqKerr} into a Schr\"{o}dinger form, which will be of great use to define the microlocal currents below, by setting
\begin{equation}
\label{def:u(r)andH(r)}
u(r)=\sqrt{r^2+a^2}R(r), \ \ \ \
H(r)=\tfrac{\Delta F_{\chi}(r)}{\left(r^2+a^2\right)^{3/2}}.
\end{equation}
The Schr\"{o}dinger equation for $u(r)$ reads after some calculations
\begin{equation}\label{eq:eqofu}
u(r)''+\left(\omega^2-V(r)\right)u(r)=H(r),
\end{equation}
where
\begin{align}\label{eq:SepaRadiSchroFormPoential}
V= &\omega^2-\tfrac{\KDelta}{(r^2+a^2)^2} \underline{V} + \tfrac{1}{(r^2+a^2)} \tfrac{d^2}{dr^{*2}} (r^2+a^2)^{1/2}\notag\\
=&\tfrac{4Mram\omega-a^2m^2+\Delta\left(\lambda+a^2\omega^2
\right)}{(r^2+a^2)^2}+\tfrac{\Delta}{(r^2+a^2)^4} \left( a^2\Delta + 2Mr(r^2-a^2) \right ),
\end{align}
and a prime $'$ denotes a partial derivative with respect to $r^*$ in tortoise coordinates.

Given any real, smooth functions $y,h$ and $f$, define the microlocal currents
\begin{subequations}\label{eq:currents}
\begin{align}
\label{eq:currenty(3)}
Q^y&=y ( |u'|^2+ (\omega^2-V)|u|^2),\\
\label{eq:currenth(3)modified}
  Q^h&=h\Re (u'\overline{u})-\tfrac{1}{2}h'|u|^2,\\
Q^f&=Q^{h=f'}+Q^{y=f}=f'\Re (u'\overline{u})
- (\tfrac{1}{2}f''-f (\omega^2-V))|u|^2+f |u'|^2.
\label{eq:currentf(3)}
\end{align}
\end{subequations}
The currents $Q^y$ and $Q^h$ are constructed via multiplying equation \eqref{eq:eqofu} by $2y\bar{u}'$ and $h\bar{u}$, respectively.
We calculate the derivatives of the above currents as follows
\begin{subequations}\label{eq:currentsderi}
\begin{align}
(Q^y)'&=y'(|u'|^2+
(\omega^2-V)|u|^2)-yV'|u|^2
+2y\Re (u'\overline{H} ),
\label{eq:Qyderi}\\
 (Q^h )'&=h ( |u' |^2+
 (V-\omega^2 )|u|^2 )-\tfrac{1}{2}h''|u|^2
+h\Re (u\overline{H} ),
\label{eq:Qhderi}\\
 (Q^f )'&=2f' |u' |^2-fV'|u|^2+
\Re (2fu'\overline{H}+f'u\overline{H} )
-\tfrac{1}{2}f'''|u|^2.
\label{eq:Qfderi}
\end{align}
\end{subequations}

\subsubsection{Frequency regimes}

We now define the separated frequency regimes.  Let $\omega_1$ and $\lambda_1$ be (potentially large) parameters and $\lambda_2$ be a (potentially small) parameter, all of which to be determined in the proof below. The frequency space is divided into
\begin{itemize}
\item $\mathcal{F}_{T}=\{(\omega, m, \ell): |\omega|\geq \omega_1, \lambda<\lambda_2\omega^2\}$;
\item $\mathcal{F}_{Tr}=\{(\omega, m, \ell): |\omega|\geq \omega_1, \lambda\geq\lambda_2\omega^2\}$;
\item $\mathcal{F}_{A}=\{(\omega, m, \ell): |\omega|\leq \omega_1, \Lambda>\lambda_1\}$;
\item $\mathcal{F}_{B}=\{(\omega, m, \ell): |\omega|\leq \omega_1, \Lambda\leq \lambda_1\}$.
\end{itemize}
We fix $R_0$ as in Section \ref{sect:MorawetzLarger} and  $r_0$ as in Section \ref{sect:Redshift}.
\begin{lemma}
\label{rem:PotentialDereasingForRadiusLarge}
For all $|a|<M$ and all frequency triplets $(\omega, m, \ell)$,
\begin{align}\label{eq:Lambdaesti:rough}
 \Lambda+3a^2\omega^2 \geq {}  \frac{3}{4}\max\{m^2+|m|, s^2+|s|\},
 \end{align}
 and there exists a sufficiently large $R_1\geq R_0+M$ and two constants $c=c(M)>0$ and $C=C(M)>0$ such that for all $r\geq R_1$,
\begin{align}
\label{eq:estimatederiVnearinf}
V'(r)<-cr^{-3}(\Lambda+4a^2\omega^2+m^2+1)+Cr^{-3}a^2\omega^2.
\end{align}
\end{lemma}

\begin{proof}
Let
\begin{subequations}\label{eq:decompofpotentialV}
\begin{align}
V={}&V_m+V_e,\\
V_m={}&\tfrac{\Delta(\Lambda+a^2\omega^2)}{(r^2+a^2)^2}.
\end{align}
\end{subequations}
Then
\begin{align}
\label{eq:decompofpotentialV1e}
V_e={}&\tfrac{4Mram\omega-a^2m^2}{(r^2+a^2)^2}
-\tfrac{a^2\Delta^2}{(r^2+a^2)^4}.
\end{align}
One finds for all $r\in [r_+,\infty)$,
\begin{align}
\label{eq:derivativeofVm}
V_m'={}&-2\Delta(r^2+a^2)^{-4}(\Lambda+a^2\omega^2)
(r^3-3Mr^2+a^2r+a^2M),\\
\label{eq:generalformofderiofVe}
V_e'\leq{}&Ca^2\Delta^2r^{-9}+C\Delta{r^{-2}}\left(r^{-4} |am\omega|+r^{-5}a^2m^2\right).
\end{align}
One adds $-3a^2\omega^2Y_{ m\ell}^{s}$ to both sides of \eqref{eq:SWSHOpeEq Kerr} and obtains
\begin{align}
\hspace{4ex}&\hspace{-4ex}(\tfrac{1}{\sin\theta}\partial_{\theta}\sin\theta\partial_{\theta}
-\tfrac{m^2+2ms\cos \theta+s^2}{\sin^2 \theta}+a^2\omega^2 \cos^2 \theta-2a\omega s\cos \theta -3a^2\omega^2)Y_{ m\ell}^{s}\notag\\
={}&-(\Lambda+3a^2\omega^2)
Y_{ m\ell}^{s}.
\end{align}
Rewrite the LHS as
\begin{align}
&\left(\tfrac{1}{\sin\theta}\partial_{\theta}\sin\theta\partial_{\theta}
-\tfrac{m^2+2ms\cos \theta+s^2}{\sin^2 \theta}\right)Y_{ m\ell}^{s}\notag\\
&-\left(a^2\omega^2 \sin^2 \theta+\tfrac{1}{2}(2a\omega  +s\cos \theta)^2 -\tfrac{1}{2}s^2\cos^2\theta\right)Y_{ m\ell}^{s}.
\end{align}
From the discussions around \eqref{eq:eigenvalueSWSHO}, the operator at the first line has eigenvalues not greater than $-\max\{m^2+|m|, s^2+|s|\}$, and the eigenvalues of the operator at the second line are less than or equal to $\frac{1}{2}s^2$, hence
\begin{align*}
\Lambda + 3a^2\omega^2 \geq \max\{m^2+|m|, s^2+|s|\}-\frac{1}{2}s^2 \geq \frac{3}{4}\max\{m^2+|m|, s^2+|s|\}.
\end{align*}
Hence from \eqref{eq:derivativeofVm}, $V_m'\leq -\frac{3}{2}r^{-3}(\Lambda+3a^2\omega^2)+Cr^{-3}a^2\omega^2$ for $r\gg 4M$ large enough.
Together with \eqref{eq:generalformofderiofVe}, this implies
\begin{align}
V'\leq {}& (-\tfrac{3}{2}(\Lambda+3a^2\omega^2)+Ca^2r^{-2})r^{-3}
+Cr^{-3}(a^2\omega^2+r^{-2}a^2m^2).
\end{align}
For $R_1$ large enough and $r\geq R_1$, there exists a constant $c_0\in (0,1)$ such that the coefficient of the first term on the RHS is negative and bounded above by $-c_0r^{-3}(\Lambda +3a^2\omega^2+m^2+s^2)$, and the second term of the RHS is bounded by $Cr^{-3}a^2\omega^2+\frac{c_0}{2}r^{-3}a^2m^2$; therefore, this completes the proof.
\end{proof}

We obtain a phase-space version of Morawetz estimate by choosing different functions $y$, $h$ and $f$ in each frequency regime in Sections \ref{sect:timedominatedfreq}--\ref{sect:boundfreq}. The proofs in Sections \ref{sect:timedominatedfreq}, \ref{sect:trapfreq} and \ref{sect:angulardominatedfreq} follow from the discussions in \cite[Sect.9.4--9.6]{dafermos2010decay}; nevertheless, we present here the entire proof for completeness.

\subsubsection{$\mathcal{F}_{T}$ regime (time-dominated frequency regime)}\label{sect:timedominatedfreq}

For $|a|/M\leq \veps_0\ll 1$, by choosing small enough $\lambda_2$ and large enough $\omega_1$, there exists a constant $c_0<1/2$ such that for frequency triplet in $\mathcal{F}_{T}$,
\begin{equation}
\omega^2-V\geq ({1-c_0})\omega^2\ \text{for}\ r\in[r_+,\infty).
\end{equation}
As to the potential $V$, apart from the fact in Lemma \ref{rem:PotentialDereasingForRadiusLarge}, it satisfies that for all $r^*$,
\begin{equation}
|V'|\leq C\Delta/r^{5}((\Lambda+3a^2\omega^2)+1).
\end{equation}
We choose a function $y$ to satisfy the following properties:
\begin{enumerate}
  \item $y\geq 0$, $y'\geq c\Delta/r^4$ in $(r_+, R_1-M]$,
  \item $y\geq 0$, $y'\geq 0$ in $[R_1-M, R_1]$,
  \item $y=1$ in $[R_1,\infty)$.
\end{enumerate}
In view of the above properties,
$
(Q^y)'\geq 2y\Re(u'\overline{H})-Ca^2r^{-3}\omega^2
$
for $r\geq R_1$, $
(Q^y)'\geq 2y\Re(u'\overline{H})
$
for $r\leq r_0$,
and in the region $r\in (r_0, R_1-M)$,  we have
\begin{align}
y'(\omega^2-V)-yV'\geq{}& (1-c_0)y'\omega^2 -Cy\Delta/r^5((\lambda+a^2\omega^2)+1)\notag\\
\geq{}&(c(1-c_0) - Cr^{-1}(\lambda_2+a^2+\omega_1^{-2}))\Delta/r^4\omega^2.
\end{align}
By taking both $\lambda_2$ and $\veps_0$ sufficiently small and $\omega_1$ sufficiently large, the RHS is larger than $c\Delta/r^4\omega^2$.
Hence, integrating \eqref{eq:Qyderi} over $[r_{-\infty}^*, r_{\infty}^*]$ gives
\begin{lemma}\label{lem:TimeDominatedFreqEstimate}
Let $r_{\infty}^*>R_1^*> (R_0+M)^*$ and $r_{-\infty}^*<r_0^*$ be arbitrary. There exists a small enough $\lambda_2$, a large enough $\omega_1$ and a sufficiently small $\veps_0$ such that we have in $\mathcal{F}_T$ frequency regime the following estimate for all $|a|/M\leq \veps_0$:
\begin{align}\label{eq:TimeDominatedFreqEstimate}
\hspace{4ex}&\hspace{-4ex}c\int_{r_0^*}^{R_0^*}\frac{\Delta}{r^{4}}
\left(\left|u'\right|^2
+\left(\omega^2+(\Lambda+4a^2\omega^2)
+1\right)|u|^2\right)\notag\\
\leq {}& \int_{r_{-\infty}^*}^{r_{\infty}^*}2y\Re\left(u'\overline{H}\right)
+Q^{y}\left(r_{\infty}^*\right)-Q^y\left(r_{-\infty}^*\right)
+C\int_{R_1^*}^{r_{\infty}^*}a^2r^{-3}\omega^2|u|^2.
\end{align}
\end{lemma}

\subsubsection{$\mathcal{F}_{Tr}$ regime (trapped frequency regime)}\label{sect:trapfreq}

The small parameter $\lambda_2$ has been fixed in Section \ref{sect:timedominatedfreq}, and we will fix $\omega_1$ here.
This is the only frequency regime where trapping phenomenon could happen.

Recall from \eqref{eq:derivativeofVm} that
\begin{equation}\label{eq:Vmprime:trap}
V_m'=\tfrac{-2\Delta}{(r^2+a^2)^4}(\Lambda+a^2\omega^2)
(r^3-3Mr^2+a^2r+a^2M),
\end{equation}
and the polynomial $P_3(r)=r^3-3Mr^2+a^2r+a^2M$ has a unique zero point $r_{a,M}$ in $r\in [r_+,\infty)$ which satisfies $|r_{a,M}-3M|\leq Ca^2$. Meanwhile, estimate \eqref{eq:Lambdaesti:rough} and the requirements of $|\omega|\geq \omega_1$ and $\lambda\geq\lambda_2\omega^2$ in this frequency regime imply that for large enough $\omega_1$ and sufficiently small $|a|/M\leq \veps_0$,
\begin{align}
\label{eq:Lambdaesti:rough2}
\Lambda+a^2\omega^2\geq c(\omega^2+\max\{m^2+|m|, s^2+|s|\}).
\end{align}
On the other hand,
\begin{equation}\label{eq:Veprime:trap}
r^3|V_e'|\leq C\Delta/r^2(a^2 m^2+|am\omega|+1).
\end{equation}
Therefore, given any small $\veps_1>0$, we have for sufficiently large $\omega_1$ (depending on the $\lambda_2$) and sufficiently small $\veps_0$ that $V'$ has no zeros outside the region $[r_{a,M}-\veps_1, r_{a,M}+\veps_1]$. In fact, $V'$ has a unique simple zero in this neighborhood. This can be seen as follows.  Note from \eqref{eq:decompofpotentialV1e} and \eqref{eq:derivativeofVm} that for sufficiently small $\veps_0$,
\begin{subequations}
\begin{align}
\left(\tfrac{(r^2+a^2)^4}{\Delta r^2}V_m'\right)'(r_{a,M})={}&-\tfrac{2\Delta(r_{a,M})(\Lambda+a^2\omega^2)}
{r_{a,M}^2+a^2}
\Big(1-\tfrac{a^2}{r_{a,M}^2}-\tfrac{2a^2M}{r_{a,M}^3}\Big)\notag\\
\leq{}& -c(\Lambda+a^2\omega^2),\\
\label{eq:Vesecondderi}
\Big|\left(\tfrac{(r^2+a^2)^4}{\Delta r^2}V_e'\right)'\Big|\lesssim{}&\Delta r^{-2}(a^2m^2+|am\omega|+1).
\end{align}
\end{subequations}
For $\omega_1$ sufficiently large and $\veps_0$ sufficiently small, $\Lambda+a^2\omega^2$ will be much bigger than the RHS of \eqref{eq:Vesecondderi}. This implies that in the small region $[r_{a,M}-\veps_1, r_{a,M}+\veps_1]$, we have
\begin{align}
\left(\tfrac{(r^2+a^2)^4}{\Delta r^2}V'\right)'
\leq{}& -c(\Lambda+a^2\omega^2).
\end{align}
Therefore, this proves that for sufficiently large $\omega_1$ and sufficiently small $\veps_0$, there is a unique simple zero, which we denote by $r_{m\ell}^{(a\omega)}$, in $[r_+,\infty)$ for any $(\omega, m, \ell)\in \mathcal{F}_{Tr}$, and $r_{m\ell}^{(a\omega)}\in [r_{a,M}-\veps_1, r_{a,M}+\veps_1]$. Moreover, for $r\leq r_{m\ell}^{(a\omega)}$,
\begin{align}
V'\geq -c(r-r_{m\ell}^{(a\omega)})\Delta/r^6(\Lambda +a^2\omega^2),
\end{align}
and for $r\geq r_{m\ell}^{(a\omega)}$,
\begin{align}
V'\leq -c(r-r_{m\ell}^{(a\omega)})\Delta/r^6(\Lambda +a^2\omega^2).
\end{align}

Choose a function $f$ in the $Q^f$ current to satisfy the following properties:
\begin{enumerate}
  \item $f'\geq 0$ for all $r^*$, and $f'\geq c\Delta/r^{4}$ for $r_0 \leq r\leq R_0$,
  \item $f$ changes sign from negative to positive at $r=r_{m\ell}^{(a\omega)}$, $\lim\limits_{r^*\to -\infty}f=-1$, and $f=1$ for some large $R_2\geq R_0$,
  \item $f'''\leq -c$ near $r=r_{m\ell}^{(a\omega)}$,
  \item $-fV'-\tfrac{1}{2}f'''\geq c(\omega_1)((\Lambda+a^2\omega^2)+\omega^2)
      (r-r_{m\ell}^{(a\omega)})^2\Delta/r^7$ for all $r^*$.
\end{enumerate}
By integrating \eqref{eq:Qfderi} over $[r_{-\infty}^*, r_{\infty}^*]$, we arrive at the following conclusion.
\begin{lemma}\label{lem:TrappedDominatedFreqEstimate}
Let $r_{\infty}^*>R_2^*\geq R_0^*$ and $r_{-\infty}^*<r_0^*$ be arbitrary. There exists a large $\omega_1$ and a sufficiently small $\veps_0$ such that in $\mathcal{F}_{Tr}$ frequency regime, it holds for all $|a|/M\leq \veps_0$ that
\begin{align}\label{eq:TrappedFreqEstimate}
\hspace{4ex}&\hspace{-4ex}\int_{r_0^*}^{R_0^*}\frac{\Delta}{r^4}\left(c\left|u'\right|^2
+r^{-1}\left[c+c(\omega_1)(1-r^{-1}r_{m\ell}^{(a\omega)})^2
\left(\omega^2+(\Lambda+4a^2\omega^2)
\right)\right]|u|^2\right)\notag\\
\leq & \int_{r_{-\infty}^*}^{r_{\infty}^*}\left(2f\Re\left(u'\overline{H}\right)
+f'\Re\left(u\overline{H}\right)\right)
+Q^{f}\left(r_{\infty}^*\right)-Q^f\left(r_{-\infty}^*\right).
\end{align}
\end{lemma}

\subsubsection{$\mathcal{F}_{A}$ regime (angular-dominated frequency regime)}\label{sect:angulardominatedfreq}

Here, we fix $\omega_1$ and a small parameter $\veps_1$ as in Section \ref{sect:trapfreq} and will choose $\lambda_1$ to be sufficiently large. The general idea is as follows. In this regime, for sufficiently small $|a|/M$, the zero points of $V'(r)$ in $[r_+,\infty)$ are located in a small neighborhood of $r=3M$.
The $Q^f$ current is utilized to achieve the positivity of the bulk term outside this small neighborhood, while in this small neighborhood, $hV|u|^2$ in $(Q^h)'$ with $h(r)$ being a positive constant is used to dominate the potentially negative bulk terms in $(Q^f)'$.

Recall from Section \ref{sect:trapfreq} properties \eqref{eq:Vmprime:trap} and \eqref{eq:Veprime:trap} of $V_m'$ and $V_e'$ and the polynomial $P_3(r)=r^3-3Mr^2+a^2r+a^2M$ has a unique zero point $r_{a,M}$ in $r\in [r_+,\infty)$ satisfying $|r_{a,M}-3M|\leq Ca^2$. In this frequency regime, it is clear that for $\lambda_1$ sufficiently large,  property \eqref{eq:Lambdaesti:rough2} is valid.
Therefore, given any small $\veps_1>0$ which is fixed as in Section \ref{sect:trapfreq}, we have for sufficiently small $\veps_0$ and sufficiently large $\lambda_1$ (depending on the $\omega_1$) that $V'$ has no zeros outside the region $[r_{a,M}-\veps_1, r_{a,M}+\veps_1]$.

Given $\veps_0\ll 1$, choose constants
$$r_+<r_0<r_{left1}<r_{left2}<r_{a,M}-\veps_1 < r_{a,M}+\veps_1 <r_{right1} <r_{right2} <R_0 <\infty.$$
For sufficiently small $\veps_1$ and $\veps_0$, there exist constants $c_1, c_2>0$ and functions $f$ and $h$ satisfying the following conditions:
\begin{enumerate}
\item $f'\geq 0$ for $r>r_+$, $f'\geq c_1$ for $r_0\leq r\leq R_0$,
\item $f\sim -1 +(r-r_+)$ near the horizon, $f\leq -\frac{1}{2}$ for $r\leq r_{left2}$, $f\geq \frac{1}{2}$ for $r\geq r_{right1}$ and $f=1$ for $r\geq R_0+1$ such that $fV'<0$ in $(r_+, r_{a,M}-\veps_1)\cup (r_{a,M}+\veps_1, \infty)$ and $fV'\leq c_2 V$ in $[r_{a,M}-\veps_1, r_{a,M}+\veps_1]$,
\item $h=0$ in $[r_+, r_{left1}]\cup [r_{right2}, \infty)$ and $h=2c_2$ in $[r_{left2},r_{right1}]$.
\end{enumerate}
We then take $\lambda_1$ sufficiently large such that for $r\in [r_+, r_{a,M}-\veps_1]\cup [r_{a,M}+\veps_1, \infty)$,
\begin{align*}
-\frac{1}{2}fV'-\frac{1}{2}f'''-\frac{1}{2}h''\geq 0,
\end{align*}
and for $r\in[r_{a,M}-\veps_1,r_{a,M}+\veps_1]$,
\begin{align*}
h(V-\omega^2)-\frac{1}{2}h''-fV'-\frac{1}{2}f'''
\geq{}&c_2V-2c_2\omega^2-\frac{1}{2}f'''
\geq{}cc_2\Lambda.
\end{align*}
Therefore, by integrating over $r_{-\infty}^*\leq r^*\leq r_{\infty}^*$, we conclude
\begin{lemma}\label{lem:AngularDominatedFreqEstimate}
Let $r_{\infty}^*>R_0^*$ and $r_{-\infty}^*<r_0^*$ be arbitrary. Fix $\omega_1$ as in Section \ref{sect:trapfreq}. There exists a large $\lambda_1$ and a sufficiently small $\veps_0$ such that in $\mathcal{F}_A$ frequency regime, we have for all $|a|/M\leq \veps_0$ the following estimate
\begin{align}\label{eq:AngularDominatedFreqEstimate}
\hspace{4ex}&\hspace{-4ex}c\int_{r_0^*}^{R_0^*}\left(\left|u'\right|^2+\Delta/ r^{5}\left(\omega^2+(\Lambda+4a^2\omega^2)
+1\right)|u|^2\right)\notag\\
\leq & \int_{r_{-\infty}^*}^{r_{\infty}^*}\left(2f\Re\left(u'\overline{H}\right)
+\left(f'+h\right)\Re\left(u\overline{H}\right)\right)
+Q^{f}\left(r_{\infty}^*\right)-Q^f\left(r_{-\infty}^*\right).
\end{align}
\end{lemma}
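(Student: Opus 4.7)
The plan follows the standard frequency-localised multiplier strategy outlined in \cite[Sect.9.4]{dafermos2010decay}. In the angular-dominated regime $|\omega|\le \omega_1$, $\Lambda\ge \lambda_1$, the potential $V$ is dominated by the angular piece $\Delta(\lambda+a^2\omega^2)/(r^2+a^2)^2$, which is a small perturbation of the Schwarzschild high-angular-momentum potential $\Delta\lambda/r^4$; the latter has a unique critical point at $r=3M$. The idea is to run a $Q^f$ current that produces positivity of the bulk $(Q^f)'$ away from a small neighbourhood $N_\epsilon$ of $r=3M$, and then add an auxiliary $Q^h$ current supported in $N_\epsilon$ whose bulk uses the largeness of $V\sim \lambda$ there to absorb any local sign deficit.

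Concretely, for $|a|/M$ sufficiently small and $\lambda_1$ sufficiently large one first verifies that $V'(r)$ has a unique zero $r_\ast=r_\ast^{(a\omega)}\in(r_+,\infty)$ lying in an arbitrarily thin neighbourhood $N_\epsilon$ of $r=3M$, with $V'$ changing from positive to negative across $r_\ast$ and $|V'(r)|\gtrsim \Delta(\lambda+a^2\omega^2)/r^6$ outside $N_\epsilon$. I then choose $f$ satisfying $f'\ge 0$ with $f'\ge c\Delta/r^4$ on $[r_c,R]$, $f$ changing sign from negative to positive at $r_\ast$ with $\lim_{r^*\to-\infty}f=-1$ and $f\equiv 1$ for $r\ge R_4$, and $f'''$ controlled so that $-\tfrac12 f'''$ is dominated by $-fV'$ outside $N_\epsilon$. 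From
\begin{equation*}
(Q^f)'=2f'|u'|^2-\bigl(fV'+\tfrac12 f'''\bigr)|u|^2+\Re\!\bigl(2f\overline{H}u'+f'\overline{H}u\bigr),
\end{equation*}
the coefficient of $|u|^2$ is then bounded below by $c\Delta(\lambda+a^2\omega^2+1)/r^5$ outside $N_\epsilon$.

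Inside $N_\epsilon$ the coefficient $-fV'-\tfrac12 f'''$ may become negative; this is repaired by $Q^h$ with $h\ge 0$ a smooth bump equal to a positive constant $h_0$ on $N_\epsilon$ and vanishing outside a slightly larger neighbourhood. The extra $|u|^2$-coefficient in $(Q^h)'$ is $h(V-\omega^2)-\tfrac12 h''$. On $N_\epsilon$ one has $V\gtrsim \lambda+a^2\omega^2$ since $\lambda\ge \lambda_1-1$, whereas $\omega^2\le \omega_1^2$ is bounded, hence $h(V-\omega^2)\gtrsim h_0(\omega^2+\lambda+a^2\omega^2+1)$ once $\lambda_1$ is chosen large enough relative to $\omega_1$. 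Taking $h_0$ large enough to dominate the local deficit from $-fV'-\tfrac12 f'''$ and from $-\tfrac12 h''$ on the transition strip, the total $|u|^2$-coefficient in $(Q^f+Q^h)'$ is bounded below by $c\Delta/r^5(\omega^2+\lambda+a^2\omega^2+1)$ uniformly in $(\omega,m,\ell)\in\mathcal F_A$, while the $|u'|^2$-coefficient $2f'+h$ is bounded below by $c$ on $[r_c,R]$.

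Integrating in $r^*$ from $r_{-\infty}^*$ to $r_\infty^*$, dropping the (nonnegative) bulk outside $[r_c^*,R^*]$, and using that $h$ is compactly supported so $Q^h(r_{-\infty}^*)=0$, delivers exactly \eqref{eq:AngularDominatedFreqEstimate} once the source cross-terms $\Re(2f\overline H u'+(f'+h)\overline H u)$ are collected on the right. The main obstacle is the simultaneous tuning of parameters: $\lambda_1$ must be large enough and $|a|/M$ small enough so that $r_\ast$ sits in an arbitrarily thin $N_\epsilon$ centred at $3M$ and the minorant $h_0\lambda_1$ swallows $\omega_1^2$, the negative part of $-fV'-\tfrac12 f'''$, and the transitional $-\tfrac12 h''$; this forces the hierarchical ordering $\omega_1$ first (fixed in the $\mathcal F_{Tr}$ analysis), then $\lambda_1=\lambda_1(\omega_1)$, then $a_0=a_0(\lambda_1,\omega_1)$, and it is precisely the need for uniformity over all frequency triplets in $\mathcal F_A$ that makes this delicate.
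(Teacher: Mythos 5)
Your proposal is correct and takes essentially the same route as the paper, which itself defers the explicit constructions of $f$ and $h$ to \cite[Sect.9.4]{dafermos2010decay} and only records the strategy: $Q^f$ gives bulk positivity away from the unique critical point of $V$ near $r=3M$, while $Q^h$ with $h$ a positive constant on a small neighbourhood of that point uses the largeness of $hV|u|^2$ (since $\lambda\geq\lambda_1-1\gg\omega_1^2$) to compensate. Your description of the sign conventions ($f$ changing sign from negative to positive at $r_\ast$ so that $-fV'\geq 0$ away from $r_\ast$, $h$ supported near $3M$ so $Q^h$ contributes no boundary term at $r^*_{-\infty}$) and the parameter hierarchy $\omega_1\to\lambda_1(\omega_1)\to a_0(\lambda_1,\omega_1)$ both match the cited argument.
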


\subsubsection{$\mathcal{F}_{B}$ regime (bounded frequency regime)}\label{sect:boundfreq}
Fix $\omega_1$ and $\lambda_1$ as above. This is a compact frequency regime. Since the Morawetz estimates in the Schwarzschild case are proved in Appendix \ref{appe:MoraFS1FIeqSchw}, the Morawetz estimates in this frequency regime for sufficiently small $|a|/M\leq \veps_0$ can be obtained by a stability argument. We shall show this below by choosing an appropriate function $f$ in the current $Q^f$.

In this regime, the minimum value of eigenvalues $\Lambda$ for the separated angular equation \eqref{eq:SWSHOpeEq Kerr} is close to $\max\{s^2+|s|,m^2+|m|\}$ due to smallness of $|a\omega|$. More explicitly, from \eqref{eq:SWSHOpeEq Kerr},
\begin{align}
\Lambda +a^2\omega^2\geq{}& \max\{s^2+|s|,m^2+|m|\}-2|a\omega s|
\notag\\
\geq {}&\max\{2,m^2+|m|\}-\tilde\varepsilon-\tilde\varepsilon^{-1}a^2\omega_1^2.
\end{align}
 By choosing $\tilde\varepsilon=1/2$ and $\veps_0$ sufficiently small, this gives directly that
\begin{align}
m^2\leq C(\lambda_1+a^2\omega_1^2).
\end{align}

We take
\begin{align}\label{eq:boundfreq:fchoice}
f=\frac{r^3-3Mr^2+a^2r+a^2M}{r^3}
 \end{align}
 in the current $Q^f$ and find for sufficiently small $|a|/M$,
\begin{equation}
f'=\frac{\Delta}{r^4(r^2+a^2)}(3Mr^2-2a^2r-3a^2M)\geq\frac{cM\Delta}{r^4}.
\end{equation}
Moreover, $f$ vanishes at a unique point $r_{a,M}$, which is already defined in Section \ref{sect:angulardominatedfreq}, with $|r_{a,M}-3M|\leq Ca^2$ for small enough $|a|/M$.
To obtain a Morawetz estimate in the phase space, we need to verify $-fV'-\frac{1}{2}f'''\geq 0$. Note from \eqref{eq:decompofpotentialV} that $V_e=O(r^{-3})Mam\omega+O(r^{-4})a^2m^2+O(r^{-4})a^2$, hence
\begin{align*}
&V_m'={}-2(r^2+a^2)^{-4}\Delta(r^3-3Mr^2+a^2r+a^2M)
(\Lambda+a^2\omega^2),\\
&|V_e'|\leq {}C(r^2+a^2)^{-1}\Delta(Mr^{-4}|am\omega|+r^{-5}a^2m^2+r^{-5}a^2),\\
&\left|-\frac{1}{2}f'''+\frac{3M\Delta}{(r^2+a^2)^4}(3r^2-20Mr+30M^2)\right|
\leq Ca^2\Delta r^{-7}.
\end{align*}
This implies
\begin{align}
\label{eq:boundedfreq:posibulk11}
-fV'-\frac{1}{2}f'''
\geq {}&-fV_m'-\frac{3M\Delta}{(r^2+a^2)^4}(3r^2-20Mr+30M^2)\notag\\
&
-\frac{C\Delta}{r^2+a^2}(Mr^{-4}|am\omega|+a^2r^{-5}(m^2+1)).
\end{align}
For the right-hand side of the first line, it equals
\begin{align}
\label{eq:boundedfreq:posibulk12}
\hspace{4ex}&\hspace{-4ex}\frac{2\Delta(r^3-3Mr^2+a^2r+a^2M)^2(\Lambda+a^2\omega^2)}{r^3(r^2+a^2)^4}
-\frac{3M\Delta(3r^2-20Mr+30M^2)}{(r^2+a^2)^4}\notag\\
\geq {}&\frac{\Delta}{r^3(r^2+a^2)^4}
\left[2(r-3M)^2r^4(2-\varepsilon)-3Mr^3(3r^2-20Mr+30M^2)\right]
\notag\\
&-\frac{C\Delta}{r^2+a^2}r^{-3}\varepsilon^{-1}a^2\omega_1^2
-Ca^2r^{-5}.
\end{align}
From the above two estimates, to show that there exists a constant $c>0$ such that $-fV'-\frac{1}{2}f'''\geq \frac{c\Delta r^3}{(r^2+a^2)^4}$, it is enough to prove there exists a positive constant $c_3>0$ such that
\begin{align}
\label{eq:boundfreq:poly}
4r(r-3M)^2 -3M(3r^2-20Mr+30M^2) \geq c_3r^3,
\end{align}
since then the second line of \eqref{eq:boundedfreq:posibulk12} will dominate over $\frac{c_3}{2}\frac{\Delta r^3}{(r^2+a^2)^4}$ by taking $\varepsilon=c_3/2$, which will in turn dominate over the sum of the second line of \eqref{eq:boundedfreq:posibulk11} and the last line of \eqref{eq:boundedfreq:posibulk12} by taking $|a|/M\leq \veps_0$ sufficiently small. The LHS of \eqref{eq:boundfreq:poly} is $4r^3-33Mr^2+96M^2r-90M^3$, and it is routine to verify that there is no zero point in $[r_+,\infty)$; hence the relation \eqref{eq:boundfreq:poly} follows.

In total, a choice of function $f$ as in \eqref{eq:boundfreq:fchoice} gives
\begin{align}
(Q^f)'\geq{}&{cM\Delta}{r^{-4}}|u'|^2+{c\Delta}{r^{-5}}|u|^2
+
\Re\left(2fu'\overline{H}+f'u\overline{H}\right).
\end{align}
By integrating over $[r_{-\infty}^*,r_{\infty}^*]$ and taking into account of the boundedness of $|\omega|$ and $\Lambda$, we arrive at
\begin{lemma}\label{lem:BoundFreqEstimate}
Let $r_{\infty}^*>R_0^*$ and $r_{-\infty}^*<r_0^*$ be arbitrary. Fix $\omega_1$ as in Section \ref{sect:trapfreq} and $\lambda_1$ as in Section \ref{sect:angulardominatedfreq}.  In $\mathcal{F}_B$ frequency regime,  there exists a $\veps_0>0$ such that the following estimate holds true for all $|a|/M\leq \veps_0$
\begin{align}\label{eq:BoundFreqEstimate}
\hspace{4ex}&\hspace{-4ex}c\int_{r_0^*}^{R_0^*}\left(\Delta/r^4 \left|u'\right|^2+\Delta/ r^{5}\left(\omega^2+(\Lambda+4a^2\omega^2)
+1\right)|u|^2\right)\notag\\
\leq & \int_{r_{-\infty}^*}^{r_{\infty}^*}\left(2f\Re\left(u'\overline{H}\right)
+f'\Re\left(u\overline{H}\right)\right)
+Q^{f}\left(r_{\infty}^*\right)-Q^f\left(r_{-\infty}^*\right).
\end{align}
\end{lemma}

\subsection{Summing up}\label{sect:summingandIEDcurrentesti}

Fix $\lambda_2,\omega_1,\veps_1,\lambda_1$ such that all the estimates in Sections \ref{sect:timedominatedfreq}--\ref{sect:boundfreq} and the relation $[r_{a,M}-\veps_1, r_{a,M}+\veps_1]\subseteq [r_{\text{trap}}^-,r_{\text{trap}}^+]=[2.9M,3.1M]$ hold true. This ensures that  all the $r_{m\ell}^{(a\omega)}$ in $\mathcal{F}_{Tr}$ frequency regime are lying inside $[r_{\text{trap}}^-,r_{\text{trap}}^+]$.
Summing over $m,\ell$ and integrating over $\omega$ in the estimates proved in the previous subsection, we obtain an inequality of the form
\begin{align}
\label{eq:sumup:mainidentity}
\mathbf{M}\leq{}\mathbf{E} + \mathbf{F}_{r_{\infty}^*}
+\mathbf{F}_{r_{-\infty}^*}+\mathbf{F}_{r_{-\infty}^*}
+\mathbf{R}.
\end{align}
Here, $\mathbf{M}$, $\mathbf{E}$, $\mathbf{F}_{r_{\infty}^*}$, $\mathbf{F}_{r_{-\infty}^*}$ and $\mathbf{R}$ are Morawetz integral terms, error terms from the source $H$, flux term at $r_{\infty}^*$, flux term at $r_{-\infty}^*$ and a remainder term coming from the last integral in \eqref{eq:TimeDominatedFreqEstimate}, respectively. We shall estimate these terms one by one below.

\subsubsection{Morawetz terms $\mathbf{M}$}
From the properties in Section \ref{sect:SeparateAngAndRadialEqs}, we have
\begin{align*}
\mathbf{M}\geq{}c\int_{\mathcal{D}(0,\tau)\cap \{r_0\leq r\leq R_0\}} \left(\frac{|\partial_{r^*}\psi_{\chi}|^2+|\psi_{\chi}|^2}{r^3}
+\chi_{\text{trap}}(r)\frac{|\partial_{t^*} \psi_{\chi}|^2+|\nablaslash\psi_{\chi}|^2}{r^3}\right).
\end{align*}
Since $\chi=1$ in $[\varepsilon^{-1}, \tau-\varepsilon^{-1}]$, we use the estimate \eqref{eq:FiniteInTimeEnergyEstimateInhomoSWRWETo T+eN} and obtain
\begin{align}
\label{eq:sumup:Moraterm}
\hspace{4ex}&\hspace{-4ex}c\int_{\mathcal{D}(0,\tau)\cap \{r_0\leq r\leq R_0\}} \left(\frac{|\partial_{r^*}\psi|^2+|\psi|^2}{r^3}
+\chi_{\text{trap}}(r)\frac{|\partial_{t^*} \psi|^2+|\nablaslash\psi|^2}{r^3}\right)
\notag\\
\leq{}&\mathbf{M}
+C(r_0,R_0)\varepsilon^{-1}\bigg(
\int_{\Sigma_{0}}\left|e(\psi)\right|
+e_0\sum_{j=0,1}E_{0}(\phi^j)+\mathcal{E}_{F,e_0,0,\tau}
\bigg)\notag\\
&
+C(r_0,R_0)\varepsilon^{-1}\bigg(
\int_{\Sigma_{\tau-\varepsilon^{-1}}}\left|e(\psi)\right|
+e_0\sum_{j=0,1}E_{\tau-\varepsilon^{-1}}(\phi^j)
\bigg).
\end{align}

\subsubsection{Error terms $\mathbf{E}$}\label{sect:EstiErrorTermCurrents}
The error terms $\mathbf{E}$ are of this form
\begin{align*}
\int_{r_{-\infty}^*}^{r_{\infty}^*}\int_{-\infty}^{\infty}\sum_{m,\ell}
\Re\left(
\tfrac{\Delta }{\left(r^2+a^2\right)^{3/2}}
\left(F_{\chi}\right)_{m\ell}^{(a\omega)}\left(c(r)\bar{u}_{m\ell}^{(a\omega)}
+d(r)\partial_{r^*}\bar{u}_{m\ell}^{(a\omega)}\right)\right)d\omega dr^*.
\end{align*}
Let
\begin{subequations}
\begin{align}
F_{\chi}={}&F_{\chi,c}+F_{\chi,s},\\
F_{\chi,c}={}&\Sigma\left(2\nabla^{\mu}\chi\nabla_{\mu}\psi
+\left(\Box_g\chi\right)\psi\right)-2isa\cos\theta\partial_t\chi\psi,\\
F_{\chi,s}={}&\chi F.
\end{align}
\end{subequations}
We decompose
\begin{align}
\mathbf{E}={}\mathbf{E}_{c,[r_{-\infty}^*, R_3^*]}
+\mathbf{E}_{c,[R_3^*,r_{\infty}^*]}
+\mathbf{E}_{s,[r_{-\infty}^*, R_3^*]}
+\mathbf{E}_{s,[R_3^*,r_{\infty}^*]},
\end{align}
where $R_3$ is a constant to be fixed and lies in $(\max\{R_2, R_1\}, \max\{R_2, R_1\}+M)$ such that $y=f=1$ and $h=0$ for $r\geq R_3$, the intervals $[r_{-\infty}^*, R_3^*]$ and $[R_3^*,r_{\infty}^*]$ are the $r^*$ region to integrate, and the subscripts $c$ and $s$ denote the error terms coming from the cutoff part $F_{\chi,c}$ and the source $F_{\chi,s}$ respectively.

By applying the Cauchy--Schwarz inequality, we have for any $\veps_2>0$ that
\begin{align}
\mathbf{E}_{c,[r_{-\infty}^*, R_3^*]}
\leq{}& \int_{\mathcal{D}(0,\tau)\cap[r_{-\infty},R_3]}
\frac{C\veps_2 }{ r^{3}}\left(|\psi_{\chi}|^2+|\partial_{r^*} \psi_{\chi}|^2\right)
+\frac{C}{\veps_2}F_{\chi,c}.
\end{align}
Note that we have incorporated powers of $R_3$ into the constant $C$.
The first term is bounded by $ \int_{\mathcal{D}(0,\tau)\cap[r_{-\infty},R_3]}
\frac{C\veps_2 }{r^{3}}\left(|\psi|^2+|\partial_{r^*} \psi|^2\right)$,
and, using estimates \eqref{eq:PropertyofCutoffchi} and in view of the support of the derivatives of $\chi$, we bound the second term by
\begin{align}
\hspace{4ex}&\hspace{-4ex}
C\veps_2^{-1}\varepsilon^2
\bigg(\int_{\mathcal{D}(0,\varepsilon^{-1})\cap[r_{-\infty},R_3]}
+\int_{\mathcal{D}(\tau-\varepsilon^{-1},\tau)
\cap[r_{-\infty},R_3]}\bigg)
|\partial\psi|^2\notag\\
\leq{}&C\veps_2^{-1}\varepsilon \bigg(\sum_{j=0,1}(E_{\tau-\varepsilon^{-1}}(\phi^j)
+E_{0}(\phi^j))
+\mathcal{E}_{F,1,0,\tau}\bigg),
\end{align}
where estimate \eqref{eq:FiniteInTimeEnergyEstimateInhomoSWRWETo T+eN} with $\tilde{e}=1$ is used in this inequality. Therefore, we arrive at
\begin{align}
\label{eq:sumup:errorterm:cnear}
\mathbf{E}_{c,[r_{-\infty}^*, R_3^*]}
\lesssim {} &
\frac{\varepsilon}{\veps_2} \sum_{j=0,1}
(E_{0}(\phi^j)+E_{\tau-\varepsilon^{-1}}(\phi^j))
+\frac{\varepsilon}{\veps_2}\mathcal{E}_{F,1,0,\tau}\notag\\
&
+\veps_2 \int_{\mathcal{D}(0,\tau)\cap [r_0,R_0]}
\frac{|\psi|^2+|\partial_{r^*} \psi|^2}{r^3}\notag\\
&+\veps_2 \bigg(\int_{\mathcal{D}(0,\tau)\cap [r_+,r_0]}
+\int_{\mathcal{D}(0,\tau)\cap [R_0,\infty)}\bigg)
\frac{|\psi|^2+|\partial_{r^*} \psi|^2}{r^3} .
\end{align}
Here, we have separated the integrals over different radius regions, the reason of which will be clear in Section \ref{sect:SumAndFinishProof}.

The same argument applies to $\mathbf{E}_{s,[r_{-\infty}^*, R_3^*]}$, yielding
\begin{align}
\label{eq:sumup:errorterm:snear}
\mathbf{E}_{s,[r_{-\infty}^*, R_3^*]}
\lesssim {} &
\veps_2 \int_{\mathcal{D}(0,\tau)\cap [r_0,R_0]}
\frac{|\psi|^2+|\partial_{r^*} \psi|^2}{r^3}
+\int_{\mathcal{D}(0,\tau)}{\veps_2^{-1}}
\frac{|F|^2}{r^3}\notag\\
&+\veps_2 \bigg(\int_{\mathcal{D}(0,\tau)\cap [r_+,r_0]}
+\int_{\mathcal{D}(0,\tau)\cap [R_0,\infty)}\bigg)
\frac{|\psi|^2+|\partial_{r^*} \psi|^2}{r^3}.
\end{align}

Consider the term $\mathbf{E}_{c,[R_3^*,r_{\infty}^*]}$. Since $f(r)=y(r)=1$, $h(r)=0$ and $\partial_{r^*}\chi=0$, and from the expression of $F_{\chi,s}$, it equals
\begin{align}\label{exp:sumup:errorterm:caway}
&\int_0^{\tau}\int_{\mathbb{S}^2}
\int^{r_{\infty}^*}_{R_3^*}\Re\left(\chi\partial_{r^*}
(\sqrt{r^2+a^2}\bar{\psi})
\frac{2\Sigma\Delta }{(r^2+a^2)^{3/2}}\nabla^{\mu}\chi\nabla_{\mu}\psi\right)
dr^*d\sigma_{\mathbb{S}^2}dt^*\notag\\
&+\int_0^{\tau}\int_{\mathbb{S}^2}
\int^{r_{\infty}^*}_{R_3^*}\Re\left(\partial_{r^*}
(\sqrt{r^2+a^2}\chi\bar{\psi})
\frac{\Sigma\Delta }{(r^2+a^2)^{3/2}}\left(\Box_g\chi\right)\psi\right)
dr^*d\sigma_{\mathbb{S}^2}dt^*\notag\\
&-\int_0^{\tau}\int_{\mathbb{S}^2}
\int^{r_{\infty}^*}_{R_3^*}\Re\left(\partial_{r^*}
(\sqrt{r^2+a^2}\chi\bar{\psi})
\frac{2ias\cos\theta \Delta }{(r^2+a^2)^{3/2}}\partial_t\chi\psi\right)
dr^*d\sigma_{\mathbb{S}^2}dt^*.
\end{align}
The derivatives of $\chi$ are supported in $[0,\varepsilon^{-1}]\cup[\tau-\varepsilon^{-1},\tau]$; hence, using estimates \eqref{eq:PropertyofCutoffchi}, the first term of \eqref{exp:sumup:errorterm:caway} is bounded by
 \begin{align}
 \label{eq:sumup:errorterm:cawaybulk}
 C\varepsilon\bigg(\int_{\mathcal{D}(0,\varepsilon^{-1})
\cap[R_3,r_{\infty}]}
+\int_{\mathcal{D}(\tau-\varepsilon^{-1},\tau)
\cap[R_3,r_{\infty}]}\bigg)
|\partial\psi|^2.
\end{align}
Estimate \eqref{eq:FiniteInTimeEnergyEstimateInhomoSWRWETo T+eN} yields that this is further controlled by
\begin{align}
\label{eq:sumup:errorterm:caway}
&C\int_{\Sigma_{0}}
\left|e(\psi)\right|
+Ce_0\sum_{j=0,1}{E}_{0}(\phi^j)
+C\int_{\Sigma_{\tau-\varepsilon^{-1}}}
\left|e(\psi)\right|
+Ce_0\sum_{j=0,1}{E}_{\tau-\varepsilon^{-1}}(\phi^j)
\notag\\
&
+C\mathcal{E}_{F,e_0,0,\varepsilon^{-1}}
+C\mathcal{E}_{F,e_0,\tau-\varepsilon^{-1},\tau}.
\end{align}
The second term can be rewritten as
\begin{align*}
\int_0^{\tau}\int_{\mathbb{S}^2}
\int^{r_{\infty}^*}_{R_3^*}\left(\partial_{r^*}
\left(\left|\sqrt{r^2+a^2}\chi\bar{\psi}\right|^2\right)
\frac{\Sigma\Delta }{(r^2+a^2)^{2}}\chi^{-1}\Box_g\chi\right)
dr^*d\sigma_{\mathbb{S}^2}dt^*.
\end{align*}
Applying an integration by parts in $r^*$ and noting the bounds for $r\geq R_3$
\begin{align}
|\partial_{r^*}\chi \Box_g\chi|
+|\chi \partial_{r^*}\Box_g\chi|+|\partial_r(\Sigma\Delta (r^2+a^2)^{-2})\chi\Box_g\chi|\lesssim \varepsilon r^{-2},
\end{align}
this is bounded by
\begin{align}
&C\bigg|\int_0^{\tau}\int_{\mathbb{S}^2}
\left(
\left|\sqrt{r^2+a^2}\chi\bar{\psi}\right|^2
\frac{\Sigma\Delta }{(r^2+a^2)^{2}}\chi^{-1}\Box_g\chi\right)_{r=R_3}
dr^*d\sigma_{\mathbb{S}^2}dt^*\bigg|\notag\\
&
+C\varepsilon\bigg(\int_{\mathcal{D}(0,\varepsilon^{-1})
\cap[R_3,r_{\infty}]}
+\int_{\mathcal{D}(\tau-\varepsilon^{-1},\tau)
\cap[R_3,r_{\infty}]}\bigg)
|\partial\psi|^2.
\end{align}
Here the boundary term at $r=r_{\infty}$ vanishes for sufficiently large $r_{\infty}$ from the reduction in Section \ref{sect:LWPandGlobalExistenceLinearWaveSystem}. A simple application of the mean-value principle in $r$ allows us to fix $R_3$ by requiring that the boundary term at $r=R_3$ is bounded by $C$ times an integral over $[\max\{R_2, R_1\}, \max\{R_2, R_1\}+M]$, which is in turn bounded by
$$
C\varepsilon\bigg(\int_{\mathcal{D}(0,\varepsilon^{-1})
\cap[R_3-M,r_{\infty}]}
+\int_{\mathcal{D}(\tau-\varepsilon^{-1},\tau)
\cap[R_3-M,r_{\infty}]}\bigg)
|\partial\psi|^2.
$$
The same argument as in estimating \eqref{eq:sumup:errorterm:cawaybulk} above then applies, suggesting the second term of \eqref{exp:sumup:errorterm:caway} is bounded by \eqref{eq:sumup:errorterm:caway}.
The Cauchy--Schwarz inequality applied to the last term of \eqref{exp:sumup:errorterm:caway} gives an upper bound
\begin{align}
C|a|\varepsilon\bigg(\int_{\mathcal{D}(0,\varepsilon^{-1})
\cap[R_3,r_{\infty}]}
+\int_{\mathcal{D}(\tau-\varepsilon^{-1},\tau)
\cap[R_3,r_{\infty}]}\bigg)
|\partial\psi|^2,
\end{align}
which can again be bounded by \eqref{eq:sumup:errorterm:caway}. These together imply that $\mathbf{E}_{c,[R_3^*,r_{\infty}^*]}$ is bounded by  \eqref{eq:sumup:errorterm:caway}.

Consider in the end the term $\mathbf{E}_{s,[R_3^*,r_{\infty}^*]}$. This equals
\begin{align*}
\int_0^{\tau}\int_{\mathbb{S}^2}
\int^{r_{\infty}^*}_{R_3^*}\Re\left(\chi\partial_{r^*}
(\sqrt{r^2+a^2}\bar{\psi})
\frac{\Delta }{(r^2+a^2)^{3/2}}\chi F\right)
dr^*d\sigma_{\mathbb{S}^2}dt^*.
\end{align*}
Since $R_3\geq R_0$, we utilize the Cauchy--Schwarz inequality and find
\begin{align}\label{eq:sumup:errorterm:saway}
\mathbf{E}_{s,[R_3^*,r_{\infty}^*]}
\lesssim{}
\veps_3\int_{\mathcal{D}(0,\tau)
\cap[R_0,r_{\infty}]}\frac{|\partial\psi|^2}{r^{1+\delta}}
+\veps_3^{-1}\int_{\mathcal{D}(0,\tau)
\cap[R_0,r_{\infty}]}\frac{|F|^2}{r^{3-\delta}}.
\end{align}

\subsubsection{Flux terms}
First note that the flux term $\mathbf{F}_{r_{\infty}^*}$ vanishes by choosing $ r_{\infty}^*$ sufficiently large in view of the reduction in Section \ref{sect:LWPandGlobalExistenceLinearWaveSystem}.
We have for the flux term $\mathbf{F}_{r_{-\infty}^*}$ that
\begin{align*}
\mathbf{F}_{r_{-\infty}^*}
\lesssim{}&
\int_{-\infty}^{\infty}\int_{\mathbb{S}^2} \left(|\partial_t \psi_{\chi}|^2 +|\partial_{r^*}\psi_{\chi}|^2
+a^2|\partial_{\phi}\psi_{\chi}|^2 \right)_{r^*=r_{-\infty}^*}d\sigma_{\mathbb{S}^2}dt\notag\\
&+\int_{-\infty}^{\infty}\int_{\mathbb{S}^2} \left(\Delta\left(|\nablaslash \psi_{\chi}|^2 +|\psi_{\chi}|^2 \right)\right)_{r^*=r_{-\infty}^*}d\sigma_{\mathbb{S}^2}dt.
\end{align*}
By taking the limit $r_{-\infty}^*\to -\infty$, the second line tends to $0$, and the term on the RHS of the first line has a limit, arriving at
\begin{align}
\label{eq:sumup:fluxterm}
\hspace{4ex}&\hspace{-4ex}\limsup_{r_{-\infty}^*\to -\infty}\mathbf{F}_{r_{-\infty}^*}\notag\\
\lesssim{}&\int_{-\infty}^{\infty}\int_{\mathbb{S}^2} \left(|\partial_t \psi_{\chi}|^2 +|\partial_{r^*}\psi_{\chi}|^2
+a^2|\partial_{\phi}\psi_{\chi}|^2 \right)_{r=r_+}d\sigma_{\mathbb{S}^2}dt\notag\\
\lesssim{}& \int_{\mathcal{H}^+(0,\tau)} \left(|\partial_t\psi+\partial_{r^*}\psi|^2
+a^2|\partial_{\phi}\psi|^2 \right)
+\varepsilon^2
\int_{\mathcal{H}^+(0,\tau)}
|\psi|^2\notag\\
\lesssim{}&E_{0}(\psi)
+(\veps_0^2+\varepsilon^2){E}_{\mathcal{H}^+(0,\tau)}(\psi)
+\veps_0^2\int_{\mathcal{D}(0,\tau)\cap[r_0,r_1]}
\left|\partial\psi\right|^2+\mathcal{E}_{F,e_0,0,\tau}.
\end{align}
Here, we used the fact that $\partial_t=\partial_{r^*}-\frac{a}{2Mr_+}\partial_{\phi}$ on $\mathcal{H}^+$ in the second step  and inequality \eqref{eq:SWRWEEnerEstiII} with $\tilde{e}=e_0(\veps_0)\sim \veps_0^2$ in the third step.

\subsubsection{Remainder term $\mathbf{R}$}

Recall that this remainder term is from the last integral in \eqref{eq:TimeDominatedFreqEstimate}, hence from \eqref{eq:PropertyofCutoffchi} and the fact that $R_1\geq R_0$,
\begin{align}
\label{eq:sumup:remainderR}
\mathbf{R}\lesssim{}&\int_{-\infty}^{\infty}\int_{R_1^*}^{\infty}\int_{\mathbb{S}^2}a^2r^{-3}|\partial_t (\sqrt{r^2+a^2}\psi_{\chi})|^2 d\sigma_{\mathbb{S}^2}dr^*dt \notag\\
\lesssim{}&\int_{\mathcal{D}(0,\tau)\cap [R_0,\infty)}a^2r^{-3}(|\partial_t \psi|^2 +\veps^2 |\psi|^2).
\end{align}

\subsubsection{Closing the proof}\label{sect:SumAndFinishProof}
The separation in variables in Section \ref{sect:SeparateAngAndRadialEqs} requires $\tau> 2\veps^{-1}$; hence
we separate the cases $\tau> 2\veps^{-1}$ and $0\leq \tau\leq 2\veps^{-1}$. For each $\psi=\phi_s^i$, the choice of $\veps$ is made to be different. The rest of the proof will be for each single $\psi=\phi_s^i$ and $F=F_s^i$.

In the first case where $\tau> 2\veps^{-1}$, we have obtained an identity \eqref{eq:sumup:mainidentity}. Add to the equation \eqref{eq:sumup:mainidentity} $C_1$ times the non-degenerate energy estimate \eqref{eq:SWRWEEnerEstiII} with $\tilde{e}=c_3$
and $c_3$ times the
Morawetz estimate in large $r$ region in Proposition \ref{prop:ImprovedMoraEstiLargerSWRWE} with $(\tau_1,\tau_2,R)=(0,\tau,R_0)$. Then for sufficiently small $\veps_0$ and $\varepsilon$, choosing $C_1$, $\veps_2$, $\veps_3$ and $c_3$ to satisfy $C_1\gg1$, $\veps_0^2+\varepsilon^2+e_0 + \veps_2+ \veps_3\ll c_3\ll 1$ and $C_1c_3\ll 1$ allows us to absorb the bulk terms of $\psi$ over $[r_0,r_1]$
and $[R_0-M,R_0]$ on the RHS of these two estimates, the bulk integrals over $[r_0,R_0]$ in both \eqref{eq:sumup:errorterm:cnear} and \eqref{eq:sumup:errorterm:snear} and the bulk integral over $[r_0,r_1]$ in \eqref{eq:sumup:fluxterm} by the LHS of \eqref{eq:sumup:Moraterm}, and absorb the last line of \eqref{eq:sumup:errorterm:cnear}, the last line of \eqref{eq:sumup:errorterm:snear}, the bulk integral of $\psi$ in \eqref{eq:sumup:errorterm:saway} and the horizon flux term in \eqref{eq:sumup:fluxterm} by the LHS of the just added two estimates. The parameters $C_1$, $\veps_2$, $\veps_3$ and $c_3$ are now fixed, and the LHS of the obtained estimate thus dominates over
\begin{align}
\label{eq:sumup:closeproof:LHS}
c\bigg({E}_{\tau}(\psi)+{E}_{\mathcal{H}^+(0,\tau)}(\psi)
+\int_{\mathcal{D}(0,\tau)} \mathbb{M}_{\text{deg}}(\psi)\bigg).
\end{align}
The remaining bulk integrals on the RHS of the added non-degenerate energy estimate and the Morawetz estimate in large radius region are bounded by
\begin{align}\label{eq:sumup:errorMoraredandnearinf}
C\bigg(\mathcal{E}_{F,1,0,\tau}(\psi)
+\int_{\mathcal{D}(0,\tau)}r^{-3+\delta}|F|^2\bigg).
\end{align} Note here that we actually used the following estimate from the Cauchy--Schwarz inequality for the last term in \eqref{eq:ImprovedMoraEstiLargerSWRWE}:
\begin{align*}
 \bigg|\int_{\mathcal{D}(0,\tau)\cap\{r\geq R_0-M\}}\Re\left(FX_w\overline{\psi}\right)\bigg|
\lesssim{}&\veps_4 \int_{\mathcal{D}(0,\tau)} \mathbb{M}_{\text{deg}}(\psi)
+\frac{1}{\veps_4}\int_{\mathcal{D}(0,\tau)}r^{-3+\delta}|F|^2,
\end{align*}
and chose $\veps_4$ small enough such that the term $\veps_4 \int_{\mathcal{D}(0,\tau)} \mathbb{M}_{\text{deg}}(\psi)$ is absorbed by \eqref{eq:sumup:closeproof:LHS}.

We now obtain an estimate in which the LHS dominates over the LHS of \eqref{eq:MorawetEnergyEstimateforAlmostScalarWave} and all the terms on the RHS are bounded by
\begin{align}\label{eq:sumup:mainerror:close}
C\veps^{-1}\sum\limits_{j=0,1}{E}_{0}(\phi_s^j)
+C\mathcal{E}_{F,1,0,\tau}(\psi)
+C\int_{\mathcal{D}(0,\tau)}r^{-3+\delta}|F|^2
\end{align}
except for terms in the following two categories:
\begin{enumerate}
\item\label{pt:1}
$\frac{1}{\varepsilon}\big(
\int_{\Sigma_{\tau-\varepsilon^{-1}}}\left|e(\psi)\right|
+e_0\sum\limits_{j=0,1}E_{\tau-\varepsilon^{-1}}(\phi^j)
\big)$ in \eqref{eq:sumup:Moraterm}, $\frac{\varepsilon}{\veps_2} \sum\limits_{j=0,1}{E}_{\tau-\varepsilon^{-1}}(\phi^j)$ in \eqref{eq:sumup:errorterm:cnear}, and
$\int_{\Sigma_{\tau-\varepsilon^{-1}}}
\left|e(\psi)\right|
+e_0\sum\limits_{j=0,1}{E}_{\tau-\varepsilon^{-1}}(\phi^j)$ in \eqref{eq:sumup:errorterm:caway};
\item\label{pt:2}
$\veps^{-1}\mathcal{E}_{F,e_0,0,\tau}(\psi)$ in \eqref{eq:sumup:Moraterm}
and $\veps_2^{-1}{\varepsilon}\mathcal{E}_{F,1,0,\tau}(\psi)$ in \eqref{eq:sumup:errorterm:cnear}.
\end{enumerate}
Consider first the three terms in Category \ref{pt:1}.
One takes $\psi=\phi_s^i$, $\tilde{e}=1$, $\tau_1=0$ and $\tau_2=\tau-{\varepsilon}^{-1}$ in \eqref{eq:SWRWEEnerEstiII} and sums over $i\in \{0,1\}$ gives
\begin{align*} \sum\limits_{i=0,1}E_{\tau-\varepsilon^{-1}}(\phi_s^i)
\lesssim{}& \sum_{i=0,1}\bigg(E_{0}(\phi_s^i)+\int_{\mathcal{D}(0,\tau)\cap [r_+,r_1]}|\partial\phi_s^i|^2+\mathcal{E}_{F,1,0,\tau}(\phi_s^i)\bigg).
\end{align*}
For the term $\int_{\Sigma_{\tau-\varepsilon^{-1}}}\left|e(\psi)\right|$, one obtains from estimate \eqref{eq:SWRWEEnerEstiII} with $\tilde{e}=e_0\sim \veps_0^2$, $\tau_2=\tau-\varepsilon^{-1}$ and $\tau_1=0$ that
\begin{align*}
\int_{\Sigma_{\tau-\varepsilon^{-1}}}\left|e(\psi)\right|
\lesssim {}&E_0(\psi)
+\veps_0^2\int_{\mathcal{D}(0,\tau)\cap [r_+,r_1]}|\partial\psi|^2
+\mathcal{E}_{F,1,0,\tau}(\psi).
\end{align*}
We combine the above two estimates and choose $\veps=\mu_i$, $\mu_i>0$ being arbitrary (but potentially small), then for sufficiently small $\veps_0\ll\mu_i$, all the terms in Category \ref{pt:1} are bounded by
\begin{align}
&C\mu_i^{-1}\left(E_0(\psi)
+\mathcal{E}_{F,1,0,\tau}(\psi)\right)\notag\\
&
+C\mu_i\sum_{j=0,1}\bigg(E_{0}(\phi_s^j)
+\int_{\mathcal{D}(0,\tau)}\mathbb{M}_{\text{deg}}(\phi_s^j)
+\mathcal{E}_{F,1,0,\tau}(\phi_s^j)\bigg).
\end{align}
Given the choice that $\veps= \mu_i$, the terms in Category \ref{pt:2} are clearly bounded by $C\mu_i^{-1}\mathcal{E}_{F,1,0,\tau}(\psi)$. In total, all the terms on the RHS have an upper bound
\begin{align}
&C\mu_i^{-1}\bigg(\sum_{j=0,1}E_{0}(\phi_s^j)
+\mathcal{E}_{F,1,0,\tau}(\psi)\bigg)
+C\int_{\mathcal{D}(0,\tau)}r^{-3+\delta}|F|^2\notag\\
&
+C\mu_i\sum_{j=0,1}\bigg(
\int_{\mathcal{D}(0,\tau)}\mathbb{M}_{\text{deg}}(\phi_s^j)
+\mathcal{E}_{F,1,0,\tau}(\phi_s^j)\bigg).
\end{align}
This closes the proof of the energy and Morawetz estimate \eqref{eq:MorawetEnergyEstimateforAlmostScalarWave} in the case that $\tau\geq 2\veps^{-1}\sim \mu_i^{-1}$.

For each $\psi=\phi_s^i$, the other case $\tau\leq 2\veps^{-1}= 2\mu_i^{-1}$ follows from a standard well-posedness argument of a general linear wave system.

\section{Proof of Theorem \ref{thm:EneAndMorEstiExtremeCompsNoLossDecayVersion2}}\label{sect:spin1case}

We derive energy and Morawetz estimates for $(\phi^0_s,\phi^1_s)$ in Section \ref{sect:estiphi01}, leaving the proof of Theorem \ref{thm:EneAndMorEstiExtremeCompsNoLossDecayVersion2} in Section \ref{sect:pfofmainthm}.

\subsection{Estimate for $(\phi^0,\phi^1)$}\label{sect:estiphi01}
For a large, positive constant $A$ to be fixed in the proof of Theorem \ref{thm:EneAndMorEstiExtremeCompsNoLossDecay}, we denote for the spin $s=+1$ component that
\begin{subequations}
\begin{align}
\widetilde{\mathbb{M}}(\phi^0_{+1},\phi^1_{+1})
&
=\widehat{\mathbb{M}}(r^{2-\delta}\phi^0_{+1})+
A\mathbb{M}_{\text{deg}}(\phi^1_{+1}),\\
{E}_{\tau}(\phi^0_{+1},\phi^1_{+1})&= {E}_{\tau}(r^{2-\delta}\phi^0_{+1})+A{E}_{\tau}(\phi^1_{+1}),\\
{E}_{\mathcal{H}^+(0,\tau)}(\phi^0_{+1},\phi^1_{+1})&=
{E}_{\mathcal{H}^+(0,\tau)}(r^{2-\delta}\phi^0_{+1})+
A{E}_{\mathcal{H}^+(0,\tau)}(\phi^1_{+1}),
\end{align}
and for spin $s=-1$ component that
\begin{align}
\widetilde{\mathbb{M}}(\phi^0_{-1},\phi^1_{-1})&=\mathbb{M}(\phi^0_{-1})+
A\mathbb{M}_{\text{deg}}(\phi^1_{-1})+\left|\nablaslash \phi^0_{-1}\right|^2,\\
{E}_{\tau}(\phi^0_{-1},\phi^1_{-1})&= {E}_{\tau}(\phi^0_{-1})+\int_{\Sigma_{\tau}}r\left|\nablaslash \phi_{-1}^0 \right|^2+A{E}_{\tau}(\phi^1_{-1}),\\
{E}_{\mathcal{H}^+(0,\tau)}(\phi^0_{-1},\phi^1_{-1})&=
{E}_{\mathcal{H}^+(0,\tau)}(\phi^0_{-1})+
A{E}_{\mathcal{H}^+(0,\tau)}(\phi^1_{-1}).
\end{align}
\end{subequations}
The main result for
$(\phi^0_{s},\phi^1_{s})$ is as follows.
\begin{thm}\label{thm:EneAndMorEstiExtremeCompsNoLossDecay}
In the DOC of a slowly rotating Kerr spacetime $(\mathcal{M},g=g_{M,a})$, assume $\phi^1_{s}$ and $\phi^0_{s}$ are
 defined as in \eqref{eq:DefOf Phi^0^1PosiSpin} for positive-spin component and in \eqref{eq:DefOf Phi^0^1NegaSpin} for negative-spin component of a regular\footnote{Recall here Definition \ref{def:regularandintegrable} of \textquotedblleft{regular.\textquotedblright}} solution $\mathbf{F}_{\alpha\beta}$ to the Maxwell equations \eqref{eq:MaxwellEqs}.
Then, for any $0<\delta<1/2$, there exist universal constants $\veps_0=\veps_0(M)$, $A(M,\delta)$ and $C_0=C(M,\delta,A)$ such that for all $|a|/M\leq \veps_0$,
the following estimate holds true for all $\tau\geq 0$:
\begin{align}\label{eq:MoraEstiFinal(2)Kerru^0andphi1BothSpinCompNoLossDecay}
 {E}_{\tau}(\phi^0_{s}, \phi^1_{s})
+{E}_{\mathcal{H}^+(0,\tau)}(\phi^0_{s},\phi^1_{s})
+\int_{\mathcal{D}(0,\tau)} \widetilde{\mathbb{M}}(\phi^0_{s},\phi^1_{s})
\leq
C_0{E}_{0}(\phi^0_{s},\phi^1_{s}).
\end{align}
\end{thm}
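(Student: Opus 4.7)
The plan is to apply Theorem~\ref{thm:MoraEstiAlmostScalarWave} separately to the two sub-equations $\mathbf{L}_s\phi^k_s=F^k_s$, $k=0,1$, of the system~\eqref{eq:ReggeWheeler Phi^01Kerr} for $s=+1$ or \eqref{eq:ReggeWheeler Phi^01KerrNega} for $s=-1$, and to form the weighted combination (estimate for $k=0$)$\,+\,A\cdot$(estimate for $k=1$) with a large constant $A=A(M,\delta)$ to be chosen. The LHS of the resulting inequality then contains $E_\tau(\phi^0_s)+AE_\tau(\phi^1_s)$, the analogous horizon fluxes, and $\int(\mathbb{M}_{\text{deg}}(\phi^0_s)+A\mathbb{M}_{\text{deg}}(\phi^1_s))$, while the RHS is initial data plus the aggregate source error $\mathcal{E}(F^0_s)+A\mathcal{E}(F^1_s)$. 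The task is then to absorb this aggregate error into the LHS and to upgrade the degenerate densities to $\widetilde{\mathbb{M}}(\phi^0_s,\phi^1_s)$ as specified.

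In the bounded region $r_1\leq r\leq R$ one has $|F^0_s|\lesssim\chi_{\text{trap}}|\phi^1_s|/r^2+|a|\,|\partial\phi^0_s|/r$ and $|F^1_s|\lesssim|a|\,|\partial\phi^0_s|$, where the $\chi_{\text{trap}}$-type factor is the coefficient $(r^2-3Mr+2a^2)/r^3$ of $\phi^1_s$ in $F^0_s$, which vanishes at $r=3M$ up to $O(a^2)$. The key step is upgrading $\mathbb{M}_{\text{deg}}(\phi^0_s)$ to a non-degenerate density in the trapped region: the algebraic identity $\phi^1_s=rY(r\phi^0_s)$ for $s=+1$ (respectively $rV(r\phi^0_s)$ for $s=-1$), together with $|\partial_r\phi^0_s|^2/r^{1+\delta}$ from the LHS and $|\partial_r\phi^1_s|^2/r^{1+\delta}$ from $A\mathbb{M}_{\text{deg}}(\phi^1_s)$, recovers a complementary null derivative of $\phi^0_s$; ellipticity of the angular part of $\mathbf{L}_s$ applied to $\mathbf{L}_s\phi^0_s=F^0_s$ then yields the full $|\partial\phi^0_s|^2/r^{1+\delta}$ without degeneracy in that compact region. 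A Hardy inequality in $r$ absorbs the $\int r^{-5+\delta}\chi_{\text{trap}}^2|\phi^1_s|^2$ contribution into $A\int\mathbb{M}_{\text{deg}}(\phi^1_s)$ (the angular spectral gap for spin $\pm1$ keeps the Hardy step quantitative), while the $a^2|\partial\phi^0_s|^2$ pieces of both errors fall into the newly non-degenerate density once $\veps_0\ll A^{-1/2}$. The mixed term $|\int\Sigma^{-1}\Re(F^k_s\,\partial_t\overline{\phi^k_s})|$ is handled analogously by Cauchy--Schwarz with matching weights.

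The main obstacle is the large-$r$ regime, because $\mathbb{M}_{\text{deg}}(\phi^0_s)$ carries only the weight $r^{-1-\delta}$ on $\partial\phi^0_s$, whereas $\int r^{-3+\delta}|F^1_s|^2\lesssim a^2\int r^{-3+\delta}|\partial\phi^0_s|^2$ demands the stronger weight $r^{-3}$ to be absorbed. For $s=+1$ I would invoke the improved Morawetz estimate~\eqref{eq:MoraInftyr2minusdeltaphi0} of Proposition~\ref{prop:ImprEnergyMoraForr2minusdeltaphi0}: unpacking $\widehat{\mathbb{M}}(r^{2-\delta}\phi^0_{+1})=|\partial(r^{2-\delta}\phi^0_{+1})|^2/r$ yields an integral dominated by $r^{3-2\delta}|\partial\phi^0_{+1}|^2/r$, which controls $a^2 r^{-3+\delta}|\partial\phi^0_{+1}|^2$ with several powers of $r$ to spare; this is precisely why the LHS of~\eqref{eq:MoraEstiFinal(2)Kerru^0andphi1BothSpinCompNoLossDecay} for $s=+1$ is phrased in terms of $r^{2-\delta}\phi^0_{+1}$ rather than $\phi^0_{+1}$. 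For $s=-1$ the analogous positive damping sign is absent, and I would instead read the definition~\eqref{eq:DefOf Phi^0^1NegaSpin} of $\phi^1_{-1}=rV(r\phi^0_{-1})$ as a transport equation for $\phi^0_{-1}$ along outgoing null rays: commuting with $\partial_\phi$ and integrating in $r$ produces a pointwise-in-$r$ bound of $|\partial_\phi\phi^0_{-1}|$ by initial data together with $\partial_\phi\phi^1_{-1}$, which directly handles the offending $a^2\int r^{-3+\delta}|\partial_\phi\phi^0_{-1}|^2$ and, as a by-product, yields the additional $|\nablaslash\phi^0_{-1}|^2$ bulk term that appears in $\widetilde{\mathbb{M}}$ for $s=-1$.

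Fixing $A=A(M,\delta)$ large, then $\veps_0=\veps_0(M,\delta)$ small, and taking $R$ to be the maximum of the radii from Propositions~\ref{prop:ImprovedMoraEstiLargerSWRWE} and~\ref{prop:ImprEnergyMoraForr2minusdeltaphi0}, all error contributions are absorbed and~\eqref{eq:MoraEstiFinal(2)Kerru^0andphi1BothSpinCompNoLossDecay} follows. The single most delicate point is the large-$r$ argument for $s=-1$: the transport step must commute cleanly with $\partial_\phi$ and preserve the regularity of $\widetilde{\Phi_2}$ at spatial infinity, and this is what forces the particular shape of $\widetilde{\mathbb{M}}$ in the negative-spin case.
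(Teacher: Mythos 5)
Your overall strategy matches the paper's: apply Theorem~\ref{thm:MoraEstiAlmostScalarWave} to each of the two sub-equations, form the combination (estimate for $\phi^0_s$)$\,+\,A\cdot$(estimate for $\phi^1_s$), upgrade $\mathbb{M}_{\text{deg}}(\phi^0_s)$ to a trapping-non-degenerate density via the identity $\phi^1_s = rY(r\phi^0_s)$ (resp.\ $rV(r\phi^0_s)$) and elliptic estimates, invoke Proposition~\ref{prop:ImprEnergyMoraForr2minusdeltaphi0} for $s=+1$ at large $r$, and use the transport-type bound on $|\nablaslash\phi^0_{-1}|$ in terms of $|\nablaslash\phi^1_{-1}|$ (this is Lemma~\ref{lem:estiphi0nega}). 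All of these are the right pieces, and for $s=-1$ you also correctly identify the extra $|\nablaslash\phi^0_{-1}|^2$ bulk term as a by-product.

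However, there is a genuine gap in your treatment of the mixed error term $A\bigl|\int_{\mathcal{D}(0,\tau)}\Sigma^{-1}\Re\bigl(F^1_s\,\partial_t\overline{\phi^1_s}\bigr)\bigr|$, which you assert ``is handled analogously by Cauchy--Schwarz with matching weights.'' Any Cauchy--Schwarz splitting here necessarily produces a bulk contribution of the form $\int w\,|\partial_t\phi^1_s|^2$ for some weight $w$. But in the trapped annulus $r^-_{\text{trap}}\leq r\leq r^+_{\text{trap}}$, the only control your combined left-hand side gives on $\phi^1_s$-derivatives other than $\partial_r\phi^1_s$ is through $\mathbb{M}_{\text{deg}}(\phi^1_s)$, whose $\partial_{t^*}$- and angular parts carry the cutoff $\chi_{\text{trap}}(r)$ that vanishes identically there; the elliptic-estimate upgrade removes the trapping degeneracy only for $\phi^0_s$, not for $\phi^1_s$. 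Since $F^1_s$ carries a factor of $a$ but no compensating $\chi_{\text{trap}}$-degeneracy, no choice of Cauchy--Schwarz weight, no largeness of $A$, and no smallness of $|a|/M$ can absorb the resulting $\int|\partial_t\phi^1_s|^2$ piece over the trapped region. The paper must therefore avoid Cauchy--Schwarz for this term: for the piece proportional to $a^2\Re(\partial_t\phi^0\partial_t\overline{\phi^1})$ it uses the exact divergence identity $2\Re(\partial_t\phi^0\,\partial_t\overline{\phi^1})=Y\bigl(r^2|\partial_t\phi^0|^2\bigr)$ (equation~\eqref{eq:EstiIKerr1 mostannoyingterm1Neun2o}), whose integration by parts yields only $\phi^0$-bulk and controllable boundary contributions; and for the piece proportional to $a\Re(\partial_\phi\phi^0\partial_t\overline{\phi^1})$ it substitutes $\partial_t\phi^1$ via the algebraic relation~\eqref{eq:partialtexpressedbyYpartialphipartialr} in terms of $Y\phi^1$, $\partial_\phi\phi^1$, $\partial_r\phi^1$, then integrates by parts using $\phi^1=rY(r\phi^0)$ (resp.\ $rV(r\phi^0)$), leaving only the non-degenerate $\partial_r\phi^1$ piece plus boundary terms controlled by averaging (equation~\eqref{eq:ControlInTrappingRegionPosiSpin1}). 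Without an argument of this structure, your proof cannot close in the trapped region.
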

We will always suppress the subscript $s$ when it is clear which spin component we are treating.
Notice that the assumption \textquotedblleft{regular\textquotedblright} for the solution to the Maxwell equations implies that $(\phi^0,\phi^1)$ is a smooth solution to the linear wave system \eqref{eq:ReggeWheeler Phi^01Kerr} or \eqref{eq:ReggeWheeler Phi^01KerrNega} and vanishes near spatial infinity.
Hence, we can perform the proof in Section \ref{Sect:ProofofTheoremOnSWRWKerr} and apply the result in Theorem \ref{thm:MoraEstiAlmostScalarWave} to each separate equation of these linear wave systems.

We shall now prove this theorem for spin $+1$ and $-1$ components in Section \ref{sec:positivespin1} and Section \ref{sec:NegativeSpin1},
respectively. Note the constants $\mu_0\ll 1$, $\mu_1\ll 1$ and $A\gg 1$ are to be fixed. For convenience, we use the notation $G_1\lesssim_{\veps_0} G_2$ for two functions in the region $\mathcal{D}(0,\tau)$ if there exists a universal constant $C$ and a constant $\tilde{C}=C(\mu_0^{-1},\mu_1^{-1},A)$ such that
\begin{align}\label{def:AlesssimaB}
G_1 \leq {}& CG_2+\tilde{C}{E}_{0}(\phi^0_s,\phi^1_s)\notag\\
&+\tilde{C}\veps_0\bigg({E}_{\tau}(\phi^0_s,\phi^1_s)
+{E}_{\mathcal{H}^+(0,\tau)}(\phi^0_s,\phi^1_s)
+\int_{\mathcal{D}(0,\tau)}
\widetilde{\mathbb{M}}(\phi^0_s,\phi^1_s)\bigg).
\end{align}

\subsubsection{Spin $+1$ component}\label{sec:positivespin1}

It is manifest from the Cauchy--Schwarz inequality that
\begin{align}\label{eq:ErrorTermsControlphi0Posispin}
\mathcal{E}(F_{+1}^0)\lesssim_{\veps_0} \sqrt{A}\int_{\mathcal{D}(0,\tau)}\mathbb{M}_{\text{deg}}(\phi^1) +\frac{1}{\sqrt{A}}\int_{\mathcal{D}(0,\tau)}
\widehat{\mathbb{M}}(r^{2-\delta}\phi^0),
\end{align}
and all terms in $\mathcal{E}(F_{+1}^1)$ are bounded by $\tfrac{Ca^2}{M^2}\int_{\mathcal{D}(0,\tau)}
\widetilde{\mathbb{M}}(\phi^0,\phi^1)$ except for the term
\begin{align}\label{eq:errortermnew}
\hspace{4ex}&\hspace{-4ex}\bigg|\int_{\mathcal{D}(0,\tau)}
\Re\left(\Sigma^{-1}F_{+1}^1\partial_{t^*}
\overline{\phi^1}\right)\bigg|\notag\\
\lesssim{}&\bigg|\int_{\mathcal{D}(0,\tau)}
\Re\left(\frac{a^2(\PR)}{(\R)^2\Sigma}\phi^0\partial_{t^*}
\overline{\phi^1}\right)\bigg|\notag\\
& + \bigg|\int_{\mathcal{D}(0,\tau)}\frac{a(r^2-a^2)}{\Sigma(\R)}
\Re\left(\partial_{\phi}\phi^0\partial_{t}
\overline{\phi^1}\right)\bigg|.
 \end{align}
We use an integration by parts in $t^*$ for the first term on the RHS of \eqref{eq:errortermnew} and find it is bounded by
\begin{align*}
\frac{C|a|}{M}\bigg({E}_{0}(\phi^0,\phi^1)
+{E}_{\tau}(\phi^0,\phi^1)
+{E}_{\mathcal{H}^+(0,\tau)}(\phi^0,\phi^1)
+\int_{\mathcal{D}(0,\tau)}\widetilde{\mathbb{M}}(\phi^0,\phi^1)\bigg).
\end{align*}
As to the other term on the RHS of \eqref{eq:errortermnew}, we split it into three sub-integrals with $r_+ < \check{r}_2 < r_{\text{trap}}^- \leq r_{\text{trap}}^+ < R_4<\infty$ to be chosen:
$$
\bigg|\bigg(\int_{\mathcal{D}(0,\tau)\cap[r_+, \check{r}_2]}+\int_{\mathcal{D}(0,\tau)\cap[R_4, \infty)}
+\int_{\mathcal{D}(0,\tau)\cap[\check{r}_2,R_4]}\bigg)
\frac{a(r^2-a^2)}{\Sigma(\R)}
\Re(\partial_{\phi}\phi^0\partial_{t}
\overline{\phi^1})\bigg|.
 $$
The first two sub-integrals are controlled by $\tfrac{Ca^2}{M^2}\int_{\mathcal{D}(0,\tau)}
\widetilde{\mathbb{M}}(\phi^0,\phi^1)$ directly.
We substitute
\begin{equation*}
\partial_t \phi^1=
(\R)^{-1}\left(\Delta Y\phi^1-a\partial_{\phi}\phi^1+(\R)\partial_{r^*}\phi^1\right)
\end{equation*}
into the third sub-integral and find it is bounded by
\begin{align}\label{eq:ControlInTrappingRegionPosiSpin1}
&\bigg|\int_{\mathcal{D}(0,\tau)\cap[\check{r}_2,R_4]}
\frac{a\Delta(r^2-a^2)}{\Sigma(\R)^{5/2}}\Re\left(\partial_{\phi}
({\sqrt{\R}\phi^0})Y\overline{\phi^1}\right)\bigg|\notag\\
&+\bigg|\int_{\mathcal{D}(0,\tau)\cap[\check{r}_2,R_4]}
\frac{a^2(r^2-a^2)}{\Sigma(\R)^2}Y\Big(\big|
\partial_{\phi}(\sqrt{\R}\phi^0)\big|^2\Big)\bigg|\notag\\
& +\bigg|\int_{\mathcal{D}(0,\tau)\cap[\check{r}_2,R_4]}
\frac{a(r^2-a^2)}{\Sigma(\R)}
\Re\left(\partial_{\phi}
\overline{\phi^0}\partial_{r^*}\phi^1\right)\bigg|
\lesssim_{\veps_0} {}
0.
\end{align}
In proving the above inequality, we applied integration by parts to the first two lines and controlled the boundary terms at $R_4$ and $\check{r}_2$ by appropriately choosing these two radius parameters such that these boundary terms are bounded via an average of integration by $\tfrac{C|a|}{M} \int_{\mathcal{D}(0,\tau)}\widetilde{\mathbb{M}}(\phi^0,\phi^1)$.
In conclusion,
\begin{align}\label{eq:ErrorTermsControlphi1Posispin}
\mathcal{E}(F_{+1}^1)\lesssim_{\veps_0} 0.
\end{align}

Applying estimate \eqref{eq:MorawetEnergyEstimateforAlmostScalarWave} to $\psi=\phi^i_{+1}$ and $F=F_{+1}^i$ $(i=0,1)$ and combined with estimates \eqref{eq:MoraInftyr2minusdeltaphi0}, \eqref{eq:ErrorTermsControlphi0Posispin} and \eqref{eq:ErrorTermsControlphi1Posispin}, we obtain
\begin{subequations}
\begin{align}\label{eq:EnerMoraphi0V1}
\hspace{6ex}&\hspace{-6ex} {E}_{\tau}(r^{2-\delta}\phi^0)
+{E}_{\mathcal{H}^+(0,\tau)}(r^{2-\delta}\phi^0)
+\int_{\mathcal{D}(0,\tau)} \widehat{\mathbb{M}}_{\text{deg}}(r^{2-\delta}\phi^0)\notag\\
\lesssim {}&
C(\mu_0^{-1})\left({E}_{0}(r^{2-\delta}\phi_{+1}^0)+
E_{0}(\phi_{+1}^1)\right)+\mu_0^{-1}
\mathcal{E}(F_{+1}^0)
\notag\\
&
+\mu_0\sum_{j=0,1}\bigg(
\int_{\mathcal{D}(0,\tau)}\mathbb{M}_{\text{deg}}(\phi_{+1}^j)
+\mathcal{E}(F_{+1}^j)\bigg)\notag\\
\lesssim_{\veps_0}{}&\mu_0^{-1}\bigg(\sqrt{A}\int_{\mathcal{D}(0,\tau)}\mathbb{M}_{\text{deg}}(\phi^1) +\frac{1}{\sqrt{A}}\int_{\mathcal{D}(0,\tau)}
\widehat{\mathbb{M}}(r^{2-\delta}\phi^0)\bigg)\notag\\
&+\mu_0\sum_{j=0,1}
\int_{\mathcal{D}(0,\tau)}\mathbb{M}_{\text{deg}}(\phi_{+1}^j),\\
\label{eq:EnerMoraphi1V1}
\hspace{6ex}&\hspace{-6ex} {E}_{\tau}(\phi^1)
+{E}_{\mathcal{H}^+(0,\tau)}(\phi^1)
+\int_{\mathcal{D}(0,\tau)} {\mathbb{M}}_{\text{deg}}(\phi^1)\notag\\
\lesssim {}&C(\mu_1^{-1})\left({E}_{0}(r^{2-\delta}\phi_{+1}^0)+
E_{0}(\phi_{+1}^1)\right)
+\mu_1^{-1}\mathcal{E}(F_{+1}^1)
\notag\\
&
+\mu_1\sum_{j=0,1}\bigg(
\int_{\mathcal{D}(0,\tau)}\mathbb{M}_{\text{deg}}(\phi_{+1}^j)
+\mathcal{E}(F_{+1}^j)\bigg)\notag\\
\lesssim_{\veps_0}{}&\mu_1\bigg(\sqrt{A}\int_{\mathcal{D}(0,\tau)}\mathbb{M}_{\text{deg}}(\phi^1) +\frac{1}{\sqrt{A}}\int_{\mathcal{D}(0,\tau)}
\widehat{\mathbb{M}}(r^{2-\delta}\phi^0)\bigg)\notag\\
&+\mu_1\sum_{j=0,1}
\int_{\mathcal{D}(0,\tau)}\mathbb{M}_{\text{deg}}(\phi_{+1}^j).
\end{align}
\end{subequations}
Adding an $A$ multiple of \eqref{eq:EnerMoraphi1V1} to estimate \eqref{eq:EnerMoraphi0V1}, one deduces
\begin{align}\label{eq:EnerMoraphi01addedBulkTerm}
\hspace{6ex}&\hspace{-6ex} {E}_{\mathcal{H}^+(0,\tau)}(\phi^0,\phi^1)
+{E}_{\tau}(\phi^0,\phi^1)
+\int_{\mathcal{D}(0,\tau)}
\widetilde{\mathbb{M}}(\phi^0,\phi^1)\notag\\
\lesssim_{\veps_0} {}& (\mu_0^{-1/2}A^{-1/2}+\mu_0+\mu_1\sqrt{A}
+A\mu_1)\widehat{\mathbb{M}}(r^{2-\delta}\phi^0_{+1})\notag\\
&
+
(\mu_0^{-1}\sqrt{A}+\mu_0+A^{3/2}\mu_1+A\mu_1)\mathbb{M}_{\text{deg}}(\phi^1_{+1}).
\end{align}
Here, we have made use of the fact that
\begin{equation}\label{eq:equivalentoftwobulktermsinMorawesti}
\int_{\mathcal{D}(0,\tau)}\widetilde{\mathbb{M}}(\phi^0,\phi^1)\sim
\int_{\mathcal{D}(0,\tau)}\left(\widehat{\mathbb{M}}_{\text{deg}}
(r^{2-\delta}\phi^0)+
A\mathbb{M}_{\text{deg}}(\phi^1)\right).
\end{equation}
In the trapped region, $\widehat{\mathbb{M}}_{\text{deg}}(r^{2-\delta}\phi^0)+
A\mathbb{M}_{\text{deg}}(\phi^1)$ bounds over $A|Y\phi^0-\phi^0/\sqrt{\R}|^2$, $|\partial_{r^*}\phi^0|^2$ and $|\phi^0|^2$,
and then over $|\phi^0|^2$, $|Y(\phi^0)|^2$ and $|H(\phi^0)|^2$, $H=\partial_t+a/(r^2+a^2)\partial_{\phi}$ being a globally timelike vector field in the interior of $\mathcal{D}$ with
$-g(H,H)=\Delta\Sigma/(r^2+a^2)^2.$
Relation \eqref{eq:equivalentoftwobulktermsinMorawesti} then follows from elliptic estimates. By requiring
$\mu_0\ll 1$, $A\mu_0^{2}\gg 1$, and $\mu_1A\ll 1$, the RHS of \eqref{eq:EnerMoraphi01addedBulkTerm} can be absorbed by the LHS. Furthermore, once these constants are fixed, taking $\veps_0$ sufficiently small allows us to absorb the energy and Morawetz terms with $\tilde{C}\veps_0$ coefficients which are implicit in $\lesssim_{\veps_0}$, and this proves the $s=+1$ case of \eqref{eq:MoraEstiFinal(2)Kerru^0andphi1BothSpinCompNoLossDecay}.

\subsubsection{Spin $-1$ component}\label{sec:NegativeSpin1}
We first state a lemma controlling $|\nablaslash \phi^0_{-1}|$ by $|\nablaslash\phi^1_{-1}|$.
\begin{lemma}
\label{lem:estiphi0nega}
In a fixed subextremal Kerr spacetime $(\mathcal{M},g_{M,a})$ $(|a|<M)$ and given any $R\geq 5M$, the following estimate holds for spin $-1$ component:
\begin{align}\label{eq:estiphi0byphi1negafinal1}
\hspace{4ex}&\hspace{-4ex}\intRinfty |\nablaslash\phi^0|^2 +\int_{\Sigma_{\tau}\cap [R,\infty)}r|\nablaslash\phi^0|^2\notag\\
\lesssim {}&
\int_{\mathcal{D}(0,\tau)\cap [R-M,\infty)} {r}^{-1}{|\nablaslash\phi^1|^2}
+\intcut {r}^{-1}{|\nablaslash \phi^0|^2}\notag\\
&+\int_{\Sigma_0\cap [R-M, \infty)}r|\nablaslash \phi^0|^2.
\end{align}
\end{lemma}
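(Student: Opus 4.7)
The starting point is the definition $\phi^1_{-1}=rV(r\phi^0_{-1})$, which, using $V(r)=1$, rearranges into the outgoing null transport equation
\begin{equation*}
V\phi^0_{-1}=\frac{\phi^1_{-1}}{r^2}-\frac{\phi^0_{-1}}{r}.
\end{equation*}
Suppressing the subscript $-1$, the plan is to apply the spin-weighted angular derivatives $\nablaslash_i$ ($i=1,2,3$) and control $|\nablaslash\phi^0|^2$ via an energy identity for this transport equation, weighted to match exactly the boundary and bulk weights on the left-hand side of \eqref{eq:estiphi0byphi1negafinal1}.

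The first step is to commute $V=\partial_r+\tfrac{r^2+a^2}{\Delta}\partial_t+\tfrac{a}{\Delta}\partial_\phi$ with each $\nablaslash_i$. A direct calculation from the formulas in \eqref{SpinWeightedAngularDerivaBasisOnSphere} gives $[\partial_r,\nablaslash_i]=-\nablaslash_i/r$, $[\partial_t,\nablaslash_i]=0$, and $[\partial_\phi,\nablaslash_1]=-\nablaslash_2$, $[\partial_\phi,\nablaslash_2]=\nablaslash_1$, $[\partial_\phi,\nablaslash_3]=0$. Hence $[V,\nablaslash_i]=-\nablaslash_i/r+(a/\Delta)\,\mathcal{J}_i$ where $\mathcal{J}_i$ rotates between $\nablaslash_1$ and $\nablaslash_2$. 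Pairing with $\overline{\nablaslash_i\phi^0}$ and summing over $i$, the $a/\Delta$-piece contributes a term proportional to $\Im(\overline{\nablaslash_1\phi^0}\nablaslash_2\phi^0)$ that is purely imaginary, so its real part vanishes identically. Combining with the commuted transport equation yields the clean divergence identity
\begin{equation*}
V|\nablaslash\phi^0|^2=\frac{2\Re(\overline{\nablaslash\phi^0}\nablaslash\phi^1)}{r^2}-\frac{4|\nablaslash\phi^0|^2}{r}.
\end{equation*}

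Next, I multiply this identity by the weight $w(r)=r\chi_R(r)$, where $\chi_R$ is a smooth cutoff equal to $1$ for $r\geq R$ and $0$ for $r\leq R-1$, and integrate against the spacetime volume form $\Sigma\,dt\,dr\,d\sigma_{\mathbb{S}^2}$ over $\mathcal{D}(0,\tau)$. Integration by parts using $V=\partial_r+\tfrac{r^2+a^2}{\Delta}\partial_t+\tfrac{a}{\Delta}\partial_\phi$ (with $\partial_\phi$-contribution vanishing by periodicity) produces $\Sigma_\tau$ and $\Sigma_0$ boundary terms with coefficient $\tfrac{(r^2+a^2)\Sigma r\chi_R}{\Delta}\approx r^3$, matching $\int_{\Sigma_\tau}r|\nablaslash\phi^0|^2\,d\mathrm{Vol}_{\Sigma_\tau}$, together with a bulk term whose coefficient, after combining the $-2r\cdot r\chi_R$ from $\partial_r\Sigma$, the $-(r\chi_R)'\Sigma$ from $V(w)$, and the $-4/r$ from the identity, works out to
\begin{equation*}
\chi_R\bigl(r^2+3a^2\cos^2\theta\bigr)-r\chi_R'\Sigma.
\end{equation*}
For $r\geq R$ this is $\geq r^2$, yielding precisely the bulk quantity $\int_{\mathcal{D}\cap[R,\infty)}|\nablaslash\phi^0|^2\,d\mathrm{Vol}$ after noting $\Sigma\approx r^2$.

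It then remains to absorb the source $2\int\chi_R\Re(\overline{\nablaslash\phi^0}\nablaslash\phi^1)\Sigma/r\,dt\,dr\,d\sigma$ and the cutoff term $\int r\chi_R'\Sigma|\nablaslash\phi^0|^2$ into the right-hand side of \eqref{eq:estiphi0byphi1negafinal1}. The former is treated by the Cauchy--Schwarz split
\begin{equation*}
\frac{2\chi_R}{r}|\nablaslash\phi^0||\nablaslash\phi^1|\Sigma\leq\epsilon\chi_R r^2|\nablaslash\phi^0|^2+\epsilon^{-1}\chi_R\frac{\Sigma^2}{r^4}|\nablaslash\phi^1|^2,
\end{equation*}
where the first piece is absorbed by the positive bulk for $\epsilon$ small, and the second, since $\Sigma^2/r^4\lesssim 1\lesssim r$ at large $r$, is controlled by $\int_{\mathcal{D}\cap[R-1,\infty)}|\nablaslash\phi^1|^2/r\,d\mathrm{Vol}$. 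The cutoff contribution is supported in $[R-1,R]$ where $r\chi_R'\Sigma\approx r^3$, so it is bounded by $\int_{\mathcal{D}\cap[R-1,R)}|\nablaslash\phi^0|^2/r\,d\mathrm{Vol}$ (up to an $R$-dependent constant absorbed in the universal constant). The main technical obstacle in this plan is precisely the cancellation at the commutator step: it is essential that the $a$-dependent cross terms in $[V,\nablaslash_i]$ drop out of $\Re(\sum_i\overline{\nablaslash_i\phi^0}[V,\nablaslash_i]\phi^0)$, for otherwise one would pick up an $a/\Delta$-weighted contribution that could destabilise the sign of the bulk coefficient at large $r$. Once this cancellation is secured, the rest of the argument is a standard weighted transport estimate.
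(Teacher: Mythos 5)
Your proposal is correct and follows essentially the same route as the paper: you view $\phi^1_{-1}=rV(r\phi^0_{-1})$ as a transport equation along $V$, apply the spin-weighted angular derivatives, observe that after summing over $i$ the $a/\Delta$-weighted rotation terms coming from $[V,\nablaslash_i]$ drop out of the real part, and then integrate the resulting divergence identity against an $r$-weight times the cutoff. The paper's Lemma~\ref{lem:estiphi0nega} packages the same computation as the identity \eqref{eq:estiphi1negaiden1} (applied to $r^2\nablaslash_i\phi^0$ and integrated with the flat measure $drdt^*\sin\theta d\theta d\phi^*$ rather than the spacetime volume form), so the differences are purely cosmetic; if anything your version is slightly more explicit about the commutator cancellation, which in the paper is implicit in the instruction to sum over $i$.
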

\begin{proof}
We start with an identity that for the cutoff function $\chi_R(r)$,
any real value $\beta$ and $\nablaslash_i$ $(i=1,2,3)$ as defined in \eqref{SpinWeightedAngularDerivaBasisOnSphere}:
\begin{align}\label{eq:estiphi1negaiden1}
\hspace{6ex}&\hspace{-6ex}V\left(\chi_R r^{\beta}(\R)|r\nablaslash_i\phi^0|^2\right)
-\beta \chi_R r^{\beta-1}(\R)|r\nablaslash_i\phi^0|^2
\notag\\
&-\partial_r\chi_R r^{\beta}(\R)|r\nablaslash_i\phi^0|^2={}\chi_R r^{2+\beta}\Re\big(\nablaslash_i\phi^0
\overline{\nablaslash_i\phi^1}\big).
\end{align}
Integrating \eqref{eq:estiphi1negaiden1} over $\mathcal{D}(0,\tau)$ with the measure $d\check{V}=drdt^*\sin\theta d\theta d\phi^*$
for $\beta=-1$, and applying Cauchy--Schwarz to the last term, it is manifest that estimate \eqref{eq:estiphi0byphi1negafinal1} follows from summing over $i=1,2,3$.
\end{proof}

An application of the Cauchy--Schwarz inequality gives
\begin{align}\label{eq:ErrorTermsControlphi0PosiNega}
\mathcal{E}(F_{-1}^0)\lesssim_{\veps_0} {}& \sqrt{A}\int_{\mathcal{D}(0,\tau)}
\mathbb{M}_{\text{deg}}(\phi^1) +\frac{1}{\sqrt{A}}\int_{\mathcal{D}(0,\tau)}
{\mathbb{M}}(\phi^0),\\
\label{eq:ErrorTermsControlphi1PosiNega:v1}
\mathcal{E}(F_{-1}^1)\lesssim_{\veps_0}{}&
\bigg|\int_{\mathcal{D}(0,\tau)}{\Sigma}^{-1}\Re\big(F_{-1}^1 \partial_t\overline{\phi^1}\big)\bigg|.
\end{align}
For the last term in \eqref{eq:ErrorTermsControlphi1PosiNega:v1}, we have
\begin{align}\label{eq:remainingspin-1:phit}
\hspace{4ex}&\hspace{-4ex}
\bigg|\int_{\mathcal{D}(0,\tau)}\frac{1}{\Sigma}\Re\left(F_{-1}^1 \partial_t\overline{\phi^1}\right)\bigg|
\notag\\
\lesssim{}&
\bigg|\int_{\mathcal{D}(0,\tau)}
\Re\left(\frac{a^2(\PR)}{(\R)^2\Sigma}\phi^0\partial_{t^*}
\overline{\phi^1}\right)\bigg|\notag\\
& + \bigg|\int_{\mathcal{D}(0,\tau)}\frac{a(r^2-a^2)}{\Sigma(\R)}
\Re\left(\partial_{\phi}\phi^0\partial_{t}
\overline{\phi^1}\right)\bigg|.
\end{align}
By performing an integration by parts in $t^*$, the first integral term on the RHS is bounded by
\begin{align*}
\frac{Ca^2}{M^2}\bigg({E}_{0}(\phi^0,\phi^1)
+{E}_{\tau}(\phi^0,\phi^1)
+{E}_{\mathcal{H}^+(0,\tau)}(\phi^0,\phi^1)
+\int_{\mathcal{D}(0,\tau)}\widetilde{\mathbb{M}}(\phi^0,\phi^1)\bigg).
\end{align*}
As to the second integral term on the RHS of \eqref{eq:remainingspin-1:phit}, we split it into two sub-integrals with $r_+ < \check{r}_3 < r_{\text{trap}}^- $ to be chosen:
\begin{equation}
\bigg|\bigg(\int_{\mathcal{D}(0,\tau)\cap[r_+, \check{r}_3]}
+\int_{\mathcal{D}(0,\tau)\cap[\check{r}_3,\infty)}\bigg)
\frac{a(r^2-a^2)}{\Sigma(\R)}
\Re\left(\partial_{\phi}\phi^0\partial_{t}\overline{\phi^1}
\right)\bigg|,
\end{equation}
where the first sub-integral is clearly bounded by ${C|a|}M^{-1} \int_{\mathcal{D}(0,\tau)}\widetilde{\mathbb{M}}(\phi^0,\phi^1)$.
We substitute
$\partial_t \phi^1=
(\R)^{-1}\left(\Delta V\phi^1-a\partial_{\phi}\phi^1-(\R)\partial_{r^*}\phi^1\right)$
into the second sub-integral and find it is bounded by
\begin{align*}
&\bigg|\int_{\mathcal{D}(0,\tau)\cap[\check{r}_3,\infty]}
\frac{a\Delta(r^2-a^2)}{\Sigma(\R)^{5/2}}\Re\left(\partial_{\phi}
({\sqrt{\R}\phi^0})V\overline{\phi^1}\right)\bigg|\notag\\
&+\bigg|\int_{\mathcal{D}(0,\tau)\cap[\check{r}_3,\infty]}
\frac{a^2(r^2-a^2)}{\Sigma(\R)^2}V\Big(\big|
\partial_{\phi}\big(\sqrt{\R}\phi^0\big)\big|^2\Big)\bigg|
\notag\\
&+\bigg|\int_{\mathcal{D}(0,\tau)\cap[\check{r}_3,\infty]}
\frac{a(r^2-a^2)}{\Sigma(\R)}\Re\left(\partial_{\phi}
\overline{\phi^0}\partial_{r^*}\phi^1\right)\bigg|.
\end{align*}
Integrating by parts for the first two lines then shows
\begin{align}
\bigg|\int_{\mathcal{D}(0,\tau)\cap[\check{r}_3,\infty)}
2a\Sigma^{-1}
\Re\left(\partial_{\phi}\phi^0\partial_{t}
\overline{\phi^1}\right)\bigg|
\lesssim_{\veps_0} {}& 0.
\end{align}
Therefore,
\begin{align}
\label{eq:ErrorTermsControlphi1PosiNega}
\mathcal{E}(F_{-1}^1)\lesssim_{\veps_0}{}&0.
\end{align}

We use estimate \eqref{eq:MorawetEnergyEstimateforAlmostScalarWave} for $\psi=\phi^i_{-1}$ and $F=F_{-1}^i$ $(i=0,1)$, add a small multiple of estimate \eqref{eq:estiphi0byphi1negafinal1} with $R=R_0$ to estimate \eqref{eq:MorawetEnergyEstimateforAlmostScalarWave} with $(\psi,F)=(\phi^0_{-1},F_{-1}^0)$, and combine with estimates \eqref{eq:ErrorTermsControlphi0PosiNega} and \eqref{eq:ErrorTermsControlphi1PosiNega}, arriving at
\begin{subequations}
\begin{align}\label{eq:EnerMoraphi0V1Ne}
\hspace{6ex}&\hspace{-6ex} {E}_{\tau}(\phi_{-1}^0)
+{E}_{\mathcal{H}^+(0,\tau)}(\phi_{-1}^0)
+\int_{\Sigma_{\tau}}r\left|\nablaslash \phi_{-1}^0 \right|^2
+\int_{\mathcal{D}(0,\tau)} \left({\mathbb{M}}_{\text{deg}}(\phi_{-1}^0)+\left|\nablaslash \phi_{-1}^0\right|^2\right)\notag\\
\lesssim {}&
\mu_0^{-1}\left({E}_{0}(\phi_{-1}^0)+
E_{0}(\phi_{-1}^1)+
\mathcal{E}(F_{-1}^0)\right)
+\int_{\Sigma_{0}}r\left|\nablaslash \phi_{-1}^0 \right|^2
\notag\\
&
+\mu_0\sum_{j=0,1}\bigg(
\int_{\mathcal{D}(0,\tau)}\mathbb{M}_{\text{deg}}
(\phi_{-1}^j)
+\mathcal{E}(F_{-1}^j)\bigg)
+\int_{\mathcal{D}(0,\tau)}\mathbb{M}_{\text{deg}}(\phi_{-1}^1)\notag\\
\lesssim_{\veps_0}{}&\mu_0^{-1}
\bigg(\sqrt{A}\int_{\mathcal{D}(0,\tau)}
\mathbb{M}_{\text{deg}}(\phi_{-1}^1) +\frac{1}{\sqrt{A}}\int_{\mathcal{D}(0,\tau)}
{\mathbb{M}}(\phi_{-1}^0)\bigg)\notag\\
&+\mu_0\sum_{j=0,1}
\int_{\mathcal{D}(0,\tau)}\mathbb{M}_{\text{deg}}
(\phi_{-1}^j)
+\int_{\mathcal{D}(0,\tau)}
\mathbb{M}_{\text{deg}}(\phi_{-1}^1),\\
\label{eq:EnerMoraphi1V1Ne}
\hspace{6ex}&\hspace{-6ex} {E}_{\tau}(\phi_{-1}^1)
+{E}_{\mathcal{H}^+(0,\tau)}(\phi_{-1}^1)
+\int_{\mathcal{D}(0,\tau)} {\mathbb{M}}_{\text{deg}}(\phi_{-1}^1)\notag\\
\lesssim {}&\mu_1^{-1}\left({E}_{0}(\phi_{-1}^0)+
E_{0}(\phi_{-1}^1)+
\mathcal{E}(F_{-1}^1)\right)
\notag\\
&
+\mu_1\sum_{j=0,1}\bigg(
\int_{\mathcal{D}(0,\tau)}
\mathbb{M}_{\text{deg}}(\phi_{-1}^j)
+\mathcal{E}(F_{-1}^j)\bigg)\notag\\
\lesssim_{\veps_0}{}&
\mu_1\bigg(\sqrt{A}\int_{\mathcal{D}(0,\tau)}
\mathbb{M}_{\text{deg}}(\phi_{-1}^1) +\frac{1}{\sqrt{A}}\int_{\mathcal{D}(0,\tau)}
{\mathbb{M}}(\phi_{-1}^0)\bigg)\notag\\
&+\mu_1\sum_{j=0,1}
\int_{\mathcal{D}(0,\tau)}\mathbb{M}_{\text{deg}}
(\phi_{-1}^j).
\end{align}
\end{subequations}
The rest of the proof is the same as the argument in the last paragraph of Section \ref{sec:positivespin1}, and we omit it.

\subsection{Finishing the Proof of Theorem \ref{thm:EneAndMorEstiExtremeCompsNoLossDecayVersion2}}\label{sect:pfofmainthm}

Using \eqref{eq:expandformofLswaveop}, equation \eqref{eq:ReggeWheeler Phi^0KerrNega} of $\phi^0_{-1}$ can be rewritten as
\begin{align}\label{def:Qphi0phi1}
\tfrac{\Delta}{\R}Y\phi^1
={}&
\triangle_{\mathbb{S}^2}\phi^0
-2i\left(\tfrac{\cos\theta}{\sin^2\theta}\partial_{\phi}-a\cos\theta \partial_t\right)\phi^0
-\tfrac{1}{\sin^2\theta}\phi^0
+\tfrac{r\Delta}{(\R)^2}\phi^1\notag\\
&
+a^2\sin^2\theta\partial_{tt}^2\phi^0
+2a\partial_{t\phi}^2\phi^0
+\tfrac{a^2\Delta}{(\R)^2}\phi^0
-\tfrac{2ar}{\R}\partial_{\phi}\phi^0.
\end{align}
By multiplying $r^{-1}\overline{\phi^0}$ on both sides, taking the real part and integrating over $\Sigma_{\tau}\cap \{r\geq R_5\}$ $(\tau\geq 0)$ with large $R_5$ to be fixed, it follows
\begin{align}\label{eq:ControlOnemorerweightenergyStep1}
\int_{\Sigma_{\tau}}r\left|\nablaslash \phi^0\right|^2\lesssim {} E_{\tau}(\phi^0) +E_{\tau}(\phi^1) +a^2 \bigg|\int_{\Sigma_{\tau}\cap \{r\geq R_5\}}\frac{\sin^2\theta}{r}\Re(\partial_{tt}^2 \phi^0 \overline{\phi^0})\bigg|.
\end{align}
We substitute into the last integral the following identity
\begin{equation}
\partial_{tt}^2 =\left(\frac{\Delta}{\R}V-\frac{a\partial_{\phi}}{\R}
-\partial_{r^*}\right)\left(\frac{\Delta}{\R}V
-\frac{a\partial_{\phi}}{\R}
-\partial_{r^*}\right),
\end{equation}
use the replacement $V\phi^0= (\R)^{-1}\phi^1 -\frac{r}{\R}\phi^0$ and perform integration by parts, finally ending with
\begin{align}\label{eq:ControlOnemorerweightenergyStep2}
\bigg|\int_{\Sigma_{\tau}\cap \{r\geq R_5\}}\frac{\sin^2\theta}{r}\Re(\partial_{tt}^2 \phi^0 \overline{\phi^0})\bigg|\lesssim{} E_{\tau}(\phi^0) +E_{\tau}(\phi^1)+\int_{\Sigma_{\tau}\cap \{r=R_5\}} |\partial \phi^0|^2.
\end{align}
We can appropriately choose $R_5$ such that the last term is bounded by $CE_{\tau}(\phi^0)$.
From estimates \eqref{eq:ControlOnemorerweightenergyStep1}, \eqref{eq:ControlOnemorerweightenergyStep2} and \eqref{eq:MoraEstiFinal(2)Kerru^0andphi1BothSpinCompNoLossDecay} for spin $-1$ component, we conclude by dropping the terms $\inttau |\nablaslash \phi^0_{-1}|^2$ and $\int_{\Sigma_{\tau}}r|\nablaslash \phi_{-1}^0 |^2$ on LHS of \eqref{eq:MoraEstiFinal(2)Kerru^0andphi1BothSpinCompNoLossDecay} that
\begin{align}\label{eq:EnergyMoraNegaSpinLastSecond}
\hspace{4ex}&\hspace{-4ex}\sum_{i=1,2}\left({E}_{\tau}(\phi^i_{-1})+ {E}_{\mathcal{H}^+(0,\tau)}(\phi^i_{-1})\right)
+\int_{\mathcal{D}(0,\tau)} \left(\mathbb{M}(\phi^0_{-1}) + \mathbb{M}_{\text{deg}}(\phi^1_{-1})\right)\notag\\
\lesssim {}&{E}_{0}(\phi^0_{-1})+E_0(\phi^1_{-1}).
\end{align}

We have from \eqref{eq:DefOf Phi^0^1NegaSpin} that
\begin{align}
\phi_{-1}^1={}& \sqrt{\R}V\left((\R)^{-1/2}\Delta\psi_{[-1]}\right)\notag\\
={}&-\Delta Y\psi_{[-1]}+2[(r^2+a^2)\partial_t +a\partial_{\phi}]\psi_{[-1]}+\Big(r-\tfrac{2a^2M}{\R}\Big)\psi_{[-1]}.
\end{align}
Hence, from \eqref{eq:TME0orderNega}, the equation for $\psi_{[-1]}$ can be rewritten as
\begin{align}\label{eq:TMEregularNPnegacomp}
\hspace{4ex}&\hspace{-4ex}\left(\Sigma \Box_g-2i\left(\tfrac{\cos\theta}{\sin^2 \theta}\partial_{\phi}-a\cos \theta \partial_t\right)-\cot^2\theta\right)\psi_{[-1]} \notag\\
={}&\Big(\tfrac{3}{2}-\tfrac{5a^2M}{2r(\R)}\Big)\psi_{[-1]}+
\left(\tfrac{4(r-M)r-5\Delta}{2r}Y
+r\partial_t\right)\psi_{[-1]}\notag\\
&+\tfrac{5}{r}\left(a^2\partial_t+a\partial_{\phi}\right)\psi_{[-1]}
-\tfrac{5}{2r}\phi_{-1}^1.
\end{align}
For small enough $|a|/M$, the coefficient of $\psi_{[-1]}$ term on the RHS is positive and its derivative with respect to $r$ is nonnegative, and the term $((4(r-M)r-5\Delta)/(2r)Y
+r\partial_t)\psi_{[-1]}$ is close to a positive multiple of $N_{\chi_0}\psi_{[-1]}$ when $r$ is sufficiently close to $r_+$ from the choice of $N_{\chi_0}$ in Proposition \ref{prop:RedShiftEstiInhomoSWRWE}.
Therefore, arguing the same as in Section \ref{sect:Redshift}, there exists a radius constant $\check{r}_4\leq r_0$ close enough to $r_+$ such that the following red-shift estimate near $\mathcal{H}^+$ holds for $\psi_{[-1]}$ for all $\tau\geq0$ for $|a|/M$ sufficiently small:
\begin{align}\label{eq:RedShiftEstiInhomoSWRWERegularNegaSpinNPcomp}
\hspace{4ex}&\hspace{-4ex}E_{\mathcal{H}^{+}(0,\tau)}(\psi_{[-1]})
+\int_{\Sigma_{\tau}\cap\{r\leq \check{r}_4\}}\left|\partial\psi_{[-1]}\right|^2 +\int_{\mathcal{D}(0,\tau)\cap\{r\leq \check{r}_4\}}\left|\partial\psi_{[-1]}\right|^2
\notag\\
\lesssim {}&
\int_{\Sigma_{0}\cap\{r\leq r_1\}}\left|\partial\psi_{[-1]}\right|^2
+\int_{\mathcal{D}(0,\tau)\cap\{\check{r}_2\leq r\leq r_1\}}\left|\partial\psi_{[-1]}\right|^2\notag\\
&+\int_{\mathcal{D}(0,\tau)\cap\{r_+\leq r\leq r_1\}}
\left|\phi_{-1}^1\right|^2.
\end{align}
Adding a small amount of estimate \eqref{eq:RedShiftEstiInhomoSWRWERegularNegaSpinNPcomp} to \eqref{eq:EnergyMoraNegaSpinLastSecond} and taking \eqref{def:regularNPComps} into account, this finishes the proof of \eqref{eq:MoraEstiFinal(2)KerrRegularpsiBothSpinComp} for $(\varphi^0_2,\varphi^1_2)$ in \eqref{def:varphi01negative}. The proof of \eqref{eq:MoraEstiFinal(2)KerrRegularpsiBothSpinComp} for $(\varphi^0_0,\varphi^1_0)$ in \eqref{def:varphi01positive} follows easily from Theorem \ref{thm:EneAndMorEstiExtremeCompsNoLossDecay} for $\phi^0_{+1}=\psi_{[+1]}/(\R)$ and the definition of $\widetilde{\Phi_0}$ in \eqref{def:regularNPComps}.

As to the high-order estimates, one just needs to consider the case $n=1$ by induction. Commute the Killing vector field $T$ with \eqref{eq:spinweightedwaveeqAssumption}, $\chi_0Y$ with \eqref{eq:spinweightedwaveeqAssumption} for spin $+1$ component and \eqref{eq:TMEregularNPnegacomp} for spin $-1$ component, then it follows easily from the red-shift commutation property  \cite[Prop.5.4.1]{dafermos2010decay} and elliptic estimates that the estimate \eqref{eq:MoraEstiFinal(2)KerrRegularpsiBothSpinCompHighOrder} for $n=1$ holds true by replacing $\varphi^k_0$ $(k=0,1)$ by  $\hat{\varphi}^k_0$ $(k=0,1)$ defined by
\begin{align}
&\hat{\varphi}^0_0=(\R)^{-\delta/2}\psi_{[+1]}, & \hat{\varphi}^1_0&=\sqrt{\R}Y((\R)^{-1/2}\psi_{[+1]}),
\end{align}
or $\varphi^k_2$ $(k=0,1)$ by  $\hat{\varphi}^k_2$ $(k=0,1)$ defined by
\begin{align}
&\hat{\varphi}^0_2=\psi_{[-1]}, & \hat{\varphi}^1_2&=(\R)^{1/2}V((\R)^{-1/2}\Delta\psi_{[-1]}).
\end{align}
Due to the fact that the following equivalence relations hold true for $j=0, 2$, any $\tau\geq 0$ and any nonnegative integer $n$:
\begin{subequations}\label{eq:equirelabetweenenergys}
\begin{align}
\sum_{|i|\leq n}\sum_{k=0,1}{E}_{\tau}(\partial^i\hat{\varphi}^k_j)
\sim &
\sum_{|i|\leq n}\sum_{k=0,1}{E}_{\tau}(\partial^i\varphi^k_j),\\
\sum_{|i|\leq n}\sum_{k=0,1}{E}_{\mathcal{H}^+(0,\tau)}(\partial^i\hat{\varphi}^k_j)
\sim &
\sum_{|i|\leq n}\sum_{k=0,1}{E}_{\mathcal{H}^+(0,\tau)}
(\partial^i\varphi^k_j),
\end{align}
and
\begin{align}
&\sum_{|i|\leq n}\int_{\mathcal{D}(0,\tau)} \left(\mathbb{M}_{\text{deg}}(\partial^i\hat{\varphi}^1_0)
+\widehat{\mathbb{M}}(\partial^i\hat{\varphi}^0_0)\right)\notag\\
&\sim {}
\sum_{|i|\leq n}\int_{\mathcal{D}(0,\tau)} \left(\mathbb{M}_{\text{deg}}(\partial^i\varphi^1_0)
+\widehat{\mathbb{M}}(\partial^i\varphi^0_0)\right),\\
&\sum_{|i|\leq n}\int_{\mathcal{D}(0,\tau)} \left(\mathbb{M}_{\text{deg}}(\partial^i\hat{\varphi}^1_2)
+{\mathbb{M}}(\partial^i\hat{\varphi}^0_2)\right)\notag\\
&\sim {}
\sum_{|i|\leq n}\int_{\mathcal{D}(0,\tau)} \left(\mathbb{M}_{\text{deg}}(\partial^i\varphi^1_2)
+{\mathbb{M}}(\partial^i\varphi^0_2)\right),
\end{align}
\end{subequations}
the high-order estimates \eqref{eq:MoraEstiFinal(2)KerrRegularpsiBothSpinCompHighOrder} follow straightforwardly.

\appendix

\section{Commutator relations}\label{appx:commutatorrelation}

\begin{prop}
\label{prop:commutatorwaveandYV}
For any scalar $\psi$ with spin weight $s\in \{+1,-1\}$, we have the following commutators
\begin{align}
[\Ls, \curlV]\psi={}&
-\curlV\left((\R)\partial_r\left(\tfrac{\Delta}{(\R)^2}\right)\curlV\psi\right)
-\tfrac{4ar}{\R}\partial_{\phi}\curlV \psi \notag\\
&
-\tfrac{2a^3}{\R}\partial_{\phi}\psi-a^2(\R)\partial_r\left(\tfrac{\Delta}{(\R)^2}\right)\psi,
\label{eq:wavecommutatorwithrVr}\\
[\Ls, \curlY]\psi={}&
\curlY\left((\R)\partial_r\left(\tfrac{\Delta}
{(\R)^2}\right)\curlY\psi\right)
+\tfrac{4ar}{\R}\partial_{\phi}\curlV \psi \notag\\
&
-\tfrac{2a^3}{\R}\partial_{\phi}\psi-a^2(\R)\partial_r\left(\tfrac{\Delta}{(\R)^2}\right)\psi.
\label{eq:wavecommutatorwithrYr}
\end{align}
\end{prop}

\begin{proof}
Expand $\mathbf{L}_s \psi$ into the form of
\begin{align}
\mathbf{L}_s\psi
={}&-\sqrt{\R}Y\left(\tfrac{\Delta}{\R} V(\sqrt{\R}\psi)\right)
+\tfrac{2ar}{\R}\partial_{\phi}\psi
+\tfrac{a^2\Delta}{(\R)^2}\psi\notag\\
&+\left(\tfrac{1}{\sin{\theta}} \partial_{\theta}(\sin \theta \partial_{\theta})+\tfrac{\partial_{\phi\phi}^2}{\sin^2\theta}
+\tfrac{2is\cos\theta}{\sin^2 \theta}\partial_{\phi}-\tfrac{s^2}{\sin^2 \theta}\right)\psi\notag\\
&
+2a\partial_{t\phi}^2\psi+a^2 \sin^2 \theta\partial_{tt}^2\psi
-2ias\cos\theta \partial_t\psi.
\label{eq:expandformofLswaveop}
\end{align}
We prove the commutator relation \eqref{eq:wavecommutatorwithrVr}, and commutator \eqref{eq:wavecommutatorwithrYr} is manifest from \eqref{eq:wavecommutatorwithrVr} by letting $s\to -s$, $t\to -t$ and $\phi \to -\phi$ (hence $\partial_t\to -\partial_t$, $\partial_{\phi} \to -\partial_{\phi}$ and $V \to -Y$). We calculate the commutators between each term and $\curlV$. The last two lines of \eqref{eq:expandformofLswaveop} commute with $\curlV$, and hence, their commutators vanish. We calculate for the last two terms on the first line the commutators
\begin{align}
\hspace{4ex}&\hspace{-4ex}
[\tfrac{2ar}{\R}\partial_{\phi},\curlV]\psi
={}-\tfrac{2a^3}{\R}\partial_{\phi}\psi,
\label{eq:secondcommutatorrela}\\
\hspace{4ex}&\hspace{-4ex}
[\tfrac{a^2\Delta}{(\R)^2},\curlV]\psi
={}-a^2(\R)\partial_r\left(\tfrac{\Delta}{(\R)^2}\right)\psi.
\label{eq:thirdcommutatorrela}
\end{align}
The remaining commutator is
\begin{align}\label{eq:firstcommutatorrela}
\hspace{4ex}&\hspace{-4ex}[-\sqrt{\R}Y\left(\tfrac{\Delta}{\R} V(\sqrt{\R})\right), \curlV]\psi\notag\\
={}&-\sqrt{\R}Y\left(\tfrac{\Delta}{\R}V\left((\R)V(\sqrt{\R}\psi)\right)\right)\notag\\
&+\sqrt{\R}V\left((\R)Y\left(\tfrac{\Delta}{(\R)^2}(\R)V(\sqrt{\R}\psi)\right)\right)\notag\\
={}&-\tfrac{(\R)^{3/2}}{\Delta}\left[\tfrac{\Delta}{\R}Y,\tfrac{\Delta}{\R}V\right]\left((\R) V(\sqrt{\R}\psi)\right)\notag\\
&-\sqrt{\R}V\left((\R)^2\partial_r\left(\tfrac{\Delta}
{(\R)^2}\right)V(\sqrt{\R}\psi)\right).
\end{align}
Since $\big[\tfrac{\Delta}{\R}Y,\tfrac{\Delta}{\R}V\big]=
\tfrac{4ar\Delta}{(\R)^3}\partial_{\phi}$, relation \eqref{eq:wavecommutatorwithrVr} follows from the above commutator relations \eqref{eq:secondcommutatorrela}--\eqref{eq:firstcommutatorrela}.
\end{proof}

\section{Energy and Morawetz estimates for inhomogeneous SWFIE on Schwarzschild}\label{appe:MoraFS1FIeqSchw}

\subsection{Energy estimate}

Multiplying \eqref{eq:SWRWReducedSchw} by $T\overline{\psi_{m\ell}}=\partial_{t}\overline{\psi_{m\ell}}$ and taking the real part, we arrive at an identity:
\begin{align*}
-\Re\left(F_{m\ell}\partial_{t}\overline{\psi_{m\ell}}\right)&=
-\partial_r\left(\Re\left(\Delta\partial_r\psi_{m\ell}
\partial_t\overline{\psi_{m\ell}}\right)\right)\notag\\
+\half\partial_t&\left( \tfrac{r^4}{\Delta}|\partial_t \varphi|^2+\Delta|\partial_r\psi_{m\ell}|^2
+ \ell(\ell+1)|\psi_{m\ell}|^2-\tfrac{2M}{r}|\psi_{m\ell}|^2\right).
\end{align*}
Since $\ell\geq |s|=1$ and $\ell(\ell+1)\geq 2$, it follows
 \begin{equation}
\ell(\ell+1)-\tfrac{2M}{r}\geq \tfrac{1}{2} \ell(\ell+1).
\end{equation}
Summing over $m$ and $\ell$, applying the identity \eqref{eq:IdenOfEigenvaluesAndAnguDeriSchw} and finally integrating with respect to the measure $dt^*dr$ over $\{(t^*,r)|0\leq t^*\leq \tau, 2M\leq r<\infty\}$, we have the following energy estimate:
\begin{align}\label{eq:enerestipartialtSchw}
E_{\tau}^T(\psi)+E_{\mathcal{H}^+(0,\tau)}^T(\psi)\leq C\bigg(E_{0}^T(\psi)
+\int_{\mathcal{D}(0,\tau)}\left|r^{-2}
\Re\left(F\partial_t\overline{\psi}\right)\right|\bigg).
\end{align}
Here, for any $\tau\geq0$, in global Kerr coordinates,
\begin{subequations}\label{def:energyforTS1Schw}
\begin{align}
E_{\tau}^T(\psi)={}&
\int_{\Sigma_{\tau}}\left(|\partial_{t^*}\psi|^2+|\nablaslash \psi|^2+{\Delta}r^{-2}|\partial_r\psi|^2\right),
\end{align}
and in ingoing Kerr coordinates,
\begin{align}
E_{\mathcal{H}^+(0,\tau)}^T(\psi)
={}&\int_{\mathcal{H}^+(0,\tau)}
\left(|\partial_{v}\psi|^2+{\Delta}r^{-2}|\nablaslash \psi|^2\right).
\end{align}
\end{subequations}

\subsection{Morawetz estimate}\label{proof of integrated decay}

In this section,  we use the multipliers as in \cite{larsblue15hidden,Jinhua17LinGraSchw} and prove a Morawetz estimate for \eqref{eq:spinweightedwaveeqAssumptionSchw}.
We multiply \eqref{eq:SWRWReducedSchw} by a general radial multiplier
\begin{align}\label{eq:XchoiceSchw}
X(\overline{\psi_{m\ell}})=\hat{f}\partial_r \overline{\psi_{m\ell}}+\hat{q}\overline{\psi_{m\ell}},
\end{align}
take the real part and obtain
\begin{align}\label{eq:MoraEstidegSchw}
\hspace{4ex}&\hspace{-4ex}-\Re\left(X(\psi_{m\ell})\overline{F_{m\ell}}\right)\notag\\
={}&\frac{1}{2}\partial_r\left(\hat{f}\left[
\left(\ell(\ell+1)-\tfrac{2M}{r}\right)|\psi_{m\ell}|^2
-\tfrac{r^4}{\Delta}|\partial_t\psi_{m\ell}|^2-\Delta |\partial_r\psi_{m\ell}|^2\right]\right)\notag\\
&
+\tfrac{1}{2}\partial_r\left(
\left(\partial_r(\Delta\hat{q})-2\hat{q}(r-M)
-r^{-1}B^r(r)\right)
|\psi_{m\ell}|^2-\Re\left(2\Delta \hat{q}\overline{\psi_{m\ell}}\partial_r\psi_{m\ell}
\right)\right)
\notag\\
&+\partial_t\left(\Re\left(\tfrac{r^4}{\Delta}X(\psi_{m\ell})
\partial_t\overline{\psi_{m\ell}}\right)\right)
+B(\psi_{m\ell}).
\end{align}
The bulk term
\begin{align}
B(\psi_{m\ell})={}&B^t(r)|\partial_t\psi_{m\ell}|^2+B^0(r)|\psi_{m\ell}|^2\notag\\
&+r^{-2}B^r(r)|\partial_r(r\psi_{m\ell})|^2
+B^{\ell}(r)\left(\ell(\ell+1)|\psi_{m\ell}|^2\right),
\end{align}
with
\begin{align}
B^t(r)={}&\tfrac{1}{2}\partial_r
(r^4\Delta^{-1}\hat{f})
-\hat{q}r^4\Delta^{-1}\notag\\
B^r(r)={}&\tfrac{1}{2}\partial_r(\Delta\hat{f})-2\hat{f}(r-M)+\Delta \hat{q} \notag\\
B^{\ell}(r)={}&-\tfrac{1}{2}\partial_r(\hat{f})+\hat{q}\notag\\
B^0(r)={}&\partial_r(\hat{q}(r-M))-\tfrac{1}{2}\partial_{rr}^2(\Delta \hat{q})-{2M}r^{-1}\hat{q}+ \partial_r(Mr^{-1}\hat{f})\notag\\
&+r^2(\partial_r (r^{-3}B^r(r))+ r^{-4}B^r(r)).
\end{align}
By taking
$(\hat{f},\hat{q})={}(\tfrac{2(r-2M)(r-3M)}{r^2}, \tfrac{(2r-3M)\Delta}{r^4})$,
the above coefficients are
\begin{align*}
B^t(r)&=0, &B^0(r)&=-9 Mr^{-2}+60M^2r^{-3}-90M^3r^{-4},\notag\\
B^r(r)&=6M\Delta^2r^{-4},
&B^{\ell}(r)&=2(r-3M)^2 r^{-3}.
\end{align*}
Under this choice of $\hat{f}$ and $\hat{q}$,
\begin{align*}
(B^0(r)+B^{\ell}(r)\ell(\ell+1))|\psi_{m\ell}|^2\geq {}&(B^0(r)+2B^{\ell}(r))|\psi_{m\ell}|^2\notag\\
={}&\frac{4r^3-33Mr^2+96M^2r-90M^3}{r^4}|\psi_{m\ell}|^2.
\end{align*}
There exists exactly one real root for the third-order polynomial $4r^3-33Mr^2+96M^2r-90M^3$, and this root is less than $2M$; therefore, there exists a universal constant $c>0$ such that for any $r\geq 2M$,
\begin{align}\label{eq:Moraestibulktermcontrol1}
B(\psi_{m\ell})\geq c\left(\frac{\Delta^2}{r^4}|\partial_r\psi_{m\ell}|^2
+\frac{1}{r}|\psi_{m\ell}|^2
+\frac{(r-3M)^2}{r^3}\ell(\ell+1)|\psi_{m\ell}|^2\right).
\end{align}

Furthermore, if we take $(\hat{f},\hat{q})=(0, -\frac{\Delta(r-3M)^2}{r^7})$ in \eqref{eq:MoraEstidegSchw}, this allows us to control the bulk integral of $|\partial_t \psi_{m\ell}|^2$ part by the bulk integral of the RHS of \eqref{eq:Moraestibulktermcontrol1}.
We  add this estimate to the obtained \eqref{eq:Moraestibulktermcontrol1} to produce positive definite bulk integrals, sum over $m$ and $\ell$, apply identity \eqref{eq:IdenOfEigenvaluesAndAnguDeriSchw}, integrate with respect to the measure $dt^*dr$ over $\{(t^*,r)|0\leq t^*\leq \tau, 2M\leq r<\infty\}$, utilize estimate \eqref{eq:enerestipartialtSchw} to bound the boundary terms at horizon, eventually arriving at
the following Morawetz estimate in global Kerr coordinates
\begin{align}\label{Morawetz-0}
\hspace{4ex}&\hspace{-4ex}\int_{\mathcal{D}(0,\tau)}\left(\tfrac{\Delta^2}{r^6} |\partial_r\psi|^2+\tfrac{1}{r^4} |\psi|^2+\tfrac{(r-3M)^2}{r^2}\left(\tfrac{1}{r^3}|\partial_{t^*}\psi|^2 + \tfrac{1}{r}|\nablaslash\psi|^2\right)\right)\notag\\
 \lesssim {}&
E_{\tau}^T(\psi)+E_{0}^T(\psi)
+\int_{\mathcal{D}(0,\tau)}r^{-2}\left(\left|
\Re\left(X(\psi)\overline{F}\right)\right|
+\left|\Re\left(h\psi\overline{F}\right)\right|\right).
\end{align}

\subsection*{Acknowledgment}
The author is grateful to Steffen Aksteiner, Lars Andersson,  Pieter Blue,  Claudio Paganini and Jinhua Wang for many helpful discussions and comments. Special thanks to Claudio Paganini for double checking some equations by Mathematica. The author is also grateful to the anonymous referee for the helpful suggestions.

\newcommand{\arxivref}[1]{\href{http://www.arxiv.org/abs/#1}{{arXiv.org:#1}}}
\newcommand{\mnras}{Monthly Notices of the Royal Astronomical Society}
\newcommand{\prd}{Phys. Rev. D}
\newcommand{\apj}{Astrophysical J.}

\providecommand{\MR}{\relax\ifhmode\unskip\space\fi MR }
\providecommand{\MRhref}[2]{%
  \href{http://www.ams.org/mathscinet-getitem?mr=#1}{#2}
}
\providecommand{\href}[2]{#2}

\end{document}